\newcommand{\specrg}[1]{{\color{blue}{#1}}}
\newcommand{\speccomment}[1]{{\color{gray}{#1}}}
\definecolor{proofcolor}{rgb}{0,0.45,0}
\newcommand{\sectprefix}{{Section}}
\newcommand{\figprefix}{{Fig.}}
\newcommand{\tableprefix}{{Table}}
\newcommand{\equationprefix}{{Equation}}
\newcommand{\lemmaprefix}{Lemma}
\newcommand{\theoremprefix}{Theorem}
\newcommand{\defprefix}{Definition}
\newcommand{\PreserveBackslash}[1]{\let\temp=\\#1\let\\=\temp}
\newcolumntype{C}[1]{>{\PreserveBackslash\centering}p{#1}}
\newcolumntype{R}[1]{>{\PreserveBackslash\raggedleft}p{#1}}
\newcolumntype{L}[1]{>{\PreserveBackslash\raggedright}p{#1}}
\let\endisabellecode=\endisabellecode
\newcommand{\llbrace}{\lbrace\mkern-4.5mu\mid}
\newcommand{\rrbrace}{\mid\mkern-4.5mu\rbrace}
\newcommand{\slang}{PiCore} 
\newcommand{\cmdfont}[1]{\textbf{#1}} 
\newcommand{\defi}{\triangleq}
\newcommand{\cmdwhile}[2]{\cmdfont{While} \ {#1} \ {#2}}
\newcommand{\cmdawait}[2]{\cmdfont{Await} \ {#1} \ {#2}}
\newcommand{\event}[1]{\cmdfont{Event} \ {#1}}
\newcommand{\anonevt}[1]{\lfloor{#1}\rfloor}
\newcommand{\evtsystwo}[2]{\{{#1}, \ ... ,\ {#2}\}}
\newcommand{\evtsysdef}{\evtsystwo{\symbEvt_0}{\symbEvt_n}}
\newcommand{\evtseq}[2]{{#1}\triangleright{#2}}
\newcommand{\parsysc}{\symbCore \rightarrow \symbevtsys}
\newcommand{\stmtirq}[2]{{#1} $\blacktriangleright$ {#2}}
\newcommand{\stmtatom}[1]{\textbf{ATOM}\ {#1}\ \textbf{END}}
\newcommand{\stmtawait}[2]{\textbf{AWAIT}\ {#1}\ \textbf{THEN}\ {#2}\ \textbf{END}}
\newcommand{\stmtevent}[5]{\textbf{EVENT}\ {#1}\ [ {#2} ] @ {#3}\ \textbf{WHEN}\ {#4}\ \textbf{THEN}\ {#5}\ \textbf{END}}
\newcommand{\eventsystem}[1]{\textbf{ESYS}\ {#1} \equiv \evtsystwo{\symbEvt_0}{\symbEvt_n}}
\newcommand{\symbprog}{P}
\newcommand{\symbbexp}{b}
\newcommand{\symbEvt}{\mathcal{E}}
\newcommand{\symbevt}{ev}
\newcommand{\symbevtbd}{\alpha}
\newcommand{\symbevtsys}{\mathcal{S}}
\newcommand{\symbpes}{\mathcal{PS}}
\newcommand{\symbState}{S}
\newcommand{\symbstate}{s}
\newcommand{\symbevtctx}{x}
\newcommand{\symbConf}{\Delta}
\newcommand{\symbconf}{\mathcal{C}}
\newcommand{\symbpcomp}{c}
\newcommand{\symbact}{t}
\newcommand{\symbCore}{\mathcal{K}}
\newcommand{\symbcore}{\kappa}
\newcommand{\symbactk}{\delta}
\newcommand{\actk}[2]{{#1}@{#2}}
\newcommand{\symbDomain}{\mathcal{D}}
\newcommand{\symbdomain}{d}
\newcommand{\symbSM}{\mathcal{M}}
\newcommand{\symbAction}{A}
\newcommand{\symbaction}{a}
\newcommand{\symbactions}{as}
\newcommand{\symbspec}{\sharp}
\newcommand{\transenv}{\Sigma \vdash}
\newcommand{\progenv}{\Sigma}
\newcommand{\trant}[2]{\stackrel{{#1}}{\longrightarrow}_{#2}}
\newcommand{\trans}[3]{\stackrel{\actk{#1}{#2}}{\longrightarrow}_{#3}}
\newcommand{\tranb}[1]{\stackrel{{#1}}{\longrightarrow}_{\square}}
\newcommand{\tranes}[2]{\trans{#1}{#2}{es}}
\newcommand{\tranenv}[1]{\stackrel{env}{\longrightarrow}_{#1}}
\newcommand{\tranenvpes}{\tranenv{pes}}
\newcommand{\myinfer}[2]{
\begin{tabular}{l}
  \textsc{[#1]} \\
  {#2}
\end{tabular}}
\newcommand{\dsim}[1]{\stackrel{{#1}}{\sim}}
\newcommand{\interf}{\leadsto}
\newcommand{\reachablef}{\mathcal{R}}
\newcommand{\compfun}{\Psi}
\newcommand{\symbcomp}{\varpi}
\newcommand{{\compstps}}{\Psi_\symbpes}
\newcommand{{\compstes}}{\Psi_\symbevtsys}
\newcommand{{\compste}}{\Psi_\symbEvt}
\newcommand{{\compstp}}{\Psi_\symbprog}
\newcommand{\rgcond}[4]{\langle #1, #2, #3, #4 \rangle}
\newcommand{\rgconddefault}{\rgcond{pre}{R}{G}{pst}}
\newcommand{\RGSAT}[2]{\progenv \models {#1} \ \mathbf{sat} \ {#2}}
\newcommand{\rgsat}[2]{\progenv \vdash {#1} \ \mathbf{sat} \ {#2}}
\newcommand{\stset}[1]{\llbrace {#1} \rrbrace}
\newcommand{\isactrlenum}{$\blacktriangleright$}
\newcommand{\superimpose}[2]{%
  {\ooalign{$#1\@firstoftwo#2$\cr\hfil$#1\@secondoftwo#2$\hfil\cr}}}
\newcommand{\ninterf}{\mathrel{\mathpalette\superimpose{{\slash}{\leadsto}}}}
\def\BibTeX{{\rm B\kern-.05em{\sc i\kern-.025em b}\kern-.08emT\kern-.1667em\lower.7ex\hbox{E}\kern-.125emX}}
\begin{document}

%

\title[Rely-guarantee Reasoning about Concurrent Memory Management]
{Rely-guarantee Reasoning about Concurrent Memory Management: Correctness, Safety and Security}

\titlenote{This article is an extended version of \cite{zhao19cav}. 
}. 

%
\author{Yongwang Zhao}
\affiliation{%
  \institution{Zhejiang University}
  \department{School of Cyber Science and Technology, College of Computer Science and Technology}
  \streetaddress{38 Zheda Road}
  \city{Hangzhou}
  \state{Zhejiang}
  \country{China}
  \postcode{310007}
}

\author{David San\'{a}n}
\affiliation{%
  \institution{Nanyang Technological University}
  \department{School of Computer Science and Engineering}
  \streetaddress{50 Nanyang Avenue}
  \city{Singapore}
  \country{Singapore}
  \postcode{639798}
}
%
\renewcommand{\shortauthors}{Yongwang Zhao and David San\'{a}n}

%
\begin{abstract}
Formal verification of concurrent operating systems (OSs) is challenging, in particular the verification of the dynamic memory management due to its complex data structures and allocation algorithm. An incorrect specification and implementation of the memory management may lead to system crashes or exploitable attacks. This article presents the first formal specification and mechanized proof of a concurrent memory management for a real-world OS concerning a comprehensive set of properties, including functional correctness, safety and security. 
To achieve the highest assurance evaluation level, we develop a fine-grained formal specification of the Zephyr RTOS buddy memory management, which closely follows the C code easing validation of the specification and the source code. 
The rely-guarantee-based compositional verification technique has been enforced over the formal model. To support formal verification of the security property, we extend our rely-guarantee framework {\slang} by a compositional reasoning approach for integrity. 
Whilst the security verification of the design shows that it preserves the integrity property, the verification of the functional properties shows several problems. These verification issues are translated into finding three bugs in the C implementation of Zephyr, after inspecting the source code corresponding to the design lines breaking the properties.

\end{abstract}

%
%

\begin{CCSXML}
<ccs2012>
<concept_id>10011007.10011074.10011099</concept_id>
<concept_desc>Software and its engineering~Software verification and validation</concept_desc>
<concept_significance>500</concept_significance>
</concept>
<concept>
<concept_id>10011007.10010940.10010941.10010949.10010950</concept_id>
<concept_desc>Software and its engineering~Memory management</concept_desc>
<concept_significance>500</concept_significance>
</concept>
<concept>
<concept_id>10002978.10003006.10003007</concept_id>
<concept_desc>Security and privacy~Operating systems security</concept_desc>
<concept_significance>500</concept_significance>
</concept>
<concept>
<concept_id>10003752.10010124.10010138.10010140</concept_id>
<concept_desc>Theory of computation~Program specifications</concept_desc>
<concept_significance>500</concept_significance>
</concept>
<concept>
<concept_id>10003752.10010124.10010138.10010142</concept_id>
<concept_desc>Theory of computation~Program verification</concept_desc>
<concept_significance>500</concept_significance>
</concept>
</ccs2012>
\end{CCSXML}

\ccsdesc[500]{Software and its engineering~Software verification}
\ccsdesc[500]{Theory of computation~Program verification}
\ccsdesc[500]{Software and its engineering~Memory management}
\ccsdesc[500]{Security and privacy~Operating systems security}

%
\keywords{Rely-guarantee, Concurrent OS Kernel, Formal Verification, Memory Management, Isabelle/HOL}

%

%
\maketitle

\section{Introduction} 
\label{sect:intro}


\subsection{Context and Motivation}
Operating systems, and in particular Real Time Operating Systems (RTOS), are a fundamental component of critical systems. Correctness, safety and security of systems highly depend on the system's underlying OS. As a key component of OSs, the memory management provides ways to dynamically allocate portions of memory to programs at their request, and to free them for reuse when no longer needed. The buddy memory allocation technique \cite{Knowlton65} is a memory allocation algorithm that splits memory into halves or quarters to try to satisfy a memory request in a best-fit manner. Buddy memory allocation has been widely applied in OS kernels (e.g. Linux kernel and Zephyr RTOS \footnote{https://www.zephyrproject.org/}). 
Since program variables and data are stored in the allocated memory, correct, safe and secure memory management is extremely critical for the whole system. An incorrect specification and implementation of the memory management may lead to system crashes and exploitable attacks.

Formal verification has been intensively conducted on OS kernels in recent years \cite{Klein09b,zhao17}. 
Most of these efforts focus on sequential OS kernels and assume that there is no in-kernel concurrency (e.g. seL4 \cite{Klein14}). 
Concurrent kernels allow interleaved execution of kernel/user modules due to user thread preemption, I/O interrupts and execution in multicore architectures. 
Some related work studying the building and verification of concurrent kernels has been covered in \cite{Gu16,Chen16,Xu16}. However, formal verification of concurrent OS kernels still present several open challenges. For instance, formal verification in \cite{Chen16} concerns kernels with device drivers using a verification framework that does not support preemptive and multicore concurrency. As a consequence it is only possible to verify interrupt handlers for device drivers not sharing data with and non-handler kernel code. 

Formal verification of OS memory management has been studied in CertiKOS \cite{Vaynberg12,Gu16}, seL4 \cite{Klein04,Klein09a}, Verisoft \cite{Alkassar08}, and in the hypervisors from \cite{Blan15,Bolig16}. Algorithms and implementations of dynamic memory allocation have been formally specified and verified in an extensive number of works \cite{Yu03,Fang17a,Marti06,Su16,Fang17b,Fang18}. Concurrency is only studied in \cite{Blan15,Gu16} considering much simpler data structures and algorithms than our work. Moreover, only \cite{Fang18} studies the buddy memory allocation considering very abstract data structures. 
Finally, formal verification of security properties, e.g. integrity and confidentiality, is still challenging for concurrent OS kernels \cite{Murray12,Costanzo16} and has not been studied for concurrent memory management before. Confidentiality refers to protecting information from being accessed by unauthorized parties. In general, it is not preserved by memory allocation at OS level. For instance, in Zephyr the memory release by a thread does not clear the allocated memory and the information may be accessed by other threads. Thus, we consider the integrity of concurrent memory management in this article, which means the allocated memory of a thread cannot be altered by other threads. 



This article concentrates on the formal specification and formal verification of functional correctness, safety and security properties on the concurrent buddy memory management of Zephyr. \emph{Zephyr} is an open-source state of the art RTOS managed by the Linux Foundation for connected, resource-constrained devices, and built with security and safety design in mind. It has been deployed in IoT gateways, safety shoes, and smart watches, etc. It uses a buddy memory allocation algorithm optimized for RTOS and is completely concurrent in the kernel that allows multiple threads to concurrently manipulate shared memory pools with fine-grained locking. We apply the {\slang} rely-guarantee framework \cite{zhao19fm} to the verification of Zephyr. The compositionality of rely-guarantee allows makes possible to handle the complexity of the memory allocation algorithm used in Zephyr, and of its data structures.

\subsection{Challenges}

Formal verification of concurrent memory management, in particular the buddy memory allocation in Zephyr, is a challenging work.
\begin{enumerate}
\item \emph{Fine-grained concurrency of the execution of the memory services and of the shared memory structure}: On one hand, thread preemption and interruption make the kernel execution of memory services to be concurrent. 
Memory allocation usually uses fine-grained locking for threads. When manipulating a shared memory pool, memory services in a thread lock the pool by disabling interruptions inside critical sections as small as possible.  
On the  other hand, memory pools are shared by threads in a fine-grained parts of its structure. When a thread is splitting a memory block into smaller ones to get a suitable block size, another thread may be coalescing partner blocks in the same pool into a larger one. That is, a thread is manipulating a block sub-tree of a memory pool, meanwhile another thread is manipulating another block sub-tree of the same pool.

\item \emph{Complex data structure and algorithm of buddy memory management}: to achieve high performance, data structures and algorithms in Zephyr are laid out in a complex manner. 
First, the buddy memory allocation can split large blocks into smaller ones, allowing blocks of different sizes to be allocated and released efficiently while limiting memory fragmentation concerns. Seeking performance, Zephyr uses a multi-level structure where each level has a bitmap and a linked list of free memory blocks. The levels of bitmaps actually form a forest of quad trees of bits. Memory addresses are used as a reference to memory blocks, so the algorithm has to deal with address alignment and computation concerning the block size at each level, increasing the complexity of its verification. 
Second, the allocation algorithm supports various temporal modes. If a block of the desired size is unavailable, a thread can optionally wait for one to become available. There are three different modes for waiting threads: waiting forever, waiting for a time out, and no wait. Before a thread changes its state to waiting, it invokes rescheduling in the allocation service and thus is preempted by ready threads. The system must guarantee that each service eventually returns from each of the waiting modes. 

\item \emph{Verification of safety and security of concurrent memory management is difficult}: first, a complex algorithm and data structures implies as well complex invariants over them, that the formal model must preserve as functional safety properties. These invariants have to guarantee the multi-level well-shaped bitmaps and their consistency to multi-level free lists. To prevent memory leaks and block overlapping, a precise reasoning shall keep track of both numerical and shape properties. Second, as a security property we verify the integrity of allocated memory blocks among threads, i.e. one thread can not modify blocks allocated by other threads. In this context, formal verification of integrity on OS kernels needs to consider the system events (e.g. kernel services, interrupt handlers), and hence integrity is seen as a property of state-event based information-flow security (IFS) \cite{Murray12,Costanzo16}. Works on state-event IFS \cite{rushby92,Oheimb04,Murray12} tackle sequential systems and can not be applied either on the verification concurrent OS kernels. Although there are some work related to the verification of concurrent IFS~\cite{Mantel11,Murray16,Murray18}, these focus on language-based IFS which can not be used on the verification of integrity for a concurrent operating system. Therefore the verification of state-event IFS is still an open challenge.

\end{enumerate}

\subsection{Approach and Contributions}
The safety and security properties in this article concern the small steps inside memory services that must be preserved by any internal step of the services. For instance, in the case of Zephyr RTOS, a safety property is that memory blocks do not overlap each other even during internal steps of the allocation and release services. It is therefore necessary to find a verification approach that allows to reason at such fine-grained detail.

In this article, we apply the rely-guarantee reasoning technique to verify the Zephyr concurrent memory management. 
This work uses {\slang} \cite{zhao19fm}, a two-level event-based rely-guarantee framework in Isabelle/HOL for the specification and verification of concurrent reactive systems (CRS). 
{\slang} has support for concurrent OSs features like modelling shared-variable concurrency of multiple threads, interruptable execution of handlers, self-suspending threads, and rescheduling. 

{\slang} separates the specification and verification at two levels. 
The top level introduces the notion of ``events'' into the rely-guarantee method for system reactions. This level defines the events composing a system, and how and when they are triggered. It supports reactive semantics of interrupt handlers (e.g. kernel services, scheduler) in OSs, which makes formal specification of OSs much simpler than those represented by pure programs (e.g. in \cite{Andronick15}). 
The second level focuses on the specification and reasoning of the behaviour of the events composing the first level. {\slang} parametrizes the second level using a rely-guarantee interface, allowing to easily reuse existing rely-guarantee frameworks of imperative programs. 
{\slang} concurrent constructs allow the specification of Zephyr multi-thread interleaving, fine-grained locking, and thread preemption. Compositionality of rely-guarantee makes feasible to prove the functional correctness of Zephyr and invariants over its data structures.

In this article, we first formalize the data structures of Zephyr memory pools in Isabelle/HOL, and we analyze its structural properties. The properties clarify the constraints and consistency of quad trees, free block lists, memory pool configuration, and waiting threads. These properties conforms the safety of the memory management. They are defined as invariants for which its preservation under the execution of services is formally verified. The set of properties is comprehensive for buddy memory allocation since we can derive the memory separation property at the memory-block level as discussed below from them.

Second, we consider memory separation as the security of Zephyr at two levels: memory-block level and thread level. At the memory-block level, memory separation ensures that the memory blocks of a memory pool cover the whole memory address of the pool, but do not overlap each other. This property is necessary to prevent memory leaks and can be derived from well-shaped properties of quad trees defined in the invariants. For memory separation at the thread level, we consider the aforementioned memory integrity among threads. To tackle this, we extend {\slang} with a compositional reasoning approach for integrity on event-based concurrent systems. This approach redefines the concept of integrity in terms of fine-grain semantics and it uses rely-guarantee, as in the core of \slang, for reasoning on the integrity of the events by means of observable equivalence among threads.

Third, together with the formal verification of Zephyr, we aim at the highest evaluation assurance level (EAL 7) of Common Criteria (CC) \cite{cc}, which was declared in the last year as the candidate standard for security certification by the Zephyr project. Therefore, we develop a fine-grained low level formal specification of a buddy memory management. The specification closely follows the Zephyr C code, and thus is able to do the \emph{code-to-spec} review required by the EAL 7 evaluation, covering all the data structures and imperative statements present in the implementation. 
The functional correctness of the memory management is specified by pre and post conditions of each service and compositionally proved by the rely-guarantee proof system of {\slang}. 

Finally, we enforce the formal verification of functional correctness, invariant preservation, and memory separation by using the extended rely-guarantee proof system of {\slang}. It supports total correctness for loops where fairness does not need to be considered. 
The formal verification shows the preservation of memory integrity, however, revealed three functional and safety bugs in the C code: an incorrect block split, an incorrect return from the kernel services, and non-termination of a loop. Two of them are critical and have been repaired in the latest release of Zephyr. The third bug causes nontermination of the allocation service when trying to allocate a block of a larger size than the maximum allowed. 


To the best of our knowledge, this article presents the first formal specification and mechanized proof of correctness, safety and security for concurrent memory allocation of a realistic operating system. 
The formal specification and proofs in this article are completely developed in Isabelle/HOL. All the Isabelle/HOL sources are available at {\color{ACMBlue}\url{https://lvpgroup.github.io/tosem2021/}}. We summarize the main contributions of this article as follows.

\begin{enumerate}
\item A comprehensive set of critical properties for concurrent buddy memory management, including functional correctness, safety by invariants, and security by memory separation. In particular, we clarify the constraints and consistency of the complicated structure of buddy memory pools. 

\item The first compositional reasoning approach for state-event based integrity, its application on a concurrent OS kernel, and its formal proof for the Zephyr concurrent memory management. 

\item The first verified formal specification for concurrent buddy memory allocations which corresponds to the low-level design specification in CC EAL 7 evaluation. 

\item Critical bugs founds on the functional correctness and safety in Zephyr C code, which have been repaired in the latest release of Zephyr. 

\end{enumerate}

\subsection{Roadmap}

\begin{figure}
\begin{center}
\includegraphics[width=3.6in]{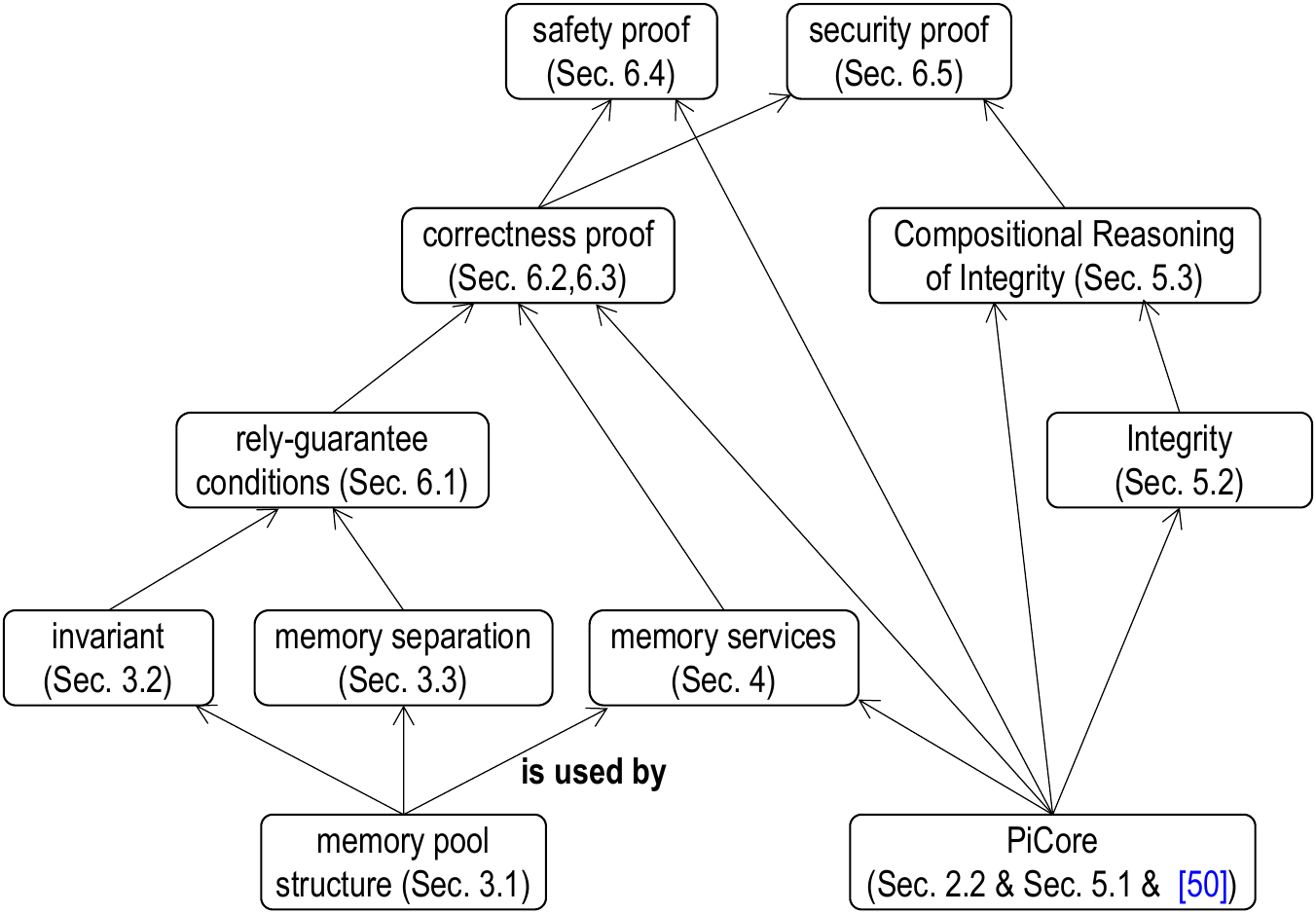}
\end{center}
\caption{Outline of Main Results}
\label{fig:roadmap}
\end{figure}
{\figprefix} \ref{fig:roadmap} summarizes the main results presented in this article. 
First, {\sectprefix} \ref{sect:preliminary} presents the preliminaries of this article including the buddy memory management in Zephyr ({\sectprefix} \ref{subsect:zephyr_mem}) and our previous {\slang} framework ({\sectprefix} \ref{subsect:picore}). 
In {\sectprefix} \ref{sect:mem_struct}, we formalize the memory data structures, and the safety and security properties of buddy memory pools. We define the memory structures in {\sectprefix} \ref{subsect:struct}, the invariant properties in {\sectprefix} \ref{subsect:inv}, and the memory separation properties in {\sectprefix} \ref{subsect:memsep}. The formal specification of memory allocation and release services of Zephyr is presented in {\sectprefix} \ref{sect:mem_spec}. For the compositional verification of security for Zephyr, we propose a security property \emph{integrity} for {\slang} specifications in {\sectprefix} \ref{subsect:ifs} and we discuss the compositional verification approach in {\sectprefix} \ref{subsect:compreason}. In {\sectprefix} \ref{sect:proof}, we show the rely-guarantee proofs of Zephyr. We first give the correctness specification by rely-guarantee conditions in {\sectprefix} \ref{subsect:corspec}, which is embedded with the invariant and memory separation properties. We then present the proof of partial correctness ({\sectprefix} \ref{subsect:corproof}), termination ({\sectprefix} \ref{subsect:termi}), safety ({\sectprefix} \ref{subsect:safetyproof}) and security ({\sectprefix} \ref{subsect:securityproof}) of the memory services. 

This article is an extension of our previous paper \cite{zhao19cav}. Compared to \cite{zhao19cav}, (1) we add security properties about memory separation at memory-block level and thread level in this article; (2) we present the invariants in a more comprehensive and formal way; (3) for memory separation in concurrent settings, we extend our {\slang} framework \cite{zhao19fm} by a compositional verification approach of integrity, and then we apply the new {\slang} framework to the rely-guarantee reasoning of Zephyr RTOS; (4) we add security proof and present more comprehensive proofs of correctness and safety in this article; (5) finally, we add comparison to related work and present the limitation and discussion of our work. 

\section{Preliminaries}
\label{sect:preliminary}

\subsection{Concurrent Memory Management in Zephyr RTOS}
\label{subsect:zephyr_mem}

In Zephyr, a memory pool is a kernel object that allows memory blocks to be dynamically allocated, from a designated memory region, and released back into the pool. Its C code implementation is shown in the left part of {\figprefix} \ref{fig:mem_datastruct}. The right part of this figure shows the formalization of the memory pool, which will be discussed in next section. 
A memory pool's buffer ($*buf$) is an $n\_max$-size array of blocks of $max\_sz$ bytes at level $0$, with no wasted space between them. The size of the buffer is thus $n\_max \times max\_sz$ bytes long. Zephyr tries to accomplish a memory request by splitting available blocks into smaller ones fitting as best as possible the requested size. 
Each ``level 0'' block is a quad-block that can be split into four smaller ``level 1'' blocks of equal size. Likewise, each level 1 block is itself a quad-block that can be split again. At each level, the four smaller blocks become \emph{buddies} or \emph{partners} to each other. The block size at level $l$ is thus $max\_sz / 4 ^ l$.

\lstdefinestyle{customc}{
  belowcaptionskip=1\baselineskip,
  breaklines=true,
  frame=none, 
  xleftmargin=0pt, 
  language=C,
  numbers=none,
  stepnumber=1,
  showstringspaces=false,
  basicstyle=\scriptsize\ttfamily, 
  keywordstyle=\bfseries\color{green!40!black},
  commentstyle=\itshape\color{purple!40!black},
  identifierstyle=\color{blue},
  stringstyle=\color{orange},
}
\lstset{escapechar=@,style=customc}

\begin{figure}[t]
\centering
\begin{minipage}[t]{1.0\textwidth}
\begin{minipage}[t]{0.40\textwidth}
\vspace{-4mm}
\begin{lstlisting}
struct k_mem_block_id {
  u32_t pool : 8;
  u32_t level : 4;
  u32_t block : 20;
};
struct k_mem_block {
  void *data;
  struct k_mem_block_id id;
};
struct k_mem_pool_lvl {
  union {
    u32_t *bits_p;
    u32_t bits;
  };
  sys_dlist_t free_list;
};
struct k_mem_pool {
  void *buf;
  size_t max_sz;
  u16_t n_max;
  u8_t n_levels;
  u8_t max_inline_level;
  struct k_mem_pool_lvl *levels;
  _wait_q_t wait_q;
};
\end{lstlisting}
\end{minipage}
~
\begin{minipage}[t]{0.60\textwidth}
\begin{isabellec} \fontsize{7pt}{0cm} 
\isacommand{typedef}\isamarkupfalse%
\ mempool{\isacharunderscore}ref \ {\isacharequal}\ ref 

\isacommand{type{\isacharunderscore}synonym}\isamarkupfalse%
\ mem{\isacharunderscore}ref\ {\isacharequal}\ nat 

\isacommand{record}\isamarkupfalse%
\ Mem{\isacharunderscore}block\ {\isacharequal}\ pool\ {\isacharcolon}{\isacharcolon}\ mempool{\isacharunderscore}ref\isanewline
\ \ \ \ \ \ \ \ \ \ \ \ \ \ \ \ \ \ \ level\ {\isacharcolon}{\isacharcolon}\ nat\isanewline
\ \ \ \ \ \ \ \ \ \ \ \ \ \ \ \ \ \ \ block\ {\isacharcolon}{\isacharcolon}\ nat\ \isanewline
\ \ \ \ \ \ \ \ \ \ \ \ \ \ \ \ \ \ \ data\ {\isacharcolon}{\isacharcolon}\ mem{\isacharunderscore}ref 
\isanewline

\isacommand{datatype}\isamarkupfalse%
\ BlockState\ {\isacharequal}\ ALLOCATED\ {\isacharbar}\ FREE\ {\isacharbar}\ DIVIDED\ {\isacharbar}\isanewline
\quad \quad \quad \quad NOEXIST\ {\isacharbar}\ FREEING\ {\isacharbar}\ ALLOCATING 

\isacommand{record}\isamarkupfalse%
\ Mem{\isacharunderscore}pool{\isacharunderscore}lvl\ {\isacharequal}\ \ \isanewline
\ \ \ \ \ \ \ \ \ \ \ \ \ \ \ \ \ \ \ \ \ \ bits\ {\isacharcolon}{\isacharcolon}\ {\isachardoublequoteopen}BlockState\ list{\isachardoublequoteclose}\isanewline
\ \ \ \ \ \ \ \ \ \ \ \ \ \ \ \ \ \ \ \ \ \ free{\isacharunderscore}list\ {\isacharcolon}{\isacharcolon}\ {\isachardoublequoteopen}mem{\isacharunderscore}ref\ list{\isachardoublequoteclose}
\isanewline 

\isacommand{record}\isamarkupfalse%
\ Mem{\isacharunderscore}pool\ {\isacharequal}\ buf\ {\isacharcolon}{\isacharcolon}\ mem{\isacharunderscore}ref\ \isanewline
\ \ \ \ \ \ \ \ \ \ \ \ \ \ \ \ \ \ max{\isacharunderscore}sz\ {\isacharcolon}{\isacharcolon}\ nat\ \isanewline
\ \ \ \ \ \ \ \ \ \ \ \ \ \ \ \ \ \ n{\isacharunderscore}max\ {\isacharcolon}{\isacharcolon}\ nat\ \isanewline
\ \ \ \ \ \ \ \ \ \ \ \ \ \ \ \ \ \ n{\isacharunderscore}levels\ {\isacharcolon}{\isacharcolon}\ nat\ \isanewline
\ \ \ \ \ \ \ \ \ \ \ \ \ \ \ \ \ \ levels\ {\isacharcolon}{\isacharcolon}\ {\isachardoublequoteopen}Mem{\isacharunderscore}pool{\isacharunderscore}lvl\ list{\isachardoublequoteclose}\ \isanewline
\ \ \ \ \ \ \ \ \ \ \ \ \ \ \ \ \ \ wait{\isacharunderscore}q\ {\isacharcolon}{\isacharcolon}\ {\isachardoublequoteopen}Thread\ list{\isachardoublequoteclose} 

\end{isabellec}
\end{minipage}
\end{minipage}
\caption{Data Structure of Memory Pool in Zephyr v1.8.0 and Its Formalization}
\label{fig:mem_datastruct}
\end{figure}

The pool is initially configured with the parameters  $n\_max$ and $max\_sz$, together with a third parameter $min\_sz$. $min\_sz$ defines the minimum size for an allocated block and must be a multiple of four, i.e., there exists an $X > 0$ such that $\min\_sz = 4 \times X$. Memory pool blocks are recursively split into quarters until blocks of the minimum size are obtained, at which point no further split can occur. 
The depth at which $min\_sz$ blocks are allocated is $n\_levels$ and satisfies that $n\_max = min\_sz \times 4 ^ {n\_levels}$.

Every memory block is composed of a $level$; a $block$ index within its level, ranging from $0$ to $(n\_max \times 4 ^ {level}) - 1$; and the $data$ as a pointer to the block start address, which is equal to $buf + (max\_sz / 4 ^ {level}) \times block$. We use the tuple $(level, block)$ to uniquely represent a block within a pool $p$.

A memory pool keeps track of how its buffer space has been split using a linked list \emph{free\_list} with the start address of the free blocks in each level. To improve the performance of coalescing partner blocks, memory pools maintain a bitmap at each level to indicate the allocation status of each block in the level.  This structure is represented by a C union of an integer \emph{bits} and an array \emph{bits\_p}. The implementation can allocate bitmaps at levels smaller than $max\_inlinle\_levels$ using only an integer \emph{bits}. However, the number of blocks in levels higher than $max\_inlinle\_levels$ make necessary to allocate the bitmap information using the array \emph{bits\_map}. 
In such a design, the levels of bitmaps actually form a forest of complete quad trees. 
The bit $i$ in the bitmap of level $j$ is set to $1$ for the block $(i,j)$ iff it is a free block, i.e., it is in the free list at level $i$. Otherwise, the bitmap for such block is set to $0$.

\lstdefinestyle{customc}{
  belowcaptionskip=1\baselineskip,
  breaklines=true,
  frame=single, 
  xleftmargin=0pt, 
  language=C,
  numbers=left,
  stepnumber=1,
  showstringspaces=false,
  basicstyle=\scriptsize\ttfamily, 
  keywordstyle=\bfseries\color{green!40!black},
  commentstyle=\itshape\color{purple!40!black},
  identifierstyle=\color{blue},
  stringstyle=\color{orange},
}
\lstset{escapechar=@,style=customc}

\begin{figure}
\begin{lstlisting}
static int pool_alloc(struct k_mem_pool *p,struct k_mem_block *block,size_t size)
{
  size_t lsizes[p->n_levels];
  int i, alloc_l = -1, free_l = -1, from_l;
  void *blk = NULL;
  lsizes[0] = _ALIGN4(p->max_sz);
  for (i = 0; i < p->n_levels; i++) {
    if (i > 0) { lsizes[i] = _ALIGN4(lsizes[i-1] / 4); }
    if (lsizes[i] < size) { break; }
    alloc_l = i;
    if (!level_empty(p, i)) { free_l = i; }
  }
  if (alloc_l < 0 || free_l < 0) {
    block->data = NULL;
    return -ENOMEM;
  }
  blk = alloc_block(p, free_l, lsizes[free_l]);
  if (!blk) {  return -EAGAIN; }
  /* Iteratively break the smallest enclosing block... */
  for (from_l = free_l; level_empty(p, alloc_l) && from_l < alloc_l; from_l++) {
    blk = break_block(p, blk, from_l, lsizes);
  }
  block->data = blk; block->id.pool = pool_id(p); block->id.level = alloc_l;
  block->id.block = block_num(p, block->data, lsizes[alloc_l]);
  return 0;
}

int k_mem_pool_alloc(struct k_mem_pool *p, struct k_mem_block *block, size_t size, s32_t timeout)
{
  int ret, key;
  s64_t end = 0;
  
  if (timeout > 0) { end = _tick_get() + _ms_to_ticks(timeout); }
  while (1) {
    ret = pool_alloc(p, block, size);
    if (ret == 0 || timeout == K_NO_WAIT || ret == -EAGAIN || (ret && ret != -ENOMEM)) {
      return ret;
    }
    key = irq_lock();
    _pend_current_thread(&p->wait_q, timeout);
    _Swap(key);

    if (timeout != K_FOREVER) {
      timeout = end - _tick_get();
      if (timeout < 0) { break; }
    }
  }
  return -EAGAIN;
}

\end{lstlisting}
\caption{The C Source Code of Memory Allocation in Zephyr v1.8.0}
\label{fig:mem_alloc_code}
\end{figure}

Zephyr provides two kernel services \emph{k\_mem\_pool\_alloc} and \emph{k\_mem\_pool\_free}, for memory allocation and release respectively. 
The main part of the C code of \emph{k\_mem\_pool\_alloc} is shown in {\figprefix} \ref{fig:mem_alloc_code} in a compact manner. 
When an application requests for a memory block, Zephyr first computes $alloc\_l$ and $free\_l$ (Lines 7 - 16). $alloc\_l$ is the level with the size of the smallest block that will satisfy the request, and $free\_l$, with $free\_l \leqslant alloc\_l$, is the lowest level where there are free memory blocks. Since the services are concurrent, when the service tries to allocate a free block \emph{blk} from level $free\_l$ (Line 17), blocks at that level may be allocated or merged into a bigger block by other concurrent threads. In such case the service will back out (Line 18) and tell the main function \emph{k\_mem\_pool\_alloc} to retry. If $blk$ is successfully locked for allocation, then it is broken down to level $alloc\_l$ (Lines 20 - 22). 
The allocation service \emph{k\_mem\_pool\_alloc} supports a \emph{timeout} parameter to allow threads waiting for that pool for a period of time when the call does not succeed. If the allocation fails (Line 36) and the timeout is not \emph{K\_NO\_WAIT}, the thread is suspended (Line 40) in a linked list \emph{wait\_q} and the context is switched to another thread (Line 41). 

Interruptions are always enabled in both services with the exception of the code for the functions \emph{alloc\_block} and \emph{break\_block}, which invoke \emph{irq\_lock} and \emph{irq\_unlock} to respectively enable and disable interruptions. 
Similar to \emph{k\_mem\_pool\_alloc}, the execution of \emph{k\_mem\_pool\_free} is interruptable as well. 

\subsection{The {\slang} Rely-guarantee Framework}
\label{subsect:picore}

The abstract syntax of the {\slang} language \cite{zhao19fm} is shown in {\figprefix} \ref{fig:lang}. 
The syntax for events distinguishes basic events pending to be triggered from already triggered events that are under execution.  
A basic event is defined as $\event{(l,g,P)}$, where $l$ is the event name, $g$ the guard condition, and $\symbprog$ the body of the event. When $\event{(l,g,P)}$ is triggered, its body begins to be executed and it becomes a triggered event $\anonevt{\symbprog}$. The execution of $\anonevt{\symbprog}$ just simulates the program $\symbprog$.
Events are parametrized in the meta-logic as ``$\lambda (plist, \symbcore). \  \event{(l,g,P)}$'', where $plist$ is the list of input parameters, and $\symbcore$ is the event system identifier that the event belongs to. These parameters are not part of the syntax of events to make the guard $g$ and the event body $P$, as well as the rely and guarantee relations, more flexible, allowing to define different instances of the relations for different values of $plist$ and $\symbcore$. 
{\figprefix} \ref{fig:event_examp} illustrates an \emph{event} in the concrete syntax of {\slang}. 
Instead of defining a language for programs, {\slang} reuses existing languages and their rely-guarantee proof systems. 


\begin{figure}[t]
\begin{minipage}[t]{0.45\textwidth}
\begin{figure}[H]
{ \footnotesize
\begin{tabular}{l}
\textbf{Event}: \\
$
\begin{aligned}
\symbEvt \ ::= & \ \event{(l,g,P)} & (Basic \ Event)  \\ 
 | & \ \anonevt{\symbprog} & (Triggered\ Event)
\end{aligned}
$ \\ \\
\textbf{Event System}: \\
$
\begin{aligned}
\symbevtsys \ ::= & \ \evtsysdef & (Event \ Set) \\
| & \ \evtseq{\symbEvt}{\symbevtsys} & (Event \ Sequence)
\end{aligned}
$ \\ \\
\textbf{Parallel Event System}: \\
$
\begin{aligned}
\symbpes \ ::= \ \parsysc  
\end{aligned}
$
\end{tabular}
}
\caption{Abstract Syntax of {\slang} Language}
\label{fig:lang}
\end{figure}
\end{minipage}
\hspace{0.3cm}
\begin{minipage}[t]{0.45\textwidth}
\begin{figure}[H]
\begin{flushleft}
\begin{isabellec}
\footnotesize
\isacommand{EVENT} alloc [Ref p, Nat size, Int timeout] $@$ $\symbcore$ \isanewline 
\isacommand{WHEN} \isanewline
\quad p\ {\isasymin}\ {\isasymacute}mem{\isacharunderscore}pools\ 
{\isasymand}\ timeout\ {\isasymge}\ {\isacharminus}{\isadigit{1}} \isanewline
\isacommand{THEN} \isanewline
\quad ...... \isanewline
\quad \textbf{IF}\ timeout\ {\isachargreater}\ {\isadigit{0}}\ \textbf{THEN}\ \isanewline
\quad \quad  {\isasymacute}endt\ {\isacharcolon}{\isacharequal}\ {\isasymacute}endt{\isacharparenleft}t\ {\isacharcolon}{\isacharequal}\ {\isasymacute}tick\ {\isacharplus}\ timeout{\isacharparenright}\isanewline
\quad \textbf{FI};; \isanewline
\quad ...... \isanewline
\isacommand{END}
\end{isabellec}
\end{flushleft}
\caption{An Example of Event in Concrete Syntax}
\label{fig:event_examp}
\end{figure}
\end{minipage}
\end{figure}

At the system reaction level, {\slang} considers a reactive system as a set of event handlers called \emph{event systems} responding to stimulus from the environment. 
The execution of an event system concerns the continuous evaluation of guards of the events with their input arguments. From the set of events for which their associated guard condition holds in the current state, one event is non-deterministically selected to be triggered, and then its body is executed. After the event finishes, the evaluation of guards starts again looking for the next event to be triggered. We call the semantics of event systems \emph{reactive semantics}, where the event context shows the event currently being executed.
A CRS is modeled as the \emph{parallel composition} of event systems that are concurrently executed.

{\slang} supports the verification of two different kinds of properties in the rely-guarantee proof system for reactive systems: pre and post conditions of events and invariants in the fine-grained execution of events.
A rely-guarantee specification for a system is a quadruple $RGCond = \rgconddefault$, where $pre$ is the pre-condition, $R$ is the rely condition, $G$ is the guarantee condition, and $pst$ is the post-condition. The intuitive meaning of a valid rely-guarantee specification for a parallel component $\symbprog$, denoted by  $\RGSAT{\symbprog}{\rgconddefault}$, is that if $\symbprog$ is executed from an initial state $s \in pre$ and any environment transition belongs to the rely relation $R$, then the state transitions carried out by $\symbprog$ belong to the guarantee relation $G$ and the final states belong to $pst$. $\progenv$ is used to represent static configuration of programs like environments for procedure declarations. 

We have defined a rely-guarantee axiomatic proof system for the {\slang} specification language to prove validity of rely-guarantee specifications. Soundness of the proof system with regards to the definition of validity has been proven in Isabelle/HOL. Some of the rules composing the axiomatic reasoning system are shown in {\figprefix} \ref{fig:proofrule}. 

\begin{figure}[t]
\centering
\fontsize{8pt}{0cm}

\begin{tabular}{cc}
\myinfer{Await}{\infer{\rgsat{(\cmdawait{\symbbexp}{\symbprog})}{\rgconddefault}}
{
\begin{tabular}{l}
$\rgsat{P}{\rgcond{pre \cap \symbbexp \cap \{V\}}{Id}{UNIV}{\{\symbstate.\ (V, \symbstate) \in G\} \cap pst}}$ \\
$stable(pre, R) \quad stable(pst, R)$
\end{tabular}
}}
&
\myinfer{BasicEvt}{\infer{\rgsat{\event{\symbevtbd}}{\rgconddefault}}
{
\begin{tabular}{l}
$\rgsat{body(\symbevtbd)}{\rgcond{pre \cap guard(\symbevtbd)}{R}{G}{pst}}$\\
$stable(pre, R) \quad \forall \symbstate. \ (\symbstate, \symbstate) \in G$
\end{tabular}
}} 
\end{tabular}

\begin{tabular}{cc}
\myinfer{While}{\infer{\rgsat{(\cmdwhile{\symbbexp}{P})}{\rgcond{loopinv}{R}{G}{pst}}}
{
\begin{tabular}{l}
$\rgsat{P}{\rgcond{loopinv \cap \symbbexp}{R}{G}{loopinv}}$ \\
$loopinv \cap - \symbbexp \subseteq pst \quad \forall \symbstate. \ (\symbstate, \symbstate) \in G $\\
$stable(loopinv, R) \quad stable(pst, R) $
\end{tabular}
}
}
&
\myinfer{Par}{\infer{\rgsat{\symbpes}{\rgconddefault}}
{
\begin{tabular}{l}
$(1)\forall \symbcore. \ \rgsat{\symbpes(\symbcore)}{\rgcond{pres_\symbcore}{Rs_\symbcore}{Gs_\symbcore}{psts_\symbcore}}$ \\
$(2)\forall \symbcore. \ pre \subseteq pres_\symbcore \quad (3)\forall \symbcore. \ psts_\symbcore \subseteq pst \quad (4)\forall \symbcore. \ Gs_\symbcore \subseteq G $ \\
$(5)\forall \symbcore. \ R \subseteq Rs_\symbcore \quad (6)\forall \symbcore, \symbcore'. \ \symbcore \neq \symbcore' \longrightarrow Gs_\symbcore \subseteq Rs_{\symbcore'}$
\end{tabular}
}}
\end{tabular}

\caption{Subset of Rely-guarantee Proof Rules in {\slang}}
\label{fig:proofrule}
\end{figure}

A predicate $P$ is stable w.r.t. a relation $R$, represented as $stable(P,R)$, when for any pair of states $(s,t)$ such that $s \in P$ and $(s,t) \in R$ then $t \in P$. The intuitive meaning is that an environment represented by $R$ does not affect the satisfiability of $P$.
The parallel rule in  {\figprefix} \ref{fig:proofrule} establishes compositionality of the proof system, where  verification of the parallel specification can be reduced to the verification of individual event systems first and then to the verification of individual events. 
It is necessary that each event system $\symbpes(\symbcore)$ satisfies its specification $\rgcond{pres_\symbcore}{Rs_\symbcore}{Gs_\symbcore}{psts_\symbcore}$ (Premise 1); the pre-condition for the parallel composition implies all the event system's pre-conditions (Premise 2); the overall post-condition must
be a logical consequence of all post-conditions of event systems (Premise 3);
since an action transition of the concurrent system is performed by one of its event system, the guarantee condition $Gs_\symbcore$ of each event system must be a subset of the overall guarantee condition $G$ (Premise 4);
an environment transition $Rs_\symbcore$ for the event system $\symbcore$ corresponds to a transition from the overall environment $R$ (Premise 5); and an action transition of an event system $\symbcore$ should be defined in the rely condition of another event system $\symbcore'$, where $\symbcore \neq \symbcore'$ (Premise 6). 

{\slang} considers invariants of CRSs in safety verification.
To show that $inv$ is preserved by a system $\symbpes$, it suffices to show the invariant verification theorem as follows. This theorem indicates that (1) the system satisfies its rely-guarantee specification $\rgcond{init}{R}{G}{post}$, (2) $inv$ initially holds in the set of initial states, and (3) each action transition as well as each environment transition preserve $inv$. Later, invariant verification is decomposed to the verification of individual events by the proof system of {\slang}.

\begin{theorem}[Invariant Verification]
\label{thm:invariant}
For formal specification $\symbpes$ and $\progenv$, a state set $init$, a rely condition $R$, and $inv$, if 
\begin{itemize}
\item $\rgsat{\symbpes}{\rgcond{init}{R}{G}{post}}$.
\item $init \subseteq \{s.\ inv(s)\}$.
\item $stable(\{s.\ inv(s)\}, R)$ and $stable(\{s.\ inv(s)\}, G)$ are satisfied.
\end{itemize}
then $inv$ is preserved by $\symbpes$ w.r.t. $init$ and $R$. 
\end{theorem}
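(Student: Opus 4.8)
The plan is to unfold the conclusion ``$inv$ is preserved by $\symbpes$ w.r.t.\ $init$ and $R$'' into a statement about computations and then to argue by induction along each computation. Concretely, preservation should mean that for every computation $\varpi$ of $\symbpes$ whose initial state lies in $init$ and all of whose environment transitions lie in $R$, every state occurring in $\varpi$ belongs to $\{s.\ inv(s)\}$. The first step is to pass from the syntactic judgement $\rgsat{\symbpes}{\rgcond{init}{R}{G}{post}}$ to the semantic validity $\RGSAT{\symbpes}{\rgcond{init}{R}{G}{post}}$ by invoking the soundness of the {\slang} proof system established earlier. Fixing an arbitrary $\varpi$ that satisfies the two assumptions on its initial state and its environment transitions, I observe that $\varpi$ matches exactly the assume-part of the validity statement, so validity immediately yields that \emph{every} action transition of $\varpi$ lies in $G$. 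This is a global conclusion about $\varpi$, so there is no need to re-derive the guarantee transition by transition.

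The core is then a single induction on the position $i$ of the states $s_0, s_1, \dots$ along $\varpi$, establishing $inv(s_i)$ for every index $i$ (the computation may be infinite, but each index is finite, so induction on $i$ is sound). For the base case, $s_0 \in init \subseteq \{s.\ inv(s)\}$ gives $inv(s_0)$ directly from the second hypothesis. For the inductive step, I assume $inv(s_i)$ and split on the kind of the transition from $s_i$ to $s_{i+1}$: if it is an environment transition then $(s_i, s_{i+1}) \in R$ by the assumption on $\varpi$, and $stable(\{s.\ inv(s)\}, R)$ propagates the invariant; if it is an action transition of $\symbpes$ then $(s_i, s_{i+1}) \in G$ by the guarantee established above, and $stable(\{s.\ inv(s)\}, G)$ propagates the invariant. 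In either case $inv(s_{i+1})$ holds, closing the induction and hence the theorem.

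I expect the main obstacle to be not the induction itself, which is routine, but pinning down the formal computation model of {\slang}: precisely which steps count as action transitions versus environment transitions, how the validity predicate quantifies over (possibly infinite) computations, and the fact that the step dichotomy is exhaustive in the semantics. I must make sure the fixed $\varpi$ is admissible---its environment steps all lie in $R$---so that the commitment part of validity applies and delivers $G$ for exactly the transitions I classify as actions. Once the semantics is set up, the two stability hypotheses do all the remaining work, and notably no properties of the specific events, their guards and bodies, or the two-level parallel structure are needed: the statement is purely a consequence of validity together with the closure of $\{s.\ inv(s)\}$ under both $R$ and $G$.
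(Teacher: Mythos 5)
Your proposal is correct and takes essentially the approach the paper intends: the paper never spells out a proof of this theorem (it is inherited from the {\slang} framework of \cite{zhao19fm}), but its surrounding prose describes exactly your argument — soundness turns the judgement $\rgsat{\symbpes}{\rgcond{init}{R}{G}{post}}$ into semantic validity, validity places every action transition of an admissible computation in $G$, and then induction along the computation uses $init \subseteq \{s.\ inv(s)\}$ for the base case and the two stability conditions to propagate $inv$ across environment and action steps respectively. Your cautionary remarks (pinning down the environment/action step dichotomy and admissibility of the fixed computation) are precisely the bookkeeping the mechanized Isabelle/HOL proof handles, so nothing substantive is missing.
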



\section{Defining Structures and Properties of Buddy Memory Pools}
\label{sect:mem_struct}

This section formalizes the whole data structure of memory pools in Zephyr. Based on that formalization, we define safety properties as a comprehensive set of invariants and the security property as a two-level memory separation. The memory separation at memory-block level can be derived from the invariants on the memory. 

\subsection{Structure of Memory Pools}
\label{subsect:struct}

As a specification of low-level design, we use abstract data types to represent the complete structure of memory pools. The formalization of the memory pool in Zephyr is shown in the right part of {\figprefix} \ref{fig:mem_datastruct}. 
We use an abstract reference \emph{ref} in Isabelle to define pointers to memory pools. Starting addresses of memory blocks, memory pools, and unsigned integers in the implementation are defined as \emph{natural} numbers (\emph{nat}). Linked lists used in the implementation for the elements \emph{levels} and \emph{free\_list}, together with the bitmaps used in \emph{bits} and \emph{bits\_p}, are defined as a \emph{list} type. 
C \emph{structs} are modelled in Isabelle as \emph{records} of the same name as the implementation and comprising the same data. There are two exceptions to this: (1) $k\_mem\_block\_id$ and $k\_mem\_block$ are merged in one single record, (2) the union in the struct $k\_mem\_pool\_lvl$ is replaced by a single list representing the bitmap, and thus \emph{max\_inline\_level} is removed. 

\begin{figure}[t]
\begin{center}
\includegraphics[width=4.8in]{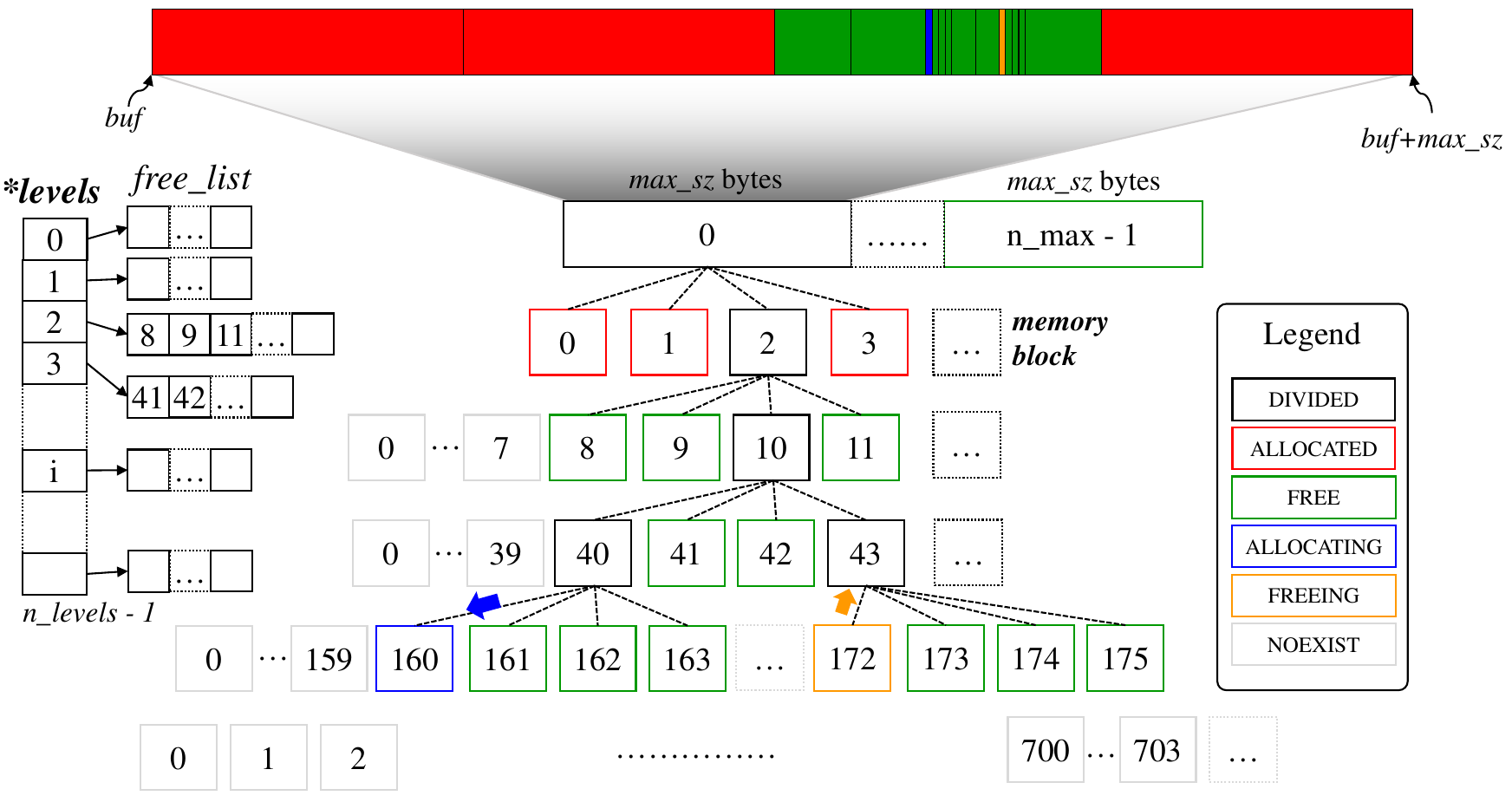}
\end{center}
\caption{Structure of Memory Pools}
\label{fig:mempool}
\end{figure}

Threads may concurrently split and coalesce memory blocks during the execution of the allocation and realease services. The Zephyr implementation makes use of a bitmap to represent the state of a memory block. The bit $j$ of the bitmap for level a $i$ is set to $1$ iff the memory address of the memory block $(i,j)$ is in the free list at level $i$. A bit $j$ at a level $i$ is set to $0$ under the following conditions: (1) its corresponding memory block is allocated (\emph{ALLOCATED}), (2) the memory block has been split (\emph{DIVIDED}), (3) the memory block is being split in the allocation service (\emph{ALLOCATING}) (Line 21 in {\figprefix} \ref{fig:mem_alloc_code}), (4) the memory block is being coalesced in the release service (\emph{FREEING}), and (5) the memory block does not exist (\emph{NOEXIST}). Instead of only using a binary representation, our formal specification models the bitmap using a datatype \emph{BlockState} that is composed of these cases together with \emph{FREE}. The reason of this decision is to simplify proving that the bitmap shape is well-formed. In particular, this representation makes less complex to verify the case in which the descendant of a free block is a non-free block. This is the case where the last free block has not been split and therefore lower levels do not exist. 
We illustrate a structure of a memory pool in {\figprefix} \ref{fig:mempool}. The top of the figure shows the real memory of the first block at level $0$.

\subsection{Invariant}
\label{subsect:inv}

The structural properties clarify the constraints on and the consistency of quad trees, free block lists, the memory pool configuration, and waiting threads. All of them are thought of as invariants on the kernel state and have been formally verified on the formal specification in Isabelle/HOL.

\subsubsection{Well-shaped bitmaps}
We say that the logical memory block $j$ at a level $i$ physically exists iff the value of the bitmap $j$ at the level $i$ is \emph{ALLOCATED}, \emph{FREE}, \emph{ALLOCATING}, or \emph{FREEING}, represented by the predicate $is\_memblock$. We do not consider blocks marked as \emph{DIVIDED} as physical blocks since it is only a logical block containing other blocks. 
 A valid forest is defined by the following rules: (1) the parent bit of an existing memory block is \emph{DIVIDED} and its child bits are \emph{NOEXIST}, denoted by the predicate $noexist\_bits$ that checks for a given bitmap $b$ and a position $j$ that nodes $b!j$ to $b!(j+3)$ are set as \emph{NOEXIST}; (2) the parent bit of a \emph{DIVIDED} block is also \emph{DIVIDED}; and (3) the child bits of a \emph{NOEXIST} bit are also \emph{NOEXIST} and its parent can not be a \emph{DIVIDED} block. The property is defined as the predicate \isacommand{inv{\isacharunderscore}bitmap}($s$) as follows, where $s$ is the system state of Zephyr memory management. 
 
\vspace{0.2cm}
\begin{isabellec}
\isacommand{inv{\isacharunderscore}bitmap}\ s\ {\isasymequiv} {\isasymforall}p{\isasymin}mem{\isacharunderscore}pools\ s{\isachardot}\ \textbf{let}\ mp\ {\isacharequal}\ mem{\isacharunderscore}pool{\isacharunderscore}info\ s\ p\ \textbf{in}\ \isanewline
\quad {\isasymforall}i\ {\isacharless}\ length\ {\isacharparenleft}levels\ mp{\isacharparenright}{\isachardot} \
 \textbf{let}\ bts\ {\isacharequal}\ bits\ {\isacharparenleft}levels\ mp\ {\isacharbang}\ i{\isacharparenright}\ \textbf{in}\isanewline
\quad \ \ {\isacharparenleft}{\isasymforall}j\ {\isacharless}\ length\ bts{\isachardot}\ 
 {\isacharparenleft}is\_memblock(bts\ {\isacharbang}\ j) {\isasymlongrightarrow}\ {\isacharparenleft}i\ {\isachargreater}\ {\isadigit{0}}\ {\isasymlongrightarrow}\ {\isacharparenleft}bits\ {\isacharparenleft}levels\ mp\ {\isacharbang}\ {\isacharparenleft}i\ {\isacharminus}\ {\isadigit{1}}{\isacharparenright}{\isacharparenright}{\isacharparenright}\ {\isacharbang}\ {\isacharparenleft}j\ div\ {\isadigit{4}}{\isacharparenright}\ {\isacharequal}\ DIVIDED{\isacharparenright}
 
\quad \quad \quad \quad \quad \quad \quad \quad \quad \quad \quad \quad \quad \quad \quad \quad \quad \quad 
{\isasymand}\ {\isacharparenleft}i\ {\isacharless}\ length\ {\isacharparenleft}levels\ mp{\isacharparenright}\ {\isacharminus}\ {\isadigit{1}}\ {\isasymlongrightarrow}\ noexist{\isacharunderscore}bits\ mp\ {\isacharparenleft}i{\isacharplus}{\isadigit{1}}{\isacharparenright}\ {\isacharparenleft}j{\isacharasterisk}{\isadigit{4}}{\isacharparenright}\ {\isacharparenright}{\isacharparenright}

\quad \ \ \ {\isasymand}\ {\isacharparenleft}bts\ {\isacharbang}\ j\ {\isacharequal}\ DIVIDED\ {\isasymlongrightarrow}\ i\ {\isachargreater}\ {\isadigit{0}}\ {\isasymlongrightarrow}\ {\isacharparenleft}bits\ {\isacharparenleft}levels\ mp\ {\isacharbang}\ {\isacharparenleft}i\ {\isacharminus}\ {\isadigit{1}}{\isacharparenright}{\isacharparenright}{\isacharparenright}\ {\isacharbang}\ {\isacharparenleft}j\ div\ {\isadigit{4}}{\isacharparenright}\ {\isacharequal}\ DIVIDED{\isacharparenright}\isanewline
\quad \ \ \ {\isasymand}\ {\isacharparenleft}bts\ {\isacharbang}\ j\ {\isacharequal}\ NOEXIST\ {\isasymlongrightarrow}\ i\ {\isacharless}\ length\ {\isacharparenleft}levels\ mp{\isacharparenright}\ {\isacharminus}\ {\isadigit{1}} {\isasymlongrightarrow}\ noexist{\isacharunderscore}bits\ mp\ {\isacharparenleft}i{\isacharplus}{\isadigit{1}}{\isacharparenright}\ {\isacharparenleft}j{\isacharasterisk}{\isadigit{4}}{\isacharparenright}{\isacharparenright}\isanewline
\quad \ \ \ {\isasymand}\ {\isacharparenleft}bts\ {\isacharbang}\ j\ {\isacharequal}\ NOEXIST\ {\isasymand}\ i\ {\isachargreater}\ {\isadigit{0}}\ {\isasymlongrightarrow}\ {\isacharparenleft}bits\ {\isacharparenleft}levels\ mp\ {\isacharbang}\ {\isacharparenleft}i\ {\isacharminus}\ {\isadigit{1}}{\isacharparenright}{\isacharparenright}{\isacharparenright}\ {\isacharbang}\ {\isacharparenleft}j\ div\ {\isadigit{4}}{\isacharparenright}\ {\isasymnoteq}\ DIVIDED{\isacharparenright}\ {\isacharparenright}
\end{isabellec}
\vspace{0.2cm}

In Isabelle, \emph{mem\_pools\ s} captures the set of pools in state $s$, and \emph{mem\_pool\_info\ s\ p} gets the memory pool referred by $p$. For a list $l$, \emph{l ! i} gets the $i$th element. 

There are two additional properties on bitmaps. First, the address space of any memory pool cannot be empty, i.e., the bits at level 0 have to be different to \emph{NOEXIST}. Second, the allocation algorithm may split a memory block into smaller ones, but not the those blocks at the lowest level (i.e. level $n\_levels - 1$), therefore the bits at the lowest level have to be different than \emph{DIVIDED}, being invalid if it is divided. The first property is defined as \isacommand{inv{\isacharunderscore}bitmap{\isadigit{0}}}($s$) and the second as \isacommand{inv{\isacharunderscore}bitmapn}($s$).

\vspace{0.2cm}
\begin{isabellec}
\isacommand{inv{\isacharunderscore}bitmap{\isadigit{0}}}\ s\ {\isasymequiv} {\isasymforall}p{\isasymin}mem{\isacharunderscore}pools\ s{\isachardot}

\quad \quad 
\textbf{let}\ bits{\isadigit{0}}\ {\isacharequal}\ bits\ {\isacharparenleft}levels\ {\isacharparenleft}mem{\isacharunderscore}pool{\isacharunderscore}info\ s\ p{\isacharparenright}\ {\isacharbang}\ {\isadigit{0}}{\isacharparenright}\ \textbf{in}\
{\isasymforall}i{\isacharless}length\ bits{\isadigit{0}}{\isachardot}\ bits{\isadigit{0}}\ {\isacharbang}\ i\ {\isasymnoteq}\ NOEXIST

\isacommand{inv{\isacharunderscore}bitmapn}\ s\ {\isasymequiv} {\isasymforall}p{\isasymin}mem{\isacharunderscore}pools\ s{\isachardot}

\quad \quad 
\textbf{let}\ bitsn\ {\isacharequal}\ bits\ {\isacharparenleft}{\isacharparenleft}levels\ {\isacharparenleft}mem{\isacharunderscore}pool{\isacharunderscore}info\ s\ p{\isacharparenright}\ {\isacharbang}\ {\isacharparenleft}length\ {\isacharparenleft}levels\ {\isacharparenleft}mem{\isacharunderscore}pool{\isacharunderscore}info\ s\ p{\isacharparenright}{\isacharparenright}\ {\isacharminus}\ {\isadigit{1}}{\isacharparenright}{\isacharparenright}{\isacharparenright}\isanewline
\ \ \ \ \ \ \ \ \ \ \ \ \ \ \ \ \ \ \textbf{in}\ {\isasymforall}i{\isacharless}length\ bitsn{\isachardot}\ bitsn\ {\isacharbang}\ i\ {\isasymnoteq}\ DIVIDED
\end{isabellec}
\vspace{0.2cm}

\subsubsection{Consistency of the memory configuration}
The configuration of a memory pool is set when it is initialized. Since the minimum block size is aligned to 4 bytes, there must exist an $n > 0$ such that the maximum size of a pool is equal to $4 \times n \times 4 ^ {n\_levels}$, relating the number of levels of a level 0 block with its maximum size. Moreover, the number of blocks at level 0 and the number of levels have to be greater than zero, since the memory pool cannot be empty. The number of levels is equal to the length of the  pool $levels$ list. Finally, the length of the bitmap at level $i$ has to be $n\_max \times 4 ^ i$. This property is defined as \isacommand{inv{\isacharunderscore}mempool{\isacharunderscore}info}($s$).

\vspace{0.2cm}
\begin{isabellec} 
\isacommand{inv{\isacharunderscore}mempool{\isacharunderscore}info}\ s\ {\isasymequiv}\ {\isasymforall}p{\isasymin}mem{\isacharunderscore}pools\ s{\isachardot}\ \textbf{let}\ mp\ {\isacharequal}\ mem{\isacharunderscore}pool{\isacharunderscore}info\ s\ p\ \textbf{in}\ \isanewline
\ \ \ \ \ \ \ {\isacharparenleft}{\isasymexists}n{\isachargreater}{\isadigit{0}}{\isachardot}\ max{\isacharunderscore}sz\ mp\ {\isacharequal}\ {\isacharparenleft}{\isadigit{4}}\ {\isacharasterisk}\ n{\isacharparenright}\ {\isacharasterisk}\ {\isacharparenleft}{\isadigit{4}}\ {\isacharcircum}\ n{\isacharunderscore}levels\ mp{\isacharparenright}{\isacharparenright}\isanewline
\ \ \ \ \ \ {\isasymand}\ n{\isacharunderscore}max\ mp\ {\isachargreater}\ {\isadigit{0}}\ {\isasymand}\ n{\isacharunderscore}levels\ mp\ {\isachargreater}\ {\isadigit{0}} {\isasymand}\ n{\isacharunderscore}levels\ mp\ {\isacharequal}\ length\ {\isacharparenleft}levels\ mp{\isacharparenright}\isanewline
\ \ \ \ \ \ {\isasymand}\ {\isacharparenleft}{\isasymforall}i{\isacharless}length\ {\isacharparenleft}levels\ mp{\isacharparenright}{\isachardot}\ length\ {\isacharparenleft}bits\ {\isacharparenleft}levels\ mp\ {\isacharbang}\ i{\isacharparenright}{\isacharparenright}\ {\isacharequal}\ {\isacharparenleft}n{\isacharunderscore}max\ mp{\isacharparenright}\ {\isacharasterisk}\ {\isadigit{4}}\ {\isacharcircum}\ i{\isacharparenright}
\end{isabellec}
\vspace{0.2cm}

\subsubsection{No partner fragmentation}
The memory release algorithm in Zephyr coalesces free partner memory blocks into blocks as large as possible for all the descendants from the root level, without including it. Thus, a memory pool does not contain four \emph{FREE} partner bits. This is checked by the $partner\_bits$ function. Note that the blocks of a pool at level 0 should not be coalesced. This property is defined as the {\isacommand{inv{\isacharunderscore}bitmap{\isacharunderscore}not{\isadigit{4}}free}(s)} predicate as follows.

\vspace{0.2cm}
\begin{isabellec}
\isacommand{inv{\isacharunderscore}bitmap{\isacharunderscore}not{\isadigit{4}}free}\ s\ {\isasymequiv} {\isasymforall}p{\isasymin}mem{\isacharunderscore}pools\ s{\isachardot}\ \textbf{let}\ mp\ {\isacharequal}\ mem{\isacharunderscore}pool{\isacharunderscore}info\ s\ p\ \textbf{in}\ \isanewline
\ \ \ \ \ \ \ \ \ \ \ \ {\isasymforall}i\ {\isacharless}\ length\ {\isacharparenleft}levels\ mp{\isacharparenright}{\isachardot}
\textbf{let}\ bts\ {\isacharequal}\ bits\ {\isacharparenleft}levels\ mp\ {\isacharbang}\ i{\isacharparenright}\ \textbf{in}\isanewline
\ \ \ \ \ \ \ \ \ \ \ \ \ \ {\isacharparenleft}{\isasymforall}j\ {\isacharless}\ length\ bts{\isachardot}\ i\ {\isachargreater}\ {\isadigit{0}}\ {\isasymlongrightarrow}\ {\isasymnot}\ partner{\isacharunderscore}bits\ mp\ i\ j{\isacharparenright}
\end{isabellec}
\vspace{0.2cm}

\subsubsection{Validity of free block lists}
The free list at one level keeps the starting address of free memory blocks. The memory management ensures that the addresses in the list are valid, i.e., they are different from each other and aligned to the \emph{block size}, which at a level $i$ is given by ($max\_sz / 4 ^ i$). 
Moreover, a memory block is in the free list iff the corresponding bit of the bitmap is \emph{FREE}. This property is defined as the {\isacommand{inv{\isacharunderscore}bitmap{\isacharunderscore}freelist}(s)} predicate as follows. 

\vspace{0.2cm}
\begin{isabellec}
\isacommand{inv{\isacharunderscore}bitmap{\isacharunderscore}freelist}\ s\ {\isasymequiv} {\isasymforall}p{\isasymin}mem{\isacharunderscore}pools\ s{\isachardot}\ \textbf{let}\ mp\ {\isacharequal}\ mem{\isacharunderscore}pool{\isacharunderscore}info\ s\ p\ \textbf{in}\ \isanewline
\quad {\isasymforall}i\ {\isacharless}\ length\ {\isacharparenleft}levels\ mp{\isacharparenright}{\isachardot}
\textbf{let}\ bts\ {\isacharequal}\ bits\ {\isacharparenleft}levels\ mp\ {\isacharbang}\ i{\isacharparenright}{\isacharsemicolon}\
fl\ {\isacharequal}\ free{\isacharunderscore}list\ {\isacharparenleft}levels\ mp\ {\isacharbang}\ i{\isacharparenright}\ \textbf{in}\isanewline
\quad \quad {\isacharparenleft}{\isasymforall}j\ {\isacharless}\ length\ bts{\isachardot}\ bts\ {\isacharbang}\ j\ {\isacharequal}\ FREE\ {\isasymlongleftrightarrow}\ buf\ mp\ {\isacharplus}\ j\ {\isacharasterisk}\ {\isacharparenleft}max{\isacharunderscore}sz\ mp\ div\ {\isacharparenleft}{\isadigit{4}}\ {\isacharcircum}\ i{\isacharparenright}{\isacharparenright}\ {\isasymin}\ set\ fl{\isacharparenright}\ {\isasymand} \isanewline
\quad \quad {\isacharparenleft}{\isasymforall}j\ {\isacharless}\ length\ fl{\isachardot}\ {\isacharparenleft}{\isasymexists}n{\isachardot}\ n\ {\isacharless}\ n{\isacharunderscore}max\ mp\ {\isacharasterisk}\ {\isacharparenleft}{\isadigit{4}}\ {\isacharcircum}\ i{\isacharparenright}\ {\isasymand}\ fl\ {\isacharbang}\ j\ {\isacharequal}\ buf\ mp\ {\isacharplus}\ n\ {\isacharasterisk}\ {\isacharparenleft}max{\isacharunderscore}sz\ mp\ div\ {\isacharparenleft}{\isadigit{4}}\ {\isacharcircum}\ i{\isacharparenright}{\isacharparenright}{\isacharparenright}{\isacharparenright}
{\isasymand}\ distinct\ fl\ 
\end{isabellec}
\vspace{0.2cm}

\subsubsection{Non-overlapping of memory pools}
The memory spaces of the set of pools defined in a system must be disjoint, so the set of memory addresses of a pool does not belong to the memory space of any other pool. This property is defined as the {\isacommand{inv{\isacharunderscore}pools{\isacharunderscore}notoverlap}(s)} predicate as follows. 

\vspace{0.2cm}
\begin{isabellec}
\isacommand{inv{\isacharunderscore}pools{\isacharunderscore}notoverlap}\ s\ {\isasymequiv} {\isacharparenleft} {\isasymforall}p1 \ p2{\isachardot}\ 
p1{\isasymin}mem{\isacharunderscore}pools\ s {\isasymand}\ p1{\isasymin}mem{\isacharunderscore}pools\ s {\isasymand} p1 {\isasymnoteq} p2 {\isasymlongrightarrow}

\quad \quad \quad 
{\isacharparenleft}{\isasymnexists}addr{\isachardot}\ addr \ {\isasymin} {mempool{\isacharunderscore}addrspace} \ s \ p1 {\isasymand} addr \ {\isasymin} {mempool{\isacharunderscore}addrspace} \ s \ p2 {\isacharparenright}{\isacharparenright}
\end{isabellec}
\vspace{0.2cm}

\subsubsection{Consistency of waiting threads}
The state of a suspended thread in \emph{wait\_q} has to be consistent with the threads waiting for a memory pool. Threads can only be blocked once, and those threads waiting for available memory blocks have to be in a \emph{BLOCKED} state. This property is defined as the {\isacommand{inv{\isacharunderscore}thd{\isacharunderscore}waitq}(s)} predicate as follows. 

\vspace{0.2cm}
\begin{isabellec}
\isacommand{inv{\isacharunderscore}thd{\isacharunderscore}waitq}\ s\ {\isasymequiv}\
{\isacharparenleft}{\isasymforall}p{\isasymin}mem{\isacharunderscore}pools\ s{\isachardot}\ {\isasymforall}t{\isasymin}\ set\ {\isacharparenleft}wait{\isacharunderscore}q\ {\isacharparenleft}mem{\isacharunderscore}pool{\isacharunderscore}info\ s\ p{\isacharparenright}{\isacharparenright}{\isachardot}\ thd{\isacharunderscore}state\ s\ t\ {\isacharequal}\ BLOCKED{\isacharparenright}

\quad \quad \quad {\isasymand}\ {\isacharparenleft}{\isasymforall}t{\isachardot}\ thd{\isacharunderscore}state\ s\ t\ {\isacharequal}\ BLOCKED\ {\isasymlongrightarrow}\ {\isacharparenleft}{\isasymexists}p{\isasymin}mem{\isacharunderscore}pools\ s{\isachardot}\ t\ {\isasymin}\ set\ {\isacharparenleft}wait{\isacharunderscore}q\ {\isacharparenleft}mem{\isacharunderscore}pool{\isacharunderscore}info\ s\ p{\isacharparenright}{\isacharparenright}{\isacharparenright}{\isacharparenright}

\quad \quad \quad {\isasymand}\ {\isacharparenleft}{\isasymforall}p{\isasymin}mem{\isacharunderscore}pools\ s{\isachardot}\ dist{\isacharunderscore}list\ {\isacharparenleft}wait{\isacharunderscore}q\ {\isacharparenleft}mem{\isacharunderscore}pool{\isacharunderscore}info\ s\ p{\isacharparenright}{\isacharparenright}{\isacharparenright}

\quad \quad \quad {\isasymand}\ {\isacharparenleft}{\isasymforall}p\ q{\isachardot}\ p{\isasymin}mem{\isacharunderscore}pools\ s\ {\isasymand}\ q{\isasymin}mem{\isacharunderscore}pools\ s\ {\isasymand}\ p\ {\isasymnoteq}\ q {\isasymlongrightarrow}

\quad \quad \quad \quad \quad \quad  {\isacharparenleft}{\isasymnexists}t{\isachardot}\ t\ {\isasymin}\ set\ {\isacharparenleft}wait{\isacharunderscore}q\ {\isacharparenleft}mem{\isacharunderscore}pool{\isacharunderscore}info\ s\ p{\isacharparenright}{\isacharparenright}\ {\isasymand}\ t{\isasymin}\ set\ {\isacharparenleft}wait{\isacharunderscore}q\ {\isacharparenleft}mem{\isacharunderscore}pool{\isacharunderscore}info\ s\ q{\isacharparenright}{\isacharparenright}{\isacharparenright}{\isacharparenright}
\end{isabellec}
\vspace{0.2cm}

\subsubsection{Consistency of freeing and allocating blocks}
During allocation and release of a memory block, blocks of the tree may temporally be manipulated during the coalesce and division process. A block can be only manipulated by a thread at a time, and the state bit of a block being temporally manipulate has to be \emph{FREEING} or \emph{ALLOCATING}. Moreover, one of these memory blocks is being manipulated at most by one thread. This property is defined as the {\isacommand{inv{\isacharunderscore}aux{\isacharunderscore}vars}(s)} predicate as follows. 

\vspace{0.2cm}
\begin{isabellec}
\isacommand{inv{\isacharunderscore}aux{\isacharunderscore}vars}\ s\ {\isasymequiv}\

\quad {\isacharparenleft}{\isasymforall}t\ n{\isachardot}\ freeing{\isacharunderscore}node\ s\ t\ {\isacharequal}\ Some\ n {\isasymlongrightarrow} get{\isacharunderscore}bit\ {\isacharparenleft}mem{\isacharunderscore}pool{\isacharunderscore}info\ s{\isacharparenright}\ {\isacharparenleft}pool\ n{\isacharparenright}\ {\isacharparenleft}level\ n{\isacharparenright}\ {\isacharparenleft}block\ n{\isacharparenright}\ {\isacharequal}\ FREEING{\isacharparenright} {\isasymand}

\quad {\isacharparenleft}{\isasymforall}n{\isachardot}\ get{\isacharunderscore}bit\ {\isacharparenleft}mem{\isacharunderscore}pool{\isacharunderscore}info\ s{\isacharparenright}\ {\isacharparenleft}pool\ n{\isacharparenright}\ {\isacharparenleft}level\ n{\isacharparenright}\ {\isacharparenleft}block\ n{\isacharparenright}\ {\isacharequal}\ FREEING {\isasymand}\ mem{\isacharunderscore}block{\isacharunderscore}addr{\isacharunderscore}valid\ s\ n\ {\isasymlongrightarrow}\ \isanewline
\quad \quad \quad \quad \quad \quad \quad   {\isacharparenleft}{\isasymexists}t{\isachardot}\ freeing{\isacharunderscore}node\ s\ t\ {\isacharequal}\ Some\ n{\isacharparenright}{\isacharparenright} {\isasymand}

\quad {\isacharparenleft}{\isasymforall}t\ n{\isachardot}\ allocating{\isacharunderscore}node\ s\ t\ {\isacharequal}\ Some\ n {\isasymlongrightarrow}\ \isanewline
\quad \quad \quad \quad \quad \quad get{\isacharunderscore}bit\ {\isacharparenleft}mem{\isacharunderscore}pool{\isacharunderscore}info\ s{\isacharparenright}\ {\isacharparenleft}pool\ n{\isacharparenright}\ {\isacharparenleft}level\ n{\isacharparenright}\ {\isacharparenleft}block\ n{\isacharparenright}\ {\isacharequal}\ ALLOCATING{\isacharparenright} {\isasymand} 

\quad {\isacharparenleft}{\isasymforall}n{\isachardot}\ get{\isacharunderscore}bit\ {\isacharparenleft}mem{\isacharunderscore}pool{\isacharunderscore}info\ s{\isacharparenright}\ {\isacharparenleft}pool\ n{\isacharparenright}\ {\isacharparenleft}level\ n{\isacharparenright}\ {\isacharparenleft}block\ n{\isacharparenright}\ {\isacharequal}\ ALLOCATING {\isasymand}\ mem{\isacharunderscore}block{\isacharunderscore}addr{\isacharunderscore}valid\ s\ n\ {\isasymlongrightarrow}\ \isanewline
\quad \quad \quad \quad \quad \quad \quad  {\isacharparenleft}{\isasymexists}t{\isachardot}\ allocating{\isacharunderscore}node\ s\ t\ {\isacharequal}\ Some\ n{\isacharparenright}{\isacharparenright} {\isasymand} 

\quad {\isacharparenleft}{\isasymforall}t{\isadigit{1}}\ t{\isadigit{2}}\ n{\isadigit{1}}\ n{\isadigit{2}}{\isachardot}\ t{\isadigit{1}}\ {\isasymnoteq}\ t{\isadigit{2}}\ {\isasymand}\ freeing{\isacharunderscore}node\ s\ t{\isadigit{1}}\ {\isacharequal}\ Some\ n{\isadigit{1}}\ {\isasymand}\ freeing{\isacharunderscore}node\ s\ t{\isadigit{2}}\ {\isacharequal}\ Some\ n{\isadigit{2}}\ {\isasymlongrightarrow}\ \isanewline
\quad \quad \quad \quad \quad \quad \quad {\isasymnot}{\isacharparenleft}pool\ n{\isadigit{1}}\ {\isacharequal}\ pool\ n{\isadigit{2}}\ {\isasymand}\ level\ n{\isadigit{1}}\ {\isacharequal}\ level\ n{\isadigit{2}}\ {\isasymand}\ block\ n{\isadigit{1}}\ {\isacharequal}\ block\ n{\isadigit{2}}{\isacharparenright}{\isacharparenright} {\isasymand}

\quad {\isacharparenleft}{\isasymforall}t{\isadigit{1}}\ t{\isadigit{2}}\ n{\isadigit{1}}\ n{\isadigit{2}}{\isachardot}\ t{\isadigit{1}}\ {\isasymnoteq}\ t{\isadigit{2}}\ {\isasymand}\ allocating{\isacharunderscore}node\ s\ t{\isadigit{1}}\ {\isacharequal}\ Some\ n{\isadigit{1}}\ {\isasymand}\ allocating{\isacharunderscore}node\ s\ t{\isadigit{2}}\ {\isacharequal}\ Some\ n{\isadigit{2}}\ {\isasymlongrightarrow}\ \isanewline
\quad \quad \quad \ {\isasymnot}{\isacharparenleft}pool\ n{\isadigit{1}}\ {\isacharequal}\ pool\ n{\isadigit{2}}\ {\isasymand}\ level\ n{\isadigit{1}}\ {\isacharequal}\ level\ n{\isadigit{2}}\ {\isasymand}\ block\ n{\isadigit{1}}\ {\isacharequal}\ block\ n{\isadigit{2}}{\isacharparenright}{\isacharparenright} {\isasymand} 

\quad {\isacharparenleft}{\isasymforall}t{\isadigit{1}}\ t{\isadigit{2}}\ n{\isadigit{1}}\ n{\isadigit{2}}{\isachardot}\ allocating{\isacharunderscore}node\ s\ t{\isadigit{1}}\ {\isacharequal}\ Some\ n{\isadigit{1}}\ {\isasymand}\ freeing{\isacharunderscore}node\ s\ t{\isadigit{2}}\ {\isacharequal}\ Some\ n{\isadigit{2}}\ {\isasymlongrightarrow}\  \isanewline
\quad \quad \quad \quad \quad \quad \quad {\isasymnot}{\isacharparenleft}pool\ n{\isadigit{1}}\ {\isacharequal}\ pool\ n{\isadigit{2}}\ {\isasymand}\ level\ n{\isadigit{1}}\ {\isacharequal}\ level\ n{\isadigit{2}}\ {\isasymand}\ block\ n{\isadigit{1}}\ {\isacharequal}\ block\ n{\isadigit{2}}{\isacharparenright}{\isacharparenright}
\end{isabellec}
\vspace{0.2cm}

\subsection{Memory Separation}
\label{subsect:memsep}
To operate securely, the memory management must enforce memory separation. That is, it must prevent memory leaks and insecure data flows. Informally, this means that it must prevent memory leaks among blocks due to incorrect memory split and coalescing for the Zephir buddy memory allocation service. It also must avoid that an allocated memory areas is influenced by any thread different than the one carrying out the allocation of that area. 
We call the two security properties memory separation at memory-block level and at thread level respectively. 

\subsubsection{Memory Separation at Memory-block Level}
When allocating and releasing memory for threads, the memory allocator must correctly manage the allocated and freed memory blocks. On one hand, it should prevent allocating one memory address to multiple memory blocks to avoid possible data leaks among threads. On the other hand, it should be able to control all the memory addresses in a memory pool to avoid interspaces between memory blocks. 
Therefore, memory blocks in a memory pool partition its address space, i.e. blocks are not overlapping each other and the addresses of all blocks cover the address space of the pool. 

For a memory block with index $j$ at level $i$, its address space is the interval $[j \times (max\_sz / 4 ^ i), (j + 1) \times (max\_sz / 4 ^ i))$. For any relative memory address $addr$ in the memory domain of a memory pool, and hence $addr < n\_max * max\_sz$, there is one and only one memory block whose address space contains $addr$. Here, we use relative address for $addr$.  The property is defined as the {\isacommand{mem{\isacharunderscore}part}(s)} predicate as follows.

\vspace{0.2cm}
\begin{isabellec}
\isacommand{addr{\isacharunderscore}in{\isacharunderscore}block}\ mp\ addr\ i\ j\ {\isasymequiv}
i\ {\isacharless}\ length\ {\isacharparenleft}levels\ mp{\isacharparenright}\ {\isasymand}\ j\ {\isacharless}\ length\ {\isacharparenleft}bits\ {\isacharparenleft}levels\ mp\ {\isacharbang}\ i{\isacharparenright}{\isacharparenright}  {\isasymand}\ 

\ \ \ {\isacharparenleft}bits\ {\isacharparenleft}levels\ mp\ {\isacharbang}\ i{\isacharparenright}\ {\isacharbang}\ j\ {\isacharequal}\ FREE\ {\isasymor}\ bits\ {\isacharparenleft}levels\ mp\ {\isacharbang}\ i{\isacharparenright}\ {\isacharbang}\ j\ {\isacharequal}\ FREEING\ {\isasymor} \isanewline
\ \ \ \ \ \ bits\ {\isacharparenleft}levels\ mp\ {\isacharbang}\ i{\isacharparenright}\ {\isacharbang}\ j\ {\isacharequal}\ ALLOCATED\ {\isasymor}\ bits\ {\isacharparenleft}levels\ mp\ {\isacharbang}\ i{\isacharparenright}\ {\isacharbang}\ j\ {\isacharequal}\ ALLOCATING{\isacharparenright} {\isasymand}\ 

\ \ \  addr\ {\isasymin}\ {\isacharbraceleft}j\ {\isacharasterisk}\ {\isacharparenleft}max{\isacharunderscore}sz\ mp\ div\ {\isacharparenleft}{\isadigit{4}}\ {\isacharcircum}\ i{\isacharparenright}{\isacharparenright}\ {\isachardot}{\isachardot}{\isacharless}\ (j+1)\ {\isacharasterisk}\ {\isacharparenleft}max{\isacharunderscore}sz\ mp\ div\ {\isacharparenleft}{\isadigit{4}}\ {\isacharcircum}\ i{\isacharparenright}{\isacharparenright}{\isacharbraceright}

\isacommand{mem{\isacharunderscore}part}\ s\ {\isasymequiv}\ {\isasymforall}p{\isasymin}mem{\isacharunderscore}pools\ s{\isachardot}\ \textbf{let}\ mp\ {\isacharequal}\ mem{\isacharunderscore}pool{\isacharunderscore}info\ s\ p\ \textbf{in}

\quad \quad \quad \quad \quad \quad \quad \quad {\isacharparenleft}{\isasymforall}addr\ {\isacharless}\ n{\isacharunderscore}max\ mp\ {\isacharasterisk}\ max{\isacharunderscore}sz\ mp{\isachardot}\ {\isacharparenleft}{\isasymexists}{\isacharbang}{\isacharparenleft}i{\isacharcomma}j{\isacharparenright}{\isachardot}\ \textbf{addr{\isacharunderscore}in{\isacharunderscore}block}\ mp\ addr\ i\ j{\isacharparenright}\ {\isacharparenright}
\end{isabellec}
\vspace{0.2cm}

From the invariants of bitmaps and memory configuration in the previous subsection, we can derive the general property for the memory partition as the following theorem. 

\begin{theorem}[Memory Partition]
For any kernel state $s$, If the memory pools in $s$ are consistent in their configuration, and their bitmaps are well-shaped, the memory pools satisfy the partition property in $s$, i.e. 
\[
\textbf{inv\_mempool\_info}(s) \wedge \textbf{inv\_bitmap}(s) \wedge \textbf{inv\_bitmap0}(s) \wedge \textbf{inv\_bitmapn}(s) \Longrightarrow \textbf{mem\_part}(s)
\]
\end{theorem}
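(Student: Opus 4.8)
The plan is to fix an arbitrary pool $p$, write $mp = mem\_pool\_info\ s\ p$, and let $s_i = max\_sz\ mp \div 4^i$ denote the block size at level $i$; then, for a fixed relative address $addr < n\_max\ mp \times max\_sz\ mp$, I would establish both existence and uniqueness of the block satisfying \textbf{addr\_in\_block}, which together give the $\exists!$ in \textbf{mem\_part}(s). The geometric groundwork, extracted from \textbf{inv\_mempool\_info}(s), is that $max\_sz\ mp$ equals $4n \times 4^{n\_levels}$ and is therefore divisible by $4^i$ for every $i < n\_levels\ mp$, so each $s_i$ is a genuine integer and the level-$i$ blocks exactly tile $[0,\ n\_max\ mp \times max\_sz\ mp)$ with $n\_max\ mp \times 4^i$ intervals of width $s_i$. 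From this, $addr$ lies in the address space of the level-$i$ block with index $j_i = addr \div s_i$ and of no other level-$i$ block, and the parent/child index identity $j_{i+1} \div 4 = j_i$ holds. These nat-arithmetic facts about $\div$ and $4^i$ are the recurring lemmas everything else depends on.

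For existence I would perform a bounded descent through the levels. At level $0$ the block $(0, j_0)$ is non-\emph{NOEXIST} by \textbf{inv\_bitmap0}(s). The invariant \textbf{inv\_bitmap}(s) yields, as the contrapositive of its clause that a \emph{NOEXIST} bit cannot have a \emph{DIVIDED} parent, that the children of a \emph{DIVIDED} node are never \emph{NOEXIST}; hence as long as $(i, j_i)$ is \emph{DIVIDED}, the next node $(i+1, j_{i+1})$ is again non-\emph{NOEXIST}. This descent cannot pass the lowest level, since \textbf{inv\_bitmapn}(s) forbids \emph{DIVIDED} bits there, so it must reach some level $i^{\ast}$ at which $(i^{\ast}, j_{i^{\ast}})$ is non-\emph{NOEXIST} and non-\emph{DIVIDED}, i.e. $is\_memblock$ holds. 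As $addr$ lies in its interval, this block witnesses \textbf{addr\_in\_block}. Formally this is an induction on $n\_levels\ mp - i$.

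For uniqueness, suppose $(i_1, j_1)$ and $(i_2, j_2)$ both satisfy \textbf{addr\_in\_block}. Interval membership forces $j_k = addr \div s_{i_k}$, and iterating the parent identity shows the shallower block is an ancestor of the deeper one. Assuming $i_1 \le i_2$ without loss of generality, the case $i_1 = i_2$ gives $j_1 = j_2$ at once. If $i_1 < i_2$, I would invoke a subsidiary lemma, proved by induction on the level gap from the two relevant clauses of \textbf{inv\_bitmap}(s) (a physical block has \emph{NOEXIST} children, and a \emph{NOEXIST} block has \emph{NOEXIST} children), stating that every strict descendant of a physical block is \emph{NOEXIST}. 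Since $(i_2, j_2)$ is such a descendant yet is itself physical, this is a contradiction, so $i_1 < i_2$ is impossible and the block is unique.

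I expect the main obstacle to be the arithmetic layer rather than the tree reasoning: pinning the index down as $j = addr \div s_i$ from the interval $[j \times s_i,\ (j+1) \times s_i)$, discharging the parent identity $j_{i+1} \div 4 = j_i$, and threading divisibility of $max\_sz\ mp$ by $4^i$ through the bitmap-length facts ($length\ (bits\ \ldots) = n\_max\ mp \times 4^i$) of \textbf{inv\_mempool\_info}(s). Once these $\div$/$4^i$ lemmas are in place, both the existence descent and the descendant-\emph{NOEXIST} lemma become routine inductions over the finitely many levels.
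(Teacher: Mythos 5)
Your proposal is correct, and it is the natural argument the paper's invariants are designed to support: the paper itself states this theorem without a textual proof (deferring entirely to the Isabelle/HOL sources), remarking only that memory separation at block level "is discharged by the proof of the invariants preservation." Your decomposition into existence (descent from level $0$ using \textbf{inv\_bitmap0}, the contrapositive of the NOEXIST-child/DIVIDED-parent clause of \textbf{inv\_bitmap}, with termination from \textbf{inv\_bitmapn}) and uniqueness (the descendant-of-a-physical-block-is-NOEXIST lemma, by induction on the level gap), together with the $\div$/$4^i$ arithmetic supplied by \textbf{inv\_mempool\_info}, correctly establishes the $\exists!$ in \textbf{mem\_part} and matches how the invariant clauses fit together.
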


That is, the proof for the memory separation preservation at the memory-block level is discharged by the proof of the invariants preservation. 

\subsubsection{Memory Separation at Thread Level}
When a thread requests for memory blocks, the memory management allocates some block to that thread by adding it to the local memory of the thread. As a consequence the thread becomes the exclusive owner of the block, which can only be freed by the same thread that the block belongs to. 
For this purpose, we define a mapping of allocated memory blocks of threads in the system state of the memory management as 
\[mblocks :: Thread \Rightarrow Mem\_block\ set\]

The execution of memory services by a thread $t$ cannot affect the allocated memory blocks of other threads, i.e. for arbitrary states $s$ and $r$, which are the previous state and the following state of an execution step of $t$ respectively, the memory separation requires that 
\[\forall t'. t' \neq t \longrightarrow mblocks\ s\ t' = mblocks\ r\ t' \]

In {\sectprefix} \ref{subsect:securityproof}, we will show that the separation property implies integrity of the memory management services. 

\section{Formalizing Zephyr Memory Management}
\label{sect:mem_spec}

In this section, we first introduce an execution model of Zephyr using the {\slang} language. Then we discuss in detail the low-level design specification for the kernel services that the memory management provides. Since this work focuses on the memory management, we only provide very abstract models for other kernel functionalities such as the kernel scheduling and thread control.

\subsection{Event-based Execution Model of Zephyr}

\subsubsection{{\slang} statements for memory management}
Interrupt handlers in OSs are considered as reaction services of {\slang} representing as \emph{events}: 
\[\stmtevent{\symbEvt}{p_1,...,p_n}{\symbcore}{g}{\symbprog}\]

In addition to the input parameters, an event has a special parameter $\symbcore$ indicating the execution context, e.g. the scheduler and the thread invoking the event. 
The imperative commands of an event body $P$ in {\slang} are standard sequential constructs such as conditional execution, loop, and sequential composition of programs. It also includes a synchronization construct for concurrent processes represented by $\stmtawait{b}{\symbprog}$. The body $\symbprog$ is executed atomically if and only if the boolean condition $\symbbexp$ holds, not progressing otherwise. $\stmtatom{\symbprog}$ denotes an \emph{Await} statement for which its guard is $True$. 

Threads and kernel processes have their own execution context and local states. Each of them is modelled in {\slang} as a set of events called \emph{event systems} and denoted as $\eventsystem{\symbevtsys}$.
The operational semantics of an event system is the \emph{sequential composition} of the execution of the events composing it. 
Finally, {\slang} has a construct for parallel composition of event systems $esys_0 \parallel ... \parallel esys_n$ which interleaves the execution of the events composing each event system  $esys_i$ for $0 \leq i\leq n$.

\subsubsection{Execution model of Zephyr}

\begin{figure}[t]
\begin{center}
\includegraphics[width=4.8in]{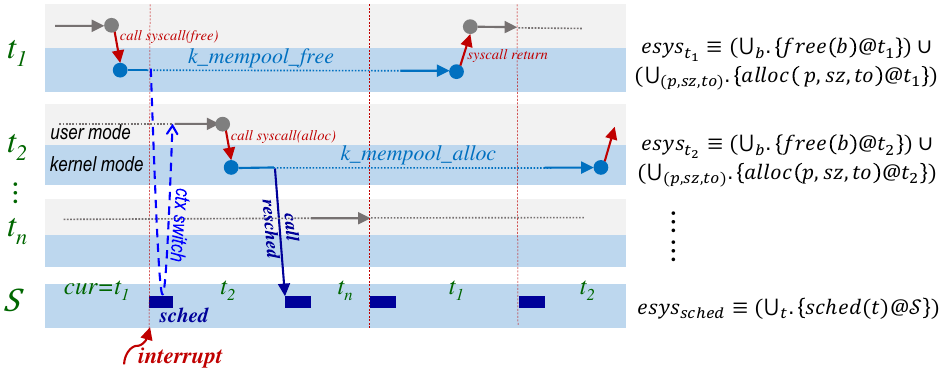}
\end{center}
\caption{An Execution Model of Zephyr Memory Management}
\label{fig:kernel_model}
\end{figure}

After being initialized, an OS kernel can be considered as a reactive system that is in an \emph{idle} loop until it receives an interruption which is handled by an interruption handler. 
Whilst an interrupt handler execution is atomic in sequential kernels, it can be interrupted in concurrent kernels~\cite{Chen16,Xu16} allowing services invoked by threads to be interrupted and resumed later. 
In the execution model of Zephyr, we consider a scheduler $\mathcal{S}$, a timer and a set of threads $t_1, ..., t_n$. In this model, the execution of the scheduler is atomic since kernel services can not interrupt it. But kernel services can be interrupted via the scheduler, i.e., the execution of a memory service invoked by a thread $t_i$ may be interrupted by the kernel scheduler to execute a thread $t_j$. {\figprefix}\ref{fig:kernel_model} illustrates Zephyr execution model, where solid lines represent execution steps of the threads/kernel services and dotted lines mean the suspension of the thread/code. For instance, the execution of \emph{k\_mempool\_free} in thread $t_1$ is interrupted by the scheduler, and the context is switched to thread $t_2$, which invokes \emph{k\_mempool\_alloc}. During the execution of $t_2$, the kernel service may suspend the thread and switch to another thread $t_n$ by calling \emph{rescheduling}. Later, the execution is switched back to $t_1$ and continues the execution of \emph{k\_mempool\_free} in a different state from when it was interrupted.

The event systems of Zephyr are illustrated in the right part of {\figprefix} \ref{fig:kernel_model}. A user thread $t_i$ invoke allocation/release services, thus the event system for $t_i$ is $esys_{t_i}$, a set composed of the events \emph{alloc} and \emph{free}. The input parameters for these events correspond with the arguments of the service implementation, that are constrained by the guard for each service. Together with system users we model the event service for the scheduler $esys_{sched}$ consisting on a unique event \emph{sched} whose argument is a thread $t$ to be scheduled when $t$ is in the \emph{READY} state. 
The formal specification of the memory management is the parallel composition of the event system for the threads, the scheduler and the timer. 

\subsubsection{Thread context and preemption}

Events are parametrized by a thread identifier used to access to the execution context of the thread invoking it. As shown in Figure~\ref{fig:kernel_model}, the execution of an event executed by a thread can be stopped by the scheduler to be resumed later. This behaviour is modelled using a global variable $cur$ indicating that the thread being currently has been scheduled and it is being executed. The model conditions the execution of parametrized events in $t$ only when $t$ is scheduled. This is achieved by using the expression \stmtirq{t}{p} $\equiv$ $\stmtawait{cur = t}{p}$, so an event invoked by a thread $t$ only progresses when $t$ is scheduled. This scheme allows to use rely-guarantee for concurrent execution of threads on mono-core architectures, where only the scheduled thread is able to modify the memory.

\subsection{Formal Specification of Memory Management Services}

This section discusses the formal specification of the memory management services. These services deal with the initialization of pools, and memory allocation and release. 

\subsubsection{System state}
The system state includes the memory model introduced in Section~\ref{sect:mem_struct}, together with the thread under execution represented by the variable $cur$ and the local variables to the memory services. The local variables are used to keep temporal changes to the structure, guards in conditional and loop statements, and index accesses. 
The memory model is represented as a set \emph{mem\_pools} storing the references of all memory pools and a mapping \emph{mem\_pool\_info} to query a pool by a pool reference. 
Local variables are modelled as total functions from threads to variable values, representing that the event is accessing the thread context. In the formal model of the events we represent the access to a state component $c$ with ${\isasymacute} c$ and the value of a local component $c$ for the thread $t$ is represented as ${\isasymacute}c\ t$.
Local variables \emph{allocating\_node} and \emph{freeing\_node} are relevant for the memory services, storing the temporal blocks being split/coalesced in alloc/release services respectively. The memory blocks allocated in a thread are stored in the local variable \emph{mblocks} as discussed in the previous section.

\subsubsection{Memory pool initialization}
Zephyr defines and initializes memory pools at compile time by constructing a static variable of type \emph{\textbf{struct} k\_mem\_pool}. The implementation initializes each pool with \emph{n\_max} level 0 blocks with size \emph{max\_sz} bytes. Bitmaps of level 0 are set to 1 and free list contains all level 0 blocks. Bitmaps and free lists of other level are initialized to 0 and to the empty list respectively. In the formal model, we specify a state corresponding to the implementation initial state and we show that it belongs to the set of states satisfying the invariant.

\subsubsection{Memory allocation/release services}
The C code of Zephyr uses the recursive function \emph{free\_block} to coalesce free partner blocks and the \emph{break} statement to stop the execution of a loop statements, which are not supported by the imperative language in {\slang}. The formal specification overcomes this by transforming the recursion into a loop controlled by the recursion condition, and using a control variable to exit loops with breaks when the condition to execute the loop break is satisfied. 
Additionally, the memory management services use the atomic body \emph{irq\_lock(); P; irq\_unlock();} to keep interruption handlers \emph{reentrant} by disabling interruptions. 
We simplify this behaviour in the specification using an \textbf{ATOM} statement, avoiding the service to be interrupted at that point. The rest of the formal specification closely follows the implementation, where variables are modified using higher order functions changing the state as the code does it. The reason of using Isabelle/HOL functions is that {\slang} does not provide a semantic for expressions, using instead state transformer relying on high order functions to change the state.

\begin{figure}[t]
\begin{flushleft}
\begin{isabellec} \fontsize{7pt}{0cm} 
\ \ 1 \ \isacommand{WHILE}\ {\isasymacute}free{\isacharunderscore}block{\isacharunderscore}r\ t\ \isacommand{DO}\isanewline
\ \ 2 \ \quad t\ {\isactrlenum} \ {\isasymacute}lsz\ {\isacharcolon}{\isacharequal}\ {\isasymacute}lsz\ {\isacharparenleft}t\ {\isacharcolon}{\isacharequal}\ {\isasymacute}lsizes\ t\ {\isacharbang}\ {\isacharparenleft}{\isasymacute}lvl\ t{\isacharparenright}{\isacharparenright}{\isacharsemicolon}{\isacharsemicolon}\isanewline
\ \ 3 \ \quad t\ {\isactrlenum} \ {\isasymacute}blk\ {\isacharcolon}{\isacharequal}\ {\isasymacute}blk\ {\isacharparenleft}t\ {\isacharcolon}{\isacharequal}\ block{\isacharunderscore}ptr\ {\isacharparenleft}{\isasymacute}mem{\isacharunderscore}pool{\isacharunderscore}info\ {\isacharparenleft}pool\ b{\isacharparenright}{\isacharparenright}\ {\isacharparenleft}{\isasymacute}lsz\ t{\isacharparenright}\ {\isacharparenleft}{\isasymacute}bn\ t{\isacharparenright}{\isacharparenright}{\isacharsemicolon}{\isacharsemicolon}\isanewline
\ \ 4 \ \quad t\ {\isactrlenum} \ \isacommand{ATOM}\isanewline
\ \ 5 \ \quad \quad {\isasymacute}mem{\isacharunderscore}pool{\isacharunderscore}info\ {\isacharcolon}{\isacharequal}\ set{\isacharunderscore}bit{\isacharunderscore}free\ {\isasymacute}mem{\isacharunderscore}pool{\isacharunderscore}info\ {\isacharparenleft}pool\ b{\isacharparenright}\ {\isacharparenleft}{\isasymacute}lvl\ t{\isacharparenright}\ {\isacharparenleft}{\isasymacute}bn\ t{\isacharparenright}{\isacharsemicolon}{\isacharsemicolon}\isanewline
\ \ 6 \ \quad \quad {\isasymacute}freeing{\isacharunderscore}node\ {\isacharcolon}{\isacharequal}\ {\isasymacute}freeing{\isacharunderscore}node\ {\isacharparenleft}t\ {\isacharcolon}{\isacharequal}\ None{\isacharparenright}{\isacharsemicolon}{\isacharsemicolon}\isanewline
\ \ 7 \ \quad \quad \isacommand{IF}\ {\isasymacute}lvl\ t\ {\isachargreater}\ {\isadigit{0}}\ {\isasymand}\ partner{\isacharunderscore}bits\ {\isacharparenleft}{\isasymacute}mem{\isacharunderscore}pool{\isacharunderscore}info\ {\isacharparenleft}pool\ b{\isacharparenright}{\isacharparenright}\ {\isacharparenleft}{\isasymacute}lvl\ t{\isacharparenright}\ {\isacharparenleft}{\isasymacute}bn\ t{\isacharparenright}\ \isacommand{THEN}\isanewline
\ \ 8 \ \quad \quad \quad \isacommand{FOR}\ {\isasymacute}i\ {\isacharcolon}{\isacharequal}\ {\isasymacute}i{\isacharparenleft}t\ {\isacharcolon}{\isacharequal}\ {\isadigit{0}}{\isacharparenright}{\isacharsemicolon}\ {\isasymacute}i\ t\ {\isacharless}\ {\isadigit{4}}{\isacharsemicolon}\ {\isasymacute}i\ {\isacharcolon}{\isacharequal}\ {\isasymacute}i{\isacharparenleft}t\ {\isacharcolon}{\isacharequal}\ {\isasymacute}i\ t\ {\isacharplus}\ {\isadigit{1}}{\isacharparenright}\ \isacommand{DO}\isanewline
\ \ 9 \ \quad \quad \quad \quad {\isasymacute}bb\ {\isacharcolon}{\isacharequal}\ {\isasymacute}bb\ {\isacharparenleft}t\ {\isacharcolon}{\isacharequal}\ {\isacharparenleft}{\isasymacute}bn\ t\ div\ {\isadigit{4}}{\isacharparenright}\ {\isacharasterisk}\ {\isadigit{4}}\ {\isacharplus}\ {\isasymacute}i\ t{\isacharparenright}{\isacharsemicolon}{\isacharsemicolon}\isanewline
10 \ \quad \quad \quad \quad {\isasymacute}mem{\isacharunderscore}pool{\isacharunderscore}info\ {\isacharcolon}{\isacharequal}\ set{\isacharunderscore}bit{\isacharunderscore}noexist\ {\isasymacute}mem{\isacharunderscore}pool{\isacharunderscore}info\ {\isacharparenleft}pool\ b{\isacharparenright}\ {\isacharparenleft}{\isasymacute}lvl\ t{\isacharparenright}\ {\isacharparenleft}{\isasymacute}bb\ t{\isacharparenright}{\isacharsemicolon}{\isacharsemicolon}\isanewline
11 \ \quad \quad \quad \quad {\isasymacute}block{\isacharunderscore}pt\ {\isacharcolon}{\isacharequal}\ {\isasymacute}block{\isacharunderscore}pt\ {\isacharparenleft}t\ {\isacharcolon}{\isacharequal}\ block{\isacharunderscore}ptr\ {\isacharparenleft}{\isasymacute}mem{\isacharunderscore}pool{\isacharunderscore}info\ {\isacharparenleft}pool\ b{\isacharparenright}{\isacharparenright}\ {\isacharparenleft}{\isasymacute}lsz\ t{\isacharparenright}\ {\isacharparenleft}{\isasymacute}bb\ t{\isacharparenright}{\isacharparenright}{\isacharsemicolon}{\isacharsemicolon}\isanewline
12 \ \quad \quad \quad \quad \isacommand{IF}\ {\isasymacute}bn\ t\ {\isasymnoteq}\ {\isasymacute}bb\ t\ {\isasymand}\ block{\isacharunderscore}fits\ {\isacharparenleft}{\isasymacute}mem{\isacharunderscore}pool{\isacharunderscore}info\ {\isacharparenleft}pool\ b{\isacharparenright}{\isacharparenright}\ {\isacharparenleft}{\isasymacute}block{\isacharunderscore}pt\ t{\isacharparenright}\ {\isacharparenleft}{\isasymacute}lsz\ t{\isacharparenright}\ \isacommand{THEN}\isanewline
13 \ \quad \quad \quad \quad \quad {\isasymacute}mem{\isacharunderscore}pool{\isacharunderscore}info\ {\isacharcolon}{\isacharequal}\ {\isasymacute}mem{\isacharunderscore}pool{\isacharunderscore}info\ {\isacharparenleft}{\isacharparenleft}pool\ b{\isacharparenright}\ {\isacharcolon}{\isacharequal}\ \isanewline
14 \ \quad \quad \quad \quad \quad \quad \quad remove{\isacharunderscore}free{\isacharunderscore}list\ {\isacharparenleft}{\isasymacute}mem{\isacharunderscore}pool{\isacharunderscore}info\ {\isacharparenleft}pool\ b{\isacharparenright}{\isacharparenright}\ {\isacharparenleft}{\isasymacute}lvl\ t{\isacharparenright}\ {\isacharparenleft}{\isasymacute}block{\isacharunderscore}pt\ t{\isacharparenright}{\isacharparenright}\isanewline
15 \ \quad \quad \quad \quad \isacommand{FI}\isanewline
16 \ \quad \quad \quad \isacommand{ROF}{\isacharsemicolon}{\isacharsemicolon}\isanewline
17 \ \quad \quad \quad {\isasymacute}lvl\ {\isacharcolon}{\isacharequal}\ {\isasymacute}lvl\ {\isacharparenleft}t\ {\isacharcolon}{\isacharequal}\ {\isasymacute}lvl\ t\ {\isacharminus}\ {\isadigit{1}}{\isacharparenright}{\isacharsemicolon}{\isacharsemicolon}\isanewline
18 \ \quad \quad \quad {\isasymacute}bn\ {\isacharcolon}{\isacharequal}\ {\isasymacute}bn\ {\isacharparenleft}t\ {\isacharcolon}{\isacharequal}\ {\isasymacute}bn\ t\ div\ {\isadigit{4}}{\isacharparenright}{\isacharsemicolon}{\isacharsemicolon}\isanewline
19 \ \quad \quad \quad {\isasymacute}mem{\isacharunderscore}pool{\isacharunderscore}info\ {\isacharcolon}{\isacharequal}\ set{\isacharunderscore}bit{\isacharunderscore}freeing\ {\isasymacute}mem{\isacharunderscore}pool{\isacharunderscore}info\ {\isacharparenleft}pool\ b{\isacharparenright}\ {\isacharparenleft}{\isasymacute}lvl\ t{\isacharparenright}\ {\isacharparenleft}{\isasymacute}bn\ t{\isacharparenright}{\isacharsemicolon}{\isacharsemicolon}\isanewline
20 \ \quad \quad \quad {\isasymacute}freeing{\isacharunderscore}node\ {\isacharcolon}{\isacharequal}\ {\isasymacute}freeing{\isacharunderscore}node\ {\isacharparenleft}t\ {\isacharcolon}{\isacharequal}\ Some\ {\isasymlparr}pool\ {\isacharequal}\ {\isacharparenleft}pool\ b{\isacharparenright}{\isacharcomma}\ level\ {\isacharequal}\ {\isacharparenleft}{\isasymacute}lvl\ t{\isacharparenright}{\isacharcomma}\ \isanewline
21 \ \ \ \ \ \ \ \ \ \ \ \ \ \ \ \ \ \ \ \ \ block\ {\isacharequal}\ {\isacharparenleft}{\isasymacute}bn\ t{\isacharparenright}{\isacharcomma}\ 
data\ {\isacharequal}\ block{\isacharunderscore}ptr\ {\isacharparenleft}{\isasymacute}mem{\isacharunderscore}pool{\isacharunderscore}info\ {\isacharparenleft}pool\ b{\isacharparenright}{\isacharparenright}\ \isanewline
22 \ \ \ \ \ \ \ \ \ \ \ \ \ \ \ \ \ \ \ \ \ \ \  {\isacharparenleft}{\isacharparenleft}{\isacharparenleft}ALIGN{\isadigit{4}}\ {\isacharparenleft}max{\isacharunderscore}sz\ {\isacharparenleft}{\isasymacute}mem{\isacharunderscore}pool{\isacharunderscore}info\ {\isacharparenleft}pool\ b{\isacharparenright}{\isacharparenright}{\isacharparenright}{\isacharparenright}\ div\ {\isacharparenleft}{\isadigit{4}}\ {\isacharcircum}\ {\isacharparenleft}{\isasymacute}lvl\ t{\isacharparenright}{\isacharparenright}{\isacharparenright}{\isacharparenright}\ 
{\isacharparenleft}{\isasymacute}bn\ t{\isacharparenright}\ {\isasymrparr}{\isacharparenright}\isanewline
23 \ \quad \quad \isacommand{ELSE}\isanewline
24 \ \quad \quad \quad \isacommand{IF}\ block{\isacharunderscore}fits\ {\isacharparenleft}{\isasymacute}mem{\isacharunderscore}pool{\isacharunderscore}info\ {\isacharparenleft}pool\ b{\isacharparenright}{\isacharparenright}\ {\isacharparenleft}{\isasymacute}blk\ t{\isacharparenright}\ {\isacharparenleft}{\isasymacute}lsz\ t{\isacharparenright}\ \isacommand{THEN}\isanewline
25 \ \quad \quad \quad \quad {\isasymacute}mem{\isacharunderscore}pool{\isacharunderscore}info\ {\isacharcolon}{\isacharequal}\ {\isasymacute}mem{\isacharunderscore}pool{\isacharunderscore}info\ {\isacharparenleft}{\isacharparenleft}pool\ b{\isacharparenright}\ {\isacharcolon}{\isacharequal}\ \isanewline
26 \ \quad \quad \quad \quad \quad append{\isacharunderscore}free{\isacharunderscore}list\ {\isacharparenleft}{\isasymacute}mem{\isacharunderscore}pool{\isacharunderscore}info\ {\isacharparenleft}pool\ b{\isacharparenright}{\isacharparenright}\ {\isacharparenleft}{\isasymacute}lvl\ t{\isacharparenright}\ {\isacharparenleft}{\isasymacute}blk\ t{\isacharparenright}\ {\isacharparenright}\isanewline
27 \ \quad \quad \quad \isacommand{FI}{\isacharsemicolon}{\isacharsemicolon}\isanewline
28 \ \quad \quad \quad {\isasymacute}free{\isacharunderscore}block{\isacharunderscore}r\ {\isacharcolon}{\isacharequal}\ {\isasymacute}free{\isacharunderscore}block{\isacharunderscore}r\ {\isacharparenleft}t\ {\isacharcolon}{\isacharequal}\ False{\isacharparenright}\isanewline
29 \ \quad \quad \isacommand{FI}\isanewline
30 \ \quad \isacommand{END}\isanewline
31 \ \isacommand{OD}
\end{isabellec} 

\caption{The {\slang} Specification of \emph{free\_block}}
\label{fig:freeblock}
\end{flushleft}
\end{figure}

{\figprefix} \ref{fig:freeblock} illustrates the {\slang} specification of the \emph{free\_block} function invoked by \emph{k\_mem\_pool\_free} when releasing a memory block. The code accesses  the following variables: $lsz$, $lsize$, and $lvl$ to keep information about the current level; $blk$, $bn$, and $bb$ to represent the address and number of the block currently being accessed; $freeing\_node$ to represent the node being freeing; and $i$ to iterate blocks. Additionally, the model includes the component $free\_block\_r$ to model the recursion condition. To simplify the representation the model uses predicates and functions to access and modify the state. We refer readers to the Isabelle/HOL sources for the complete specification of these functions and the complete formal model. 

In the C code, \emph{free\_block} is a recursive function with two conditions: (1) the block being released belongs to a level higher than zero, since blocks at level zero cannot be merged; and (2) the partners bits of the block being released are FREE so they can be merged into a bigger block. We represent (1) with the predicate ${\isasymacute}lvl\ t\ {\isachargreater}\ {\isadigit{0}}$ and (2) with the predicate $partner\_bit\_free$. The formal specification follows the same structure translating the recursive function into a loop that is controlled by a variable mimicking the recursion. The recursive process of \emph{free\_block} is illustrated in {\figprefix} \ref{fig:freeblockproc}. 

\begin{figure}[t]
\begin{center}
\includegraphics[width=4.0in]{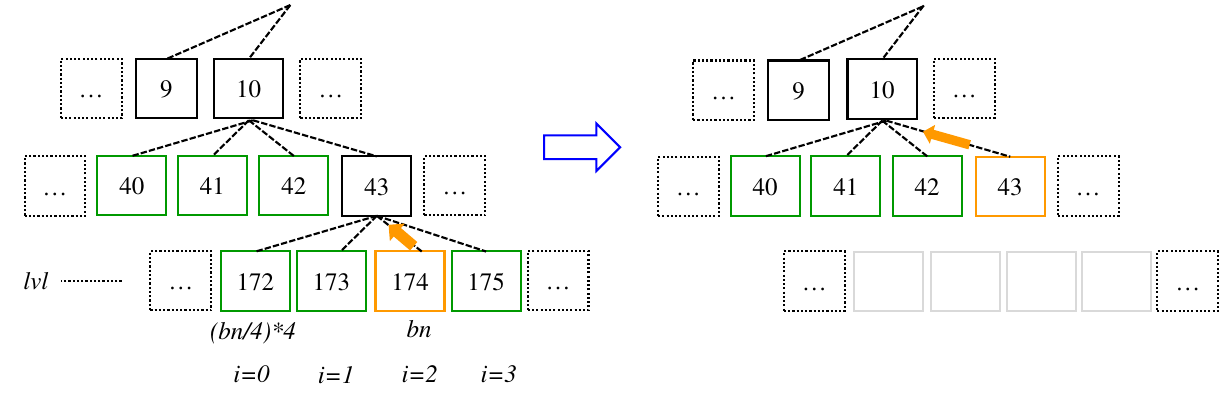}
\end{center}
\caption{Coalescing Memory Blocks in \emph{free\_block}}
\label{fig:freeblockproc}
\end{figure}

The formal specification for  \emph{free\_block} first releases an allocated memory block $bn$ setting it to \emph{FREEING}. Then, the loop statement sets \emph{free\_block} to \emph{FREE} (Line 5 in {\figprefix} \ref{fig:freeblock}), and also checks that the iteration/recursive condition holds in Line 7. If the condition holds, the partner bits are set  to \emph{NOEXIST}, and remove their addresses from the free list for this level (Lines 12 - 15). Then, it sets the parent block bit to \emph{FREEING} (Lines 17 - 22), and updates the variables controlling the current block and level numbers, before going back to the beginning of the loop again. If the iteration condition is not true it sets the bit to \emph{FREE} and add the block to the free list (Lines 24 - 28) and sets the loop condition to false to end the procedure. 
This function is illustrated in {\figprefix} \ref{fig:freeblockproc}. The block $174$ is released by a thread and since its partner blocks (block $172$, $173$ and $175$) are free, Zephyr coalesces the four blocks and sets their parent block $43$ as \emph{FREEING}. The coalescence continues iteratively if the partners of block $43$ are all free. 

The main part of the C code of the \emph{k\_mem\_pool\_free} service and its complete formalization are shown in Appendices \ref{appx:mempoolfree_c} and \ref{appx:mempoolfree} respectively.

\subsubsection{Formal specification of the memory management}
The {\slang} specification of the memory management of Zephyr is finally defined as follows. The events for the scheduler and the timer is simple. The \emph{schedule} event chooses a \emph{READY} thread \emph{t} to be executing and set the current thread to \emph{READY}. The \emph{tick} event just increases the \emph{tick} variable in system state by one. The \emph{tick} variable is used for the \emph{time out} waiting mode for memory allocation. 
\begin{equation}
\label{eq:mem_spec}
\footnotesize
\begin{aligned}
Mem\_Spec \equiv \lambda \symbcore.\ \textbf{case}\ \symbcore\ \textbf{of}\ & (\textbf{\isasymT}\ t) \ \isasymRightarrow\ \bigcup {b}.\ \{free(b)@t\} \cup \bigcup {(p,sz,to)}.\ \{alloc(p,sz,to)@t\} \\
\isacharbar \ & \textbf{\isasymS}\ \isasymRightarrow\ \bigcup {t}.\ \{schedule(t)\} \\
\isacharbar \ & \textbf{Timer}\ \isasymRightarrow\ \{tick\}
\end{aligned}
\end{equation}

\section{Compositional Reasoning about Integrity in {\slang}}
\label{sect:ifs}
In this section, we present a compositional reasoning approach for the verification of integrity in {\slang}. 
We use the notion of integrity from \cite{rushby92,Murray12}, which provides a formalism for the specification of security policies.
We first define the integrity of {\slang} specification. Then, we show that reasoning about the integrity of the system can be decomposed to events. 
For convenience, we first introduce the operational semantics and computations of {\slang} in brief, which is the foundation of the integrity and its compositional reasoning. 

\subsection{Operational Semantics and Computations of {\slang}}
The semantics of {\slang} is defined via transition rules between configurations. We define a configuration $\symbconf$ in {\slang} as a triple  $(\symbspec, \symbstate, \symbevtctx)$ where $\symbspec$ is a specification, $\symbstate: S$ is a system state, and $\symbevtctx : \symbCore \rightarrow \symbEvt$ is an event context. The event context indicates which event is currently being executed in an event system $\symbcore$. $\symbspec_{\symbconf}$, $\symbstate_{\symbconf}$, and $\symbevtctx_{\symbconf}$ represent the projection of each component in the tuple $\symbconf =(\symbspec, \symbstate, \symbevtctx)$. 
Transition rules in events, event systems, and parallel event systems have the form $\transenv (\symbspec_1, \symbstate_1, \symbevtctx_1) \tranb{\symbactk} (\symbspec_2, \symbstate_2, \symbevtctx_2)$, where $\symbactk = \actk{\symbact}{\symbcore}$ is a label indicating the type of transition, the subscript ``$\square$'' ($_e$, $_{es}$ or $_{pes}$) indicates the transition objects, and $\progenv$ is used for some static configuration for programs (e.g. an environment for procedure declarations). Here $\symbact$ indicates a program action $c$ or an occurrence of an event $\symbEvt$. $@\symbcore$ means that the action occurs in event system $\symbcore$. 
Environment transition rules have the form $\transenv (\symbspec, \symbstate, \symbevtctx) \tranb{env} (\symbspec, \symbstate', \symbevtctx')$. Intuitively, a transition made by the environment may change the state but not the event context nor the specification. 
The parallel composition of event systems is fine-grained since small steps in events are interleaved in the semantics of {\slang}. 

A \emph{computation} of {\slang} is a sequence of transitions. 
We define the set of computations of all parallel event systems with static information $\progenv$ as $\compfun(\progenv)$, which is a set of lists of configurations inductively defined as follows. The singleton list is always a computation (1). Two consecutive configurations are part of a computation if they are the initial and final configurations of an environment (2) or action transition (3). 

\[\footnotesize
\left\{
\begin{aligned}
& (1)[(\symbpes, \symbstate, \symbevtctx)] \in  \compfun(\progenv) \\
& (2)(\symbpes, \symbstate_1, \symbevtctx_1)\#cs \in  \compfun(\progenv)  \Longrightarrow (\symbpes, \symbstate_2, \symbevtctx_2)\#(\symbpes, \symbstate_1, \symbevtctx_1)\#cs \in  \compfun(\progenv) \\
& (3)\transenv (\symbpes_2, \symbstate_2, \symbevtctx_2) \trant{\symbactk}{pes} (\symbpes_1, \symbstate_1, \symbevtctx_1) \wedge (\symbpes_1, \symbstate_1, \symbevtctx_1)\#cs \in  \compfun(\progenv) \\
& \quad \quad \quad \Longrightarrow (\symbpes_2, \symbstate_2, \symbevtctx_2)\#(\symbpes_1, \symbstate_1, \symbevtctx_1)\#cs \in  \compfun(\progenv)
\end{aligned}
\right.
\]

Computations for events and event systems are defined in a similar way. We use $\compfun(\progenv,\symbpes)$ to denote the set of computations of a parallel event system $\symbpes$. The function $\compfun(\progenv,\symbpes, \symbstate, \symbevtctx)$ denotes the computations of $\symbpes$ starting up from an initial state $\symbstate$ and event context $\symbevtctx$.

In {\slang}, the semantics is compositional. A computation $\symbcomp$ of $\symbpes$ could be decomposed into a set of computations $\tilde{\symbcomp}$ of its event systems. Computations in $\tilde{\symbcomp}$ have the same state and event context sequence. They do not have component transitions at the same time. $\symbcomp$ also has the same state and event context sequence as $\tilde{\symbcomp}$. Furthermore, in $\symbcomp$ a transition is labelled as $\symbactk$ if this is the label in one of the computations $\tilde{\symbcomp}$ at the corresponding position; a transition is an environment transition if this is the case in all computations $\tilde{\symbcomp}$ at the corresponding position. We use the \emph{conjoin} notation $\symbcomp\ \isasympropto\ \tilde{\symbcomp}$ to present this compositionality, and $\tilde{\symbcomp}^{\symbcore}$ denotes the computation of $\symbcore$. 

\subsection{Integrity in {\slang}}
\label{subsect:ifs}
In general, the definition of integrity relies on a security configuration for a system, which is actually the security policies, as well as a state machine of the system. 

\subsubsection{Security Configuration}
In order to discuss the security of a {\slang} specification $\symbpes$, we assume a set of security domains $\symbDomain$ and a security policy $\interf$ that restricts the allowable flow of information among those domains. The security policy $\interf$ is a reflexive relation on $\symbDomain$. 
$\symbdomain_1 \interf \symbdomain_2$ means that actions performed by $\symbdomain_1$ can influence subsequent outputs seen by $\symbdomain_2$. $\ninterf$ is the complement relation of $\interf$. 
We call $\interf$ and $\ninterf$ the \emph{interference} and \emph{noninterference} relations respectively.

Each event has associated an execution domain responsible of invoking the event. Traditional formulations in information-flow security assume a static mapping from events to domains, such that the domain of an event can be determined solely from the event itself \cite{rushby92,Oheimb04}. For flexibility, we use a dynamic mapping, which is represented by a function $dom\_e: S \times \symbCore \times \symbEvt \rightarrow \symbDomain$, and $dom\_e(\symbstate, \symbcore,\symbevt)$ means the execution domain of event $\symbevt$ on context $\symbcore$ in state $\symbstate$. 
The $\symbpes$ is \emph{view-partitioned} if, for each domain $\symbdomain \in \symbDomain$, there is an equivalent relation $\dsim{\symbdomain}$ on $\symbState$. For convenience, we define $\symbconf_1 \dsim{\symbdomain} \symbconf_2 \defi \symbstate_{\symbconf_1} \dsim{\symbdomain} \symbstate_{\symbconf_2}$.

\subsubsection{State Machine Representation of {\slang} Specification}
The state-event IFS is usually defined on a state machine. Here, we construct an equivalent state machine for a {\slang} specification. 
The state of the machine is the configuration in the {\slang} semantics. The security of {\slang} consider small-step actions of systems. A small-step action in the machine is identified by the label of a transition, the event that the action belongs to, and the domain that triggers the event. A small-step action is thus in the form of a triple $(\symbactk, \symbevt, \symbdomain)$. 
The label of the transition can represent an occurrence of an event $\symbevt_{\symbaction}$, or $c$ when is an internal step of an event already triggered. In that case $\symbevtctx_\symbconf(\symbcore)$ stores the event that is being executed. 
We construct a nondeterministic state machine for a closed {\slang} specification as follows. Here, a closed specification means that we dont consider the environment of the whole system, i.e. the rely condition of the system is the identity relation. 

\begin{definition}
\label{def:statemachine}
A state machine of a specification $\symbpes$ is a quadruple $\symbSM=\langle \symbConf, \symbAction, step, \symbconf_0 \rangle$, where 
\begin{itemize}
\item $\symbConf$ is the set of configurations.

\item $\symbAction$ is the set of actions. An action is a triple $\symbaction = \langle \symbactk, \symbevt, \symbdomain \rangle$, where $\symbactk$ is a transition label in the {\slang} semantics, $\symbevt$ is an event, and $\symbdomain$ is a domain. The notations $\symbactk_{\symbaction}$, $\symbevt_{\symbaction}$ and $\symbdomain_{\symbaction}$ respectively denote the projections of the components of an action $\symbaction$. 

\item $step: \symbAction \rightarrow \mathbb{P}(\symbConf \times \symbConf)$ is the transition function, where $step(\symbaction) = \{(\symbconf,\symbconf') \mid \transenv \symbconf \trant{\symbactk_\symbaction}{pes} \symbconf' \wedge ((\symbactk_\symbaction = \actk{\symbevt_\symbaction}{\symbcore} \wedge dom\_e(\symbstate_\symbconf, \symbcore, \symbevt_\symbaction) = \symbdomain_\symbaction) \vee (\symbactk_\symbaction = \actk{\symbpcomp}{\symbcore} \wedge \symbevt_\symbaction = \symbevtctx_\symbconf(\symbcore) \wedge dom\_e(\symbstate_\symbconf, \symbcore, \symbevt_\symbaction) = \symbdomain_\symbaction))\}$. 

\item $\symbconf_0$ is the initial configuration $(\sharp_0, s_0,x_0)$.
\end{itemize}
\end{definition}

Based on the function $step$, we define the function $run$ as follows to represent the execution of a sequence of actions. 
\[
\left\{
\begin{aligned}
& run(Nil) = Id \\
& run(\symbaction \# \symbactions) = step(\symbaction) \circ run(\symbactions)
\end{aligned}
\right.
\]

We prove the following lemma to ensure that the state machine is an equivalent representation of the {\slang} specification. 

\begin{lemma}[Equivalence of {\slang} and Its State Machine Representation]
\label{lm:sm_equiv}
The state machine defined in {\defprefix} \ref{def:statemachine} is an equivalent representation of {\slang}, i.e., 
\begin{itemize}
\item If $(\symbconf_1,\symbconf_2) \in run(\symbactions)$, then $\exists \symbcomp. \ \symbcomp \in \compfun(\progenv,\symbpes) \wedge \symbcomp_0 = \symbconf_1 \wedge last(\symbcomp) = \symbconf_2 \wedge (\forall j < len(\symbcomp) - 1. \ \transenv \symbcomp_j \trant{\symbactk_{\symbactions_j}}{pes} \symbcomp_{(j+1)})$, and 
\item If $\symbcomp \in \compfun(\progenv,\symbpes) \wedge \symbcomp_0 = \symbconf_1 \wedge last(\symbcomp) = \symbconf_2 \wedge (\forall j < len(\symbcomp) - 1. \ \neg (\transenv \symbcomp_j \tranenvpes \symbcomp_{(j+1)}))$, then $\exists \symbactions. \ (\symbconf_1,\symbconf_2) \in run(\symbactions) \wedge (\forall j < len(\symbcomp) - 1. \ \transenv \symbcomp_j \trant{\symbactk_{\symbactions_j}}{pes} \symbcomp_{(j+1)})$
\end{itemize}
where $\compfun(\progenv,\symbpes)$ is the set of computations of a parallel event system $\symbpes$.
\end{lemma}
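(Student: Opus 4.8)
The plan is to prove the two implications separately, and in each case to proceed by induction --- on the structure of the action list $\symbactions$ for the first direction, and on the length of the computation $\symbcomp$ for the second. The two directions are mirror images, the crux in each being the faithful translation between a single application of $step$ and a single non-environment transition $\transenv \symbconf \trant{\symbactk}{pes} \symbconf'$ of the {\slang} semantics, so the technical heart reduces to a correspondence at the level of one step, lifted by routine induction.

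For the first direction I would unfold the recursive definition of $run$. In the base case $\symbactions = Nil$ we have $run(Nil) = Id$, hence $\symbconf_1 = \symbconf_2$, and the singleton computation $[\symbconf_1] \in \compfun(\progenv,\symbpes)$ witnesses the claim by rule $(1)$ of the computation definition, the trailing universal condition holding vacuously. In the step case $\symbactions = \symbaction \# \symbactions'$, membership $(\symbconf_1,\symbconf_2) \in step(\symbaction) \circ run(\symbactions')$ yields an intermediate $\symbconf'$ with $(\symbconf_1,\symbconf') \in step(\symbaction)$ and $(\symbconf',\symbconf_2) \in run(\symbactions')$. By the definition of $step$ the first membership gives precisely an action transition $\transenv \symbconf_1 \trant{\symbactk_\symbaction}{pes} \symbconf'$; I would apply the induction hypothesis to $\symbactions'$ to obtain a computation from $\symbconf'$ to $\symbconf_2$ whose steps are all action transitions labelled according to $\symbactions'$, and then prepend $\symbconf_1$ using rule $(3)$. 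The label of the new leading step is $\symbactk_{\symbactions_0}$ and the remaining labels match by the induction hypothesis, discharging the universal condition after a shift of indices.

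For the second direction I would induct on $\symbcomp$. The base case is the singleton computation, for which $\symbconf_1 = \symbconf_2$ and $\symbactions = Nil$ works since $run(Nil) = Id$. In the inductive case the leading step $\transenv \symbconf_1 \trant{\symbactk}{pes} \symbconf''$ is, by the hypothesis that no step of $\symbcomp$ is an environment transition, an action transition; I would recover the missing components of the action triple by a case split on $\symbactk$: if $\symbactk = \actk{\symbevt}{\symbcore}$ is an event occurrence I take that $\symbevt$ and set $\symbdomain = dom\_e(\symbstate_{\symbconf_1},\symbcore,\symbevt)$, and if $\symbactk = \actk{\symbpcomp}{\symbcore}$ is an internal step I take $\symbevt = \symbevtctx_{\symbconf_1}(\symbcore)$ with the same domain assignment. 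In either case these choices are exactly those that make $(\symbconf_1,\symbconf'') \in step(\symbaction)$ hold. Applying the induction hypothesis to the tail from $\symbconf''$ to $\symbconf_2$ gives $\symbactions'$ with $(\symbconf'',\symbconf_2) \in run(\symbactions')$, and then $\symbaction \# \symbactions'$ witnesses the claim by the recursive equation for $run$.

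The main obstacle I anticipate lies in the internal-step case of the second direction: recovering the event from the context via $\symbevtctx_{\symbconf_1}(\symbcore)$ is sound only because the {\slang} operational semantics maintains the invariant that a $c$-labelled transition of event system $\symbcore$ is always a step of precisely the event currently stored in $\symbevtctx(\symbcore)$. I would therefore need this well-formedness property of the semantics as a supporting lemma, so that the conjunct $\symbevt_\symbaction = \symbevtctx_\symbconf(\symbcore)$ in the definition of $step$ is genuinely satisfiable. The remaining difficulties --- keeping the index correspondence $\symbactions_j \leftrightarrow$ the $j$-th transition aligned through the prepend operations, and matching the relational-composition convention used in $run$ with the order in which the computation is built --- are routine bookkeeping.
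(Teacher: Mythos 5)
Your proposal is correct; the paper states this lemma without a written proof (it is discharged in the Isabelle/HOL sources), and your two structural inductions --- on the action list for one direction and on the computation for the other, with the one-step correspondence read off directly from the definition of $step$ and the inductive definition of computations --- are exactly the expected argument. One remark: the obstacle you anticipate in the internal-step case is not a real one, since the conjunct $\symbevt_\symbaction = \symbevtctx_\symbconf(\symbcore)$ in the definition of $step$ is satisfied by construction once you choose $\symbevt_\symbaction := \symbevtctx_{\symbconf_1}(\symbcore)$; no well-formedness property of the semantics is required for this equivalence (an event-context invariant of that kind is needed only later, in {\lemmaprefix}~\ref{lm:evtctx_consist}, which the paper proves separately).
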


We consider closed specifications where there is no environment transition in the computations of $\symbpes$, i.e., $\forall j < len(\symbcomp) - 1. \ \neg (\transenv \symbcomp_j \tranenvpes \symbcomp_{(j+1)})$. 

\subsubsection{Integrity}

Following the definitions of integrity in \cite{rushby92,Murray12}, we define an intuitive integrity property in {\slang} as follows, which concerns the small-step execution of the system.

\begin{definition}[Integrity]
\label{def:integrity}
Integrity of a parallel event system $\symbpes$ with a static configuration $\progenv$ from the initial state $\symbstate_0$ and $\symbevtctx_0$ is defined as 
\[
\forall \symbaction, \symbdomain, \symbconf, \symbconf'. \ \reachablef(\symbconf) \wedge  
\symbdomain_\symbaction \ninterf \symbdomain \wedge (\symbconf, \symbconf') \in step(\symbaction) \longrightarrow \symbconf \dsim{\symbdomain} \symbconf'
\]

where $\reachablef(\symbconf) \equiv \exists as.\ (\symbconf_0,\symbconf) \in run(as)$ is a function to show that configuration $\symbconf$ is reachable from the initial configuration of the system $\symbconf_0$.
\end{definition}

The intuition of integrity is that a small-step action $\symbaction$ executed in a configuration $\symbconf$ can only affect those
domains for which the domain executing $\symbaction$ is allowed to send information.

\subsection{Compositional Reasoning}
\label{subsect:compreason}

In order to decompose the integrity reasoning of the system to its events, we define a form of integrity on event as follows. 

\begin{definition}[Integrity on Events]
\label{def:integrity_evt}
The integrity on the events composing a parallel event system $\symbpes$ is defined as
\[
\forall \symbevt \ \symbdomain \ \symbstate \ \symbstate' \ \symbcore. \ \symbevt \in evts(\symbpes) \wedge (\symbstate, \symbstate') \in G_{\Gamma(\symbevt)} \wedge (dom\_e(\symbstate, \symbcore, \symbevt) \ninterf d) \longrightarrow \symbstate \dsim{\symbdomain} \symbstate'
\]
\end{definition}

where the $evts(\symbpes)$ function returns all the events defined in a specification $\symbpes$. We assume a function $\Gamma: evts(\symbpes) \rightarrow RGCond$, where $RGCond$ is the type of the rely-guarantee specification, to specify the rely-guarantee specification of events in $\symbpes$. $G_{\Gamma(\symbevt)}$ is the guarantee condition in the rely-guarantee specification of an event $\symbevt$. 

The integrity on events requires that when an event $\symbevt$ is executed, the interaction of $\symbevt$ with the environment affects only those domains for which the domain executing $\symbevt$ is allowed to send information to, according to the relation  $\interf$. Different from the integrity on actions of parallel in {\defprefix} \ref{def:integrity}, the integrity here considers the global effects of events to the environment.  

Next, we show the compositionality of integrity, i.e. the integrity on events implies the integrity on parallel event systems. 
First, {\lemmaprefix}~\ref{lm:evtctx_consist} shows the consistency of the event context in computations under a closed $\symbpes$, i.e.,  program transitions of $\symbpes$, $\transenv \symbcomp_i \trant{\actk{c}{\symbcore}}{pes} \symbcomp_{i+1}$, should be a transition of the event currently being under execution.  

\begin{lemma}
\label{lm:evtctx_consist}
For any closed $\symbpes$, if  $\forall \symbevt \in evts(\symbpes). \ is\_basic(\symbevt)$, that is, all events in $\symbpes$ are basic events,
then for any computation $\symbcomp$ of $\symbpes$, we have 
\[
\forall i < len(\symbcomp) - 1, \symbcore. \ (\transenv \symbcomp_i \trant{\actk{c}{\symbcore}}{pes} \symbcomp_{i+1}) \longrightarrow (\exists \symbevt \in evts(\symbpes). \ \symbevtctx_{\symbcomp_i}(\symbcore) = \symbevt)
\]
\end{lemma}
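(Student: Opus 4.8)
The plan is to prove the statement by induction on the length of the computation $\symbcomp$, after strengthening it into an invariant that ties the event context $\symbevtctx$ to the shape of the specification component at every configuration. First I would isolate the right invariant. For a configuration $\symbconf$ and component $\symbcore$, say that $\symbcore$ is \emph{active} in $\symbconf$ when the $\symbcore$-projection of the specification $\symbspec_\symbconf$ is executing an event body, i.e. it has the triggered-event form $\anonevt{\symbprog}$ rather than sitting at an event-selection point. The invariant I would maintain along the computation is: for every $\symbcore$, if $\symbcore$ is active in $\symbcomp_i$ then $\symbevtctx_{\symbcomp_i}(\symbcore) \in evts(\symbpes)$. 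This is the crucial bridge, because a program-labelled transition $\transenv \symbcomp_i \trant{\actk{c}{\symbcore}}{pes} \symbcomp_{i+1}$ can be derived only from an active $\symbcore$; hence the invariant immediately yields the conclusion $\exists \symbevt \in evts(\symbpes). \ \symbevtctx_{\symbcomp_i}(\symbcore) = \symbevt$.

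Second, I would verify that the invariant is preserved, by case analysis on each transition of the computation, using the compositionality $\compconjoin{\symbcomp}{\tilde{\symbcomp}}$ to reduce each parallel step to the responsible component $\symbcore$. The base case (singleton computation) is vacuous. An environment transition changes neither the specification nor the context, so the invariant is trivially preserved; this is exactly where closedness is convenient. An event-triggering step $\actk{\symbEvt}{\symbcore}$ sets $\symbevtctx(\symbcore)$ to the event just triggered and makes $\symbcore$ active; since by hypothesis every event of $\symbpes$ is basic, the triggered body descends from a basic event in $evts(\symbpes)$, re-establishing the invariant for $\symbcore$ while leaving all other components untouched. A program step $\actk{c}{\symbcore}$ arises from an already active $\symbcore$, does not modify the event context, and keeps $\symbcore$ active (or terminates it), so the induction hypothesis carries the invariant through.

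Third, with the invariant established for all $i$, the conclusion is read off directly: at any index $i$ carrying a $\actk{c}{\symbcore}$ transition, the derivation forces $\symbcore$ to be active in $\symbcomp_i$, and the invariant supplies a basic event $\symbevt \in evts(\symbpes)$ with $\symbevtctx_{\symbcomp_i}(\symbcore) = \symbevt$.

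The hard part will be the semantic bookkeeping hidden in the second step: proving, from the transition rules of event systems and their parallel composition, that a $\actk{c}{\symbcore}$ label can be produced \emph{only} when the $\symbcore$-component has the triggered-event form, and that such a step never rewrites the event context for $\symbcore$. This needs an induction over the derivation of the parallel transition, projected onto $\symbcore$ via $\compconjoin{\symbcomp}{\tilde{\symbcomp}}$, cleanly separating the event-selection rule (which is the unique writer of the context) from the body-execution rules (which are context-preserving). The assumption $\forall \symbevt \in evts(\symbpes). \ is\_basic(\symbevt)$ is essential precisely here: it guarantees that no pre-triggered event is present initially, so every triggered event encountered during execution descends from a basic event actually listed in $evts(\symbpes)$, which is what upgrades the conclusion from ``some event'' to ``an event in $evts(\symbpes)$''.
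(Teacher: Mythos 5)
Your proposal is correct and is essentially the paper's own argument: both reduce the parallel step to the responsible event system via the conjoin relation $\compconjoin{\symbcomp}{\tilde{\symbcomp}}$, and both hinge on exactly the same semantic facts --- that a $\actk{c}{\symbcore}$ transition can only arise from a triggered (active) component, that the event context for $\symbcore$ is written solely by event-occurrence transitions of $\symbcore$, and that basicness of all events in $evts(\symbpes)$ forces such an occurrence to precede any program step. The only difference is direction of bookkeeping: you run a forward induction with the strengthened invariant ``active implies $\symbevtctx(\symbcore) \in evts(\symbpes)$,'' while the paper traces backwards from index $i$ to the latest occurrence transition at some $m < i$ and notes that the intervening transitions preserve the context --- the same argument read in opposite directions, with yours being the more mechanization-ready rendering.
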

\begin{proof}
For the computation $\symbcomp$, we have its conjoined computations $\tilde{\symbcomp}$ such that $\symbcomp\ \isasympropto\ \tilde{\symbcomp}$. Hence, for a program transition $\transenv \symbcomp_i \trant{\actk{c}{\symbcore}}{pes} \symbcomp_{i+1}$, we have that $\transenv \tilde{\symbcomp}_i^{\symbcore} \tranes{c}{\symbcore} \tilde{\symbcomp}_{i+1}^{\symbcore}$. Since, all events in $\symbpes$ are basic events, all events in the event system $\symbpes(\symbcore)$ are basic events too. Thus, there must be an event occurrence transition $\transenv \tilde{\symbcomp}_m^{\symbcore} \tranes{\symbevt}{\symbcore} \transenv \tilde{\symbcomp}_{m+1}^{\symbcore}$ where $m < i$ and $\symbevt \in evts(\symbpes)$. The transition sets the event context of $\symbcore$ to $\symbevt$, and all transitions between $m$ and $i$ are program transitions or environment transitions which will not change the event context. 
\end{proof}

Second, {\lemmaprefix} \ref{lm:guar_consist} shows the compositionality of guarantee conditions of events in a valid and closed parallel event system, i.e., any component transition must preserve the guarantee condition of the current event.

\begin{lemma}
\label{lm:guar_consist}
For any closed $\symbpes$, if 
\begin{enumerate}
\item $\symbconf_0 = (\symbpes, \symbstate_0, \symbevtctx_0)$.
\item events in $\symbpes$ are basic events, i.e., $\forall \symbevt \in evts(\symbpes). \ is\_basic(\symbevt)$.
\item events in $\symbpes$ satisfy their rely-guarantee specification, i.e., $\forall \symbevt \in evts(\symbpes). \ \rgsat{\symbevt}{\Gamma(\symbevt)}$. 
\item from (3) we have $\rgsat{\symbpes}{\rgcond{\{\symbstate_0\}}{\{\}}{UNIV}{UNIV}}$.
\end{enumerate}
then for any computation $\symbcomp \in \compfun(\symbpes, \symbstate_0, \symbevtctx_0)$, we have 
\[
\forall i < len(\symbcomp) - 1, \symbcore. \ (\transenv \symbcomp_i \trant{\actk{c}{\symbcore}}{pes} \symbcomp_{i+1}) \longrightarrow (\symbstate_{\symbcomp_i}, \symbstate_{\symbcomp_{i+1}}) \in G_{\Gamma(\symbevtctx_{\symbcomp_i}(\symbcore))}
\]
\end{lemma}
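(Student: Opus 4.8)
The plan is to lift the conclusion, which is a statement about a program step of the whole parallel system, down to the single event currently recorded in the event context, and then read off its guarantee from that event's own rely-guarantee validity. First I would use the compositionality of the {\slang} semantics: for the computation $\symbcomp$ there exist conjoined component computations $\tilde{\symbcomp}$ with $\symbcomp \propto \tilde{\symbcomp}$, all sharing the same state and event-context sequences. A program transition $\transenv \symbcomp_i \trant{\actk{c}{\symbcore}}{pes} \symbcomp_{i+1}$ of the parallel system then projects to a program transition $\transenv \tilde{\symbcomp}_i^{\symbcore} \tranes{c}{\symbcore} \tilde{\symbcomp}_{i+1}^{\symbcore}$ in the computation of the event system $\symbcore$, while every other component takes an environment step at position $i$.

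Next I would localise this step to a single event. By {\lemmaprefix}~\ref{lm:evtctx_consist} (all events are basic by hypothesis~2), the context satisfies $\symbevtctx_{\symbcomp_i}(\symbcore) = \symbevt$ for some $\symbevt \in evts(\symbpes)$, and, exactly as in that lemma's proof, there is an earlier event-occurrence transition at some $m < i$ that triggered $\symbevt$ with no context change in $\symbcore$ between $m$ and $i$. The segment of $\tilde{\symbcomp}^{\symbcore}$ running from $m$ until $\symbevt$ terminates is therefore a computation of the single event $\symbevt$, and the program step at $i$ is one of its action transitions.

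To apply the validity of $\symbevt$ I must show this segment lies in the assumption of $\Gamma(\symbevt)$, i.e. the state at $m$ is in $pre_{\Gamma(\symbevt)}$ and every environment step of the segment is in $R_{\Gamma(\symbevt)}$. This is where hypothesis~4 is used: the derivation $\rgsat{\symbpes}{\rgcond{\{\symbstate_0\}}{\{\}}{UNIV}{UNIV}}$ by the \textsc{Par} rule supplies per-component specifications $\rgcond{pres_\symbcore}{Rs_\symbcore}{Gs_\symbcore}{psts_\symbcore}$ together with the interference premises $\{\} \subseteq Rs_\symbcore$ and $Gs_{\symbcore'} \subseteq Rs_\symbcore$ for $\symbcore' \neq \symbcore$. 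Since the system is closed its genuine environment is the identity, so the only environment steps seen by $\symbcore$ are action steps of the other components, which by the interference premise fall in $Rs_\symbcore$; the event-system proof rule then transfers $Rs_\symbcore$ and $pres_\symbcore$ to the rely and pre-condition of each event $\symbcore$ may trigger, so the segment of $\symbevt$ satisfies the assumption of $\Gamma(\symbevt)$ at $m$. Finally, soundness applied to hypothesis~3 gives $\RGSAT{\symbevt}{\Gamma(\symbevt)}$, and applying validity to the segment yields that every action transition, in particular the one at $i$, belongs to $G_{\Gamma(\symbevt)} = G_{\Gamma(\symbevtctx_{\symbcomp_i}(\symbcore))}$, as required.

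I expect the main obstacle to be the step establishing that the environment steps of each component respect $Rs_\symbcore$: this is the classic rely-guarantee circularity, where bounding one component's guarantee needs the other components' guarantees in turn, and it is resolved by an induction on the length of the computation prefix. In the Isabelle/HOL development I would not redo this from scratch but reuse the already-proven soundness of the \textsc{Par} rule, so that the remaining work is chiefly the bookkeeping that aligns the event-system-level rely and pre-condition with the event-level assumption of $\Gamma(\symbevt)$ at the trigger point $m$, and that confirms $\symbevt$'s guarantee is indeed the one indexed by the event context.
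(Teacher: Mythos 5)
Your proposal is correct and follows essentially the same route as the paper's proof: both project the parallel program step onto the component event system via the conjoin decomposition $\symbcomp \propto \tilde{\symbcomp}$, localize that step to the currently executing event using {\lemmaprefix}~\ref{lm:evtctx_consist}, and then read off the guarantee from that event's rely-guarantee specification, whose applicability inside system computations is secured by hypothesis (4). The only divergence is presentational: the paper packages the final step as a structural induction on the event-system specification (event set versus event sequence) and compresses the key compositionality consequence of (4) into a single sentence, whereas you unfold that consequence explicitly through the \textsc{Par}-rule premises, the closed-system argument, and soundness — the same content at a finer level of detail.
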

\begin{proof}
From assumption (4), we know that the rely-guarantee specifications of each event in assumption (3) are compositional, i.e. the execution of each event in all computations of $\compfun(\symbpes, \symbstate_0, \symbevtctx_0)$ preserves its specification. 
For the computation $\symbcomp$, we have its conjoined computations $\tilde{\symbcomp}$ such that $\symbcomp\ \isasympropto\ \tilde{\symbcomp}$. Hence, for a program transition $\transenv \symbcomp_i \trant{\actk{c}{\symbcore}}{pes} \symbcomp_{i+1}$, we have that $\transenv \tilde{\symbcomp}_i^{\symbcore} \tranes{c}{\symbcore} \tilde{\symbcomp}_{i+1}^{\symbcore}$. Next, we apply induction on $\symbspec_{\tilde{\symbcomp}_{0}^{\symbcore}}$:
\begin{enumerate}

\item $\symbspec_{\tilde{\symbcomp}_{0}^{\symbcore}} = \evtsysdef$: the execution of an event system is a sequence of executions of its composed events. Thus, the program transition is a transition of its one event. We split the computation $\tilde{\symbcomp}^{\symbcore}$ to a set of computations of events, and the program transition is in a computation of an event $\symbevt \in \evtsysdef$. By {\lemmaprefix} \ref{lm:evtctx_consist} and $\rgsat{\symbevt}{\Gamma(\symbevt)}$, we have the conclusion. 

\item $\symbspec_{\tilde{\symbcomp}_{0}^{\symbcore}} = \evtseq{\symbevt}{\symbevtsys}$: first, if the transition is of the execution of $\symbevt$, we have $\symbevtctx_{\tilde{\symbcomp}_i^{\symbcore}}(\symbcore) = \symbevt$ by {\lemmaprefix} \ref{lm:evtctx_consist} and the semantics of event occurrence. Moreover, the execution of $\evtseq{\symbevt}{\symbevtsys}$ before position $i$ is the same as of $\symbevt$ and $\rgsat{\symbevt}{\Gamma(\symbevt)}$. We have $(\symbstate_{\symbcomp_i}, \symbstate_{\symbcomp_{i+1}}) \in G_{\Gamma(\symbevtctx_{\symbcomp_i}(\symbcore))}$. Second, if the transition is of the execution of $\symbevtsys$, we have the conclusion by the inductive case (1). 

\end{enumerate}
\end{proof}

By these two lemmas and the equivalence in {\lemmaprefix} \ref{lm:sm_equiv}, we have the following theorem for the compositionality of integrity. 

\begin{theorem}[Compositionality of Integrity]
\label{thm:comp_integrity}
For a closed parallel event system $\symbpes$, if 
\begin{enumerate}
\item $\symbconf_0 = (\symbpes, \symbstate_0, \symbevtctx_0)$.
\item events in $\symbpes$ are basic events, i.e., $\forall \symbevt \in evts(\symbpes). \ is\_basic(\symbevt)$.
\item events in $\symbpes$ satisfy their rely-guarantee specification, i.e., $\forall \symbevt \in evts(\symbpes). \ \rgsat{\symbevt}{\Gamma(\symbevt)}$. 
\item $\rgsat{\symbpes}{\rgcond{\{\symbstate_0\}}{\{\}}{UNIV}{UNIV}}$.
\item $\symbpes$ satisfies the integrity on its events. 
\end{enumerate}
then $\symbpes$ preserves the integrity property. 
\end{theorem}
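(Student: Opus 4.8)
The plan is to unfold the integrity property of {\defprefix}~\ref{def:integrity} and reduce the claim about a single abstract $step(\symbaction)$ to a statement about one transition occurring inside a concrete {\slang} computation of $\symbpes$, where {\lemmaprefix}~\ref{lm:guar_consist} and the integrity-on-events hypothesis can be applied. Concretely, I fix arbitrary $\symbaction$, $\symbdomain$, $\symbconf$, $\symbconf'$ satisfying $\reachablef(\symbconf)$, $\symbdomain_\symbaction \ninterf \symbdomain$, and $(\symbconf, \symbconf') \in step(\symbaction)$, and must show $\symbconf \dsim{\symbdomain} \symbconf'$, i.e. $\symbstate_\symbconf \dsim{\symbdomain} \symbstate_{\symbconf'}$. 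From $\reachablef(\symbconf)$ I obtain $as$ with $(\symbconf_0, \symbconf) \in run(as)$, and the first direction of {\lemmaprefix}~\ref{lm:sm_equiv} turns this into a computation of $\symbpes$ from $\symbconf_0$ ending at $\symbconf$. Extending this computation by the action transition guaranteed by $(\symbconf, \symbconf') \in step(\symbaction)$ (computations are closed under appending a valid transition) yields a $\symbcomp \in \compfun(\progenv, \symbpes)$ with $\symbcomp_0 = \symbconf_0$ whose final transition is $\transenv \symbconf \trant{\symbactk_\symbaction}{pes} \symbconf'$, sitting at a definite position $i$ with $\symbcomp_i = \symbconf$ and $\symbcomp_{i+1} = \symbconf'$.

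Next I case-split on the two disjuncts in the definition of $step$. If $\symbactk_\symbaction = \actk{\symbevt_\symbaction}{\symbcore}$ is an event-occurrence label, then the corresponding {\slang} transition merely records the triggered event in the event context and starts its body, leaving the system state unchanged; hence $\symbstate_\symbconf = \symbstate_{\symbconf'}$ and $\symbconf \dsim{\symbdomain} \symbconf'$ follows from reflexivity of $\dsim{\symbdomain}$. If instead $\symbactk_\symbaction = \actk{\symbpcomp}{\symbcore}$ is a program label, the step definition gives $\symbevt_\symbaction = \symbevtctx_\symbconf(\symbcore)$ and $\symbdomain_\symbaction = dom\_e(\symbstate_\symbconf, \symbcore, \symbevt_\symbaction)$. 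I then apply {\lemmaprefix}~\ref{lm:guar_consist} to $\symbcomp$ at position $i$ (its premises are precisely hypotheses (1)--(4) of the theorem) to obtain $(\symbstate_\symbconf, \symbstate_{\symbconf'}) \in G_{\Gamma(\symbevtctx_\symbconf(\symbcore))}$. By {\lemmaprefix}~\ref{lm:evtctx_consist} the active event $\symbevtctx_\symbconf(\symbcore)$ is a genuine basic event of $\symbpes$, and it equals $\symbevt_\symbaction$, so the state step lies in $G_{\Gamma(\symbevt_\symbaction)}$ with $\symbevt_\symbaction \in evts(\symbpes)$. Combining this with $dom\_e(\symbstate_\symbconf, \symbcore, \symbevt_\symbaction) = \symbdomain_\symbaction \ninterf \symbdomain$, the integrity-on-events assumption (hypothesis (5), {\defprefix}~\ref{def:integrity_evt}) fires directly and delivers $\symbstate_\symbconf \dsim{\symbdomain} \symbstate_{\symbconf'}$. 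Both branches close the goal.

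I expect the main obstacle to be the bookkeeping that links the abstract $step(\symbaction)$ to the computation-level {\lemmaprefix}~\ref{lm:guar_consist}: one must check, through {\lemmaprefix}~\ref{lm:sm_equiv}, that the transition of interest really occurs at a well-defined index $i$ of a legitimate computation of $\symbpes$, that this computation inherits the closedness and rely-guarantee premises of {\lemmaprefix}~\ref{lm:guar_consist}, and that the event recorded in $\symbevtctx_\symbconf(\symbcore)$ is exactly the event $\symbevt_\symbaction$ named by the action---the precise role of {\lemmaprefix}~\ref{lm:evtctx_consist}. A secondary subtlety lies in the event-occurrence case, which rests on the semantic fact that triggering an event does not modify the state; this has to be read off the operational semantics, since {\lemmaprefix}~\ref{lm:guar_consist} constrains only program ($\actk{\symbpcomp}{\symbcore}$) transitions and says nothing about occurrence transitions.
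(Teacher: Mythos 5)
Your proof is correct and follows essentially the same route as the paper, whose own argument is only the sketch ``by these two lemmas and the equivalence in {\lemmaprefix}~\ref{lm:sm_equiv}'': you use {\lemmaprefix}~\ref{lm:sm_equiv} to turn the abstract $step$ transition into a transition inside a concrete computation from $\symbconf_0$, apply {\lemmaprefix}~\ref{lm:evtctx_consist} and {\lemmaprefix}~\ref{lm:guar_consist} to place the program-transition state change in $G_{\Gamma(\symbevt_\symbaction)}$ for a genuine event $\symbevt_\symbaction \in evts(\symbpes)$, and close with the integrity-on-events hypothesis. Your explicit handling of the event-occurrence case (such transitions leave the state unchanged, so reflexivity of $\dsim{\symbdomain}$ suffices) is a detail the paper leaves implicit but is exactly what its sketch requires, since {\lemmaprefix}~\ref{lm:guar_consist} constrains only $\actk{\symbpcomp}{\symbcore}$ transitions.
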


We require that all events in $\symbpes$ are basic events to ensure the event context in computations of $\symbpes$ is consistent before the execution of an event. It is a reasonable assumption since anonymous events are only used to represent intermediate specifications during the execution of events and they do not change modify the event context. 
The assumption (4) is to ensure the compositionality of the rely-guarantee specifications of each event in assumption (3), i.e. the execution of each event in all computations of $\compfun(\symbpes, \symbstate_0, \symbevtctx_0)$ preserves its specification. It is a highly relaxed condition and it is easy to prove. First, we only consider closed concurrent systems starting from the initial state $\symbstate_0$. Thus, the pre-condition only has the initial state and the rely condition is empty. Second, we do not constrain the behaviour of the parallel event system, thus the guarantee condition is the universal set. 
Third, the integrity only concerns the action transition, but not the final state. Thus, the post-condition is the universal set.  

\section{Rely-guarantee Proof of Zephyr}
\label{sect:proof}

We have proven correctness of the buddy memory management in Zephyr using the rely-guarantee proof system of {\slang}. 
We ensure functional correctness of each kernel service w.r.t. the defined pre/post conditions, termination of loop statements in the kernel services, the separation of local variables of threads, safety by invariant preservation, and security by memory separation. 
In this section, we introduce how these properties are specified and verified using the {\slang} rely-guarantee proof system.

Actually, the safety and security properties verified in this article can be embedded in the guarantee conditions of events of the memory management specification. In this section, we first present the rely-guarantee specification of events. 

\subsection{Correctness of the Specification}
\label{subsect:corspec}


Using the compositional reasoning of {\slang}, correctness of Zephyr memory management can be specified and verified with the rely-guarantee specification of each event. 




The guarantee conditions for both memory services are the same, which is defined as:

\begin{isabellec}
\isacommand{Mem{\isacharunderscore}pool{\isacharunderscore}guar}\ t\ {\isasymequiv} 
$\overbrace{Id}^{(1)}$
 \ {\isasymunion}\ {\isacharparenleft}
$\overbrace{gvars\_conf\_stable}^{(2)}$
 {\isasyminter}
 
\quad {\isacharbraceleft}{\isacharparenleft}s{\isacharcomma}r{\isacharparenright}{\isachardot}\ {\isacharparenleft}
$\overbrace{cur\ s\ {\isasymnoteq}\ Some\ t\ {\isasymlongrightarrow}\ gvars{\isacharunderscore}nochange\ s\ r\ {\isasymand}\ lvars{\isacharunderscore}nochange\ t\ s\ r}^{(3.1)}$
{\isacharparenright}

\quad \quad \quad {\isasymand}\ {\isacharparenleft}
$\overbrace{cur\ s\ {\isacharequal}\ Some\ t\ {\isasymlongrightarrow}\ inv\ s\ {\isasymlongrightarrow}\ inv\ r}^{(3.2)}$
{\isacharparenright}\ {\isasymand}\ {\isacharparenleft}
$\overbrace{{\isasymforall}t{\isacharprime}{\isachardot}\ t{\isacharprime}\ {\isasymnoteq}\ t\ {\isasymlongrightarrow}\ lvars{\isacharunderscore}nochange\ t{\isacharprime}\ s\ r}^{(4)}$
{\isacharparenright}

\quad \quad \quad {\isasymand}\ {\isacharparenleft}
$\overbrace{{\isasymforall}t{\isacharprime}{\isachardot}\ t{\isacharprime}\ {\isasymnoteq}\ t\ {\isasymlongrightarrow}\ mblocks\ s\ t{\isacharprime} = mblocks\ r\ t{\isacharprime}}^{(5)}$
{\isacharparenright}{\isacharbraceright}
{\isasyminter} $\overbrace{{\isasymlbrace}{\isasymordmasculine}tick\ {\isacharequal}\ {\isasymordfeminine}tick{\isasymrbrace}
{\isacharparenright}}^{(6)}$
\isanewline
\end{isabellec}

This relation states that \emph{alloc} and \emph{free} services may not change the state (1), e.g., a blocked await or selecting branch on a conditional statement. If it changes the state then: (2) the static configuration of memory pools in the model do not change; (3.1) if the scheduled thread is not the thread invoking the event then variables for that thread do not change (since it is blocked in an \emph{Await} as explained in {\sectprefix} \ref{sect:mem_spec}); (3.2) if it is, then the relation preserves the memory invariant, and consequently each step of the event needs to preserve the invariant, where $inv$ is the conjunction of all invariant properties defined in {\sectprefix} \ref{subsect:inv}; (4) a thread does not change the local variables of other threads; (5) it is the memory separation property at the thread level, which means that a thread does not change the allocated memory blocks of other threads; and (6) threads do not change the \emph{tick} of the timer. Here, ${\isasymlbrace}{\isasymordmasculine}tick\ {\isacharequal}\ {\isasymordfeminine}tick{\isasymrbrace}$ defines a set of state pairs, where operator ${\isasymordmasculine}tick$ represents the $tick$ field in the first state of a pair and ${\isasymordfeminine}tick$ in the second state. It is equivalent to $\{(s,t).\ tick\ s = tick\ t\}$. 

We could find that the preservation of the memory configuration during small steps of kernel services is represented as condition (2), the invariant preservation as condition (3.2), the separation of local variables of threads as condition (4), and the memory separation as condition (5). 

The rely conditions for the both memory services are the same, which is defined as:

\begin{isabellec} \isanewline
\isacommand{Mem{\isacharunderscore}pool{\isacharunderscore}rely}\ t\ {\isasymequiv}\ Id\ {\isasymunion}\ 
{\isacharparenleft}gvars{\isacharunderscore}conf{\isacharunderscore}stable {\isasyminter}\ {\isacharbraceleft}{\isacharparenleft}s{\isacharcomma}r{\isacharparenright}{\isachardot}\ (inv\ s\ {\isasymlongrightarrow}\ inv\ r) {\isasymand} lvars{\isacharunderscore}nochange\ t\ s\ r 

\quad {\isasymand} {\isacharparenleft}cur\ s\ {\isacharequal}\ Some\ t\ {\isasymlongrightarrow}\ mem{\isacharunderscore}pool{\isacharunderscore}info\ s\ {\isacharequal}\ mem{\isacharunderscore}pool{\isacharunderscore}info\ r
{\isasymand}\ {\isacharparenleft}{\isasymforall}t{\isacharprime}{\isachardot}\ t{\isacharprime}\ {\isasymnoteq}\ t\ {\isasymlongrightarrow}\ lvars{\isacharunderscore}nochange\ t{\isacharprime}\ s\ r{\isacharparenright}{\isacharparenright}

\quad {\isasymand} mblocks\ s\ t = mblocks\ r\ t
{\isacharbraceright}{\isacharparenright} \isanewline
\end{isabellec}

This relation states that the \emph{alloc} and \emph{free} services assume that the environment does not change the state (\emph{Id}), otherwise (1) the environment does not change the static configuration of memory pools; (2) the environment preserves the invariant; (3) the environment does not change the local variables of thread $t$; (4) if the scheduled thread is the thread invoking the event then memory is not changed and local variables of other threads are not changed; and (5) the environment does not change the allocated memory blocks of thread $t$. 

The functional correctness is specified as the pre/post-conditions. For instance, we prove that when starting in a valid memory configuration given by the invariant, then if the allocation service does not returns an error code then it returns a valid memory block with size bigger or equal than the requested capacity. The property is specified by the following postcondition of the allocation service: 

\begin{isabellec}
	\isanewline
\isacommand{Mem{\isacharunderscore}pool{\isacharunderscore}alloc{\isacharunderscore}pre}\ t\ {\isasymequiv}\ {\isacharbraceleft}s{\isachardot}\ inv\ s\ {\isasymand}\ allocating{\isacharunderscore}node\ s\ t\ {\isacharequal}\ None\ {\isasymand}\ freeing{\isacharunderscore}node\ s\ t\ {\isacharequal}\ None{\isacharbraceright}

\isacommand{Mem{\isacharunderscore}pool{\isacharunderscore}alloc{\isacharunderscore}post}\ t\ p\ sz\ timeout\ {\isasymequiv}\ \isanewline
\ \ {\isacharbraceleft}s{\isachardot}\ inv\ s\ {\isasymand}\ allocating{\isacharunderscore}node\ s\ t\ {\isacharequal}\ None\ {\isasymand}\ freeing{\isacharunderscore}node\ s\ t\ {\isacharequal}\ None

\ \ \ \ \ \ {\isasymand}\ {\isacharparenleft}timeout\ {\isacharequal}\ FOREVER\ {\isasymlongrightarrow}

\quad \quad \quad {\isacharparenleft}ret\ s\ t\ {\isacharequal}\ ESIZEERR\ {\isasymand}\ mempoolalloc{\isacharunderscore}ret\ s\ t\ {\isacharequal}\ None\ {\isasymor}

\quad \quad \quad \ ret\ s\ t\ {\isacharequal}\ OK\ {\isasymand}\ {\isacharparenleft}{\isasymexists}mblk{\isachardot}\ mempoolalloc{\isacharunderscore}ret\ s\ t\ {\isacharequal}\ Some\ mblk\ {\isasymand}\ mblk{\isacharunderscore}valid\ s\ p\ sz\ mblk{\isacharparenright}{\isacharparenright}{\isacharparenright}

\ \ \ \ \ \ {\isasymand}\ {\isacharparenleft}timeout\ {\isacharequal}\ NOWAIT\ {\isasymlongrightarrow}

\quad \quad \quad {\isacharparenleft}{\isacharparenleft}ret\ s\ t\ {\isacharequal}\ ENOMEM\ {\isasymor}\ ret\ s\ t\ {\isacharequal}\ ESIZEERR{\isacharparenright}\ {\isasymand}\ mempoolalloc{\isacharunderscore}ret\ s\ t\ {\isacharequal}\ None{\isacharparenright} {\isasymor} 

\quad \quad \quad \ {\isacharparenleft}ret\ s\ t\ {\isacharequal}\ OK\ {\isasymand}\ {\isacharparenleft}{\isasymexists}mblk{\isachardot}\ mempoolalloc{\isacharunderscore}ret\ s\ t\ {\isacharequal}\ Some\ mblk\ {\isasymand}\ mblk{\isacharunderscore}valid\ s\ p\ sz\ mblk{\isacharparenright}{\isacharparenright}{\isacharparenright}

\ \ \ \ \ \ {\isasymand}\ {\isacharparenleft}timeout\ {\isachargreater}\ {\isadigit{0}}\ {\isasymlongrightarrow}

\quad \quad \quad {\isacharparenleft}{\isacharparenleft}ret\ s\ t\ {\isacharequal}\ ETIMEOUT\ {\isasymor}\ ret\ s\ t\ {\isacharequal}\ ESIZEERR{\isacharparenright}\ {\isasymand}\ mempoolalloc{\isacharunderscore}ret\ s\ t\ {\isacharequal}\ None{\isacharparenright} {\isasymor} 

\quad \quad \quad \ {\isacharparenleft}ret\ s\ t\ {\isacharequal}\ OK\ {\isasymand}\ {\isacharparenleft}{\isasymexists}mblk{\isachardot}\ mempoolalloc{\isacharunderscore}ret\ s\ t\ {\isacharequal}\ Some\ mblk\
{\isasymand}\ mblk{\isacharunderscore}valid\ s\ p\ sz\ mblk{\isacharparenright}{\isacharparenright}{\isacharparenright}{\isacharbraceright}
\isanewline
\end{isabellec}

If a thread requests a memory block in mode \emph{FOREVER}, it may  successfully allocate a valid memory block, or fail (\emph{ESIZEERR}) if the request size is larger than the size of the memory pool. If the thread is requesting a memory pool in mode \emph{NOWAIT}, it may also get  \emph{ENOMEM} as a result if there is no available blocks. But if the thread is requesting in mode \emph{TIMEOUT}, it will get the result of \emph{ETIMEOUT} if there is no available blocks in \emph{timeout} milliseconds. 

The property is indeed weak since even if the memory has a block able to allocate the requested size before invoking the allocation service, another thread running concurrently may have taken the block first during the execution of the service. 
For the same reason, the released block may be taken by another concurrent thread before the end of the release services.

\subsection{Proof of Partial Correctness}
\label{subsect:corproof}
In the {\slang} system, verification of a rely-guarantee specification is carried out by inductively applying the proof rules for each system event and discharging the proof obligations the rules generate. Typically, these proof obligations require to prove stability of the pre- and post-condition to check that changes of the environment preserve them, and to show that a statement modifying a state from the precondition gets a state belonging to the postcondition. 

The final theorem of functional correctness is as follows.

\begin{theorem}[Functional Correctness of Memory Management]
\label{thm:cor}
\[
\rgsat{Mem\_Spec}{\rgcond{\{s_0\}}{\{\}}{Guar}{UNIV}}
\]
where $Guar = tick\_guar\ \cup\ schedule\_guar\ \cup\ (\bigcup {t}.\ Mem\_pool\_guar\ t)$. 
\end{theorem}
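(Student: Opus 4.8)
The plan is to prove the judgment by a top-down application of the {\slang} proof rules, reducing the verification of the whole parallel composition to proof obligations on the individual event bodies. First I would apply the parallel rule \textsc{[Par]} of {\figprefix}~\ref{fig:proofrule}. Since $Mem\_Spec$ is indexed over components $\symbcore$ ranging over threads $(\textbf{\isasymT}\ t)$, the scheduler $\textbf{\isasymS}$, and the $\textbf{Timer}$, this splits the goal into verifying each component event system against a local specification $\rgcond{pres_\symbcore}{Rs_\symbcore}{Gs_\symbcore}{psts_\symbcore}$, plus the five side conditions relating the local specs to the global one. I would instantiate these so that for a thread $t$, $Rs_{\textbf{\isasymT}\ t} = Mem\_pool\_rely\ t$ and $Gs_{\textbf{\isasymT}\ t} = Mem\_pool\_guar\ t$, and for the scheduler and timer the corresponding pairs built from $schedule\_guar$ and $tick\_guar$. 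All local preconditions are $\{s_0\}$ (or a stable weakening) and all local postconditions are $UNIV$.

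Each component event system is a set of events (a union over parameters), so I would reduce it by the event-set rule to the verification of the individual events $alloc$, $free$, $schedule$, and $tick$. Every event is a basic event, so \textsc{[BasicEvt]} applies, reducing each to (i) $\rgsat{body}{\rgcond{\{s_0\} \cap guard}{R}{G}{UNIV}}$, (ii) stability of the precondition under $R$, and (iii) reflexivity of the guarantee, $\forall s.\ (s,s) \in G$. Condition (iii) is immediate because each guarantee relation contains $Id$ as a disjunct. For the event bodies I would then inductively apply the imperative proof rules, in particular \textsc{[While]} for the $free\_block$ loop of {\figprefix}~\ref{fig:freeblock} and the allocation loops, and \textsc{[Await]} for the \textbf{ATOM} critical sections, discharging at each atomic step the obligation that the state change lies in the guarantee relation.

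The \textsc{[Par]} side conditions are then dispatched: Premise~(3) is trivial since every $psts_\symbcore = UNIV$; Premise~(5) is trivial since the global rely is empty, $\{\} \subseteq Rs_\symbcore$; Premise~(2) requires $s_0$ to satisfy each local precondition, which follows from the construction of the initial memory-pool state and the fact that $s_0$ satisfies $inv$; and Premise~(4), $Gs_\symbcore \subseteq Guar$, holds essentially by construction since $Guar$ is the union of the component guarantees. The delicate condition is Premise~(6), the compatibility requirement $\forall \symbcore \neq \symbcore'.\ Gs_\symbcore \subseteq Rs_{\symbcore'}$: I would verify clause by clause that the guarantee of thread $t$ (which under $cur\ s \neq Some\ t$ changes nothing, never alters the local variables or allocated blocks of any $t' \neq t$, preserves the static pool configuration and the invariant, and fixes the timer $tick$) entails every conjunct of $Mem\_pool\_rely\ t'$ for each other thread $t'$, and symmetrically for the scheduler and timer relies.

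The main obstacle will be the body verification for $alloc$ and $free$, specifically proving that every atomic step preserves the full invariant $inv$ (the conjunction of all properties in {\sectprefix}~\ref{subsect:inv}), which is what clause~(3.2) of $Mem\_pool\_guar$ demands. The $free\_block$ coalescing loop manipulates the bitmap and free lists across parent and child levels (setting bits to \emph{FREEING}, \emph{NOEXIST}, and \emph{FREE}), so each step must be shown to re-establish the well-shaped-bitmap, no-partner-fragmentation, and free-list-validity invariants simultaneously; the address arithmetic relating a block index $j$ at level $i$ to its start address and to its parent index $j \div 4$ is where the reasoning is most intricate. A secondary obstacle is establishing stability of the invariant and the pre/post conditions under $Mem\_pool\_rely\ t$ (as also required by {\theoremprefix}~\ref{thm:invariant}), since one must show that concurrent interference by other threads, constrained only to preserve $inv$ and leave $t$'s data untouched, cannot invalidate the local assertions.
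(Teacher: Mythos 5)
Your proposal follows essentially the same route as the paper: apply the \textsc{Par} rule to decompose the parallel specification into the per-component event systems, reduce each to its events via \textsc{[BasicEvt]} and the imperative rules, and discharge the memory services through exactly the per-event rely-guarantee lemmas (with $Mem\_pool\_rely\ t$, $Mem\_pool\_guar\ t$, and the pre/post conditions of {\sectprefix}~\ref{subsect:corspec}), using the fact that $s_0$ satisfies the invariants to meet the precondition side condition. The paper states this decomposition more tersely (delegating the hard work to {\lemmaprefix}~\ref{lm:cor_alloc} and {\lemmaprefix}~\ref{lm:cor_free}), but your elaboration of the \textsc{Par} side conditions and the invariant-preservation obligations matches what its mechanized proof actually does.
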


We consider that the memory management is a closed system, i.e., the environment is the empty set. In the initial state $s_0$, we assume that (1) the memory blocks at level 0 of all memory pools are free and not split, (2) the current thread is \textbf{None}, (3) the state of all threads are \emph{READY}, and (4) the wait queue of thread of each memory pool is empty. We have that $s_0$ satisfies the invariants. 

By the \textsc{Par} rule in {\figprefix} \ref{fig:proofrule}, the proof of {\theoremprefix} \ref{thm:cor} can be decomposed to the satisfiability for each event of the proof of correctness of the specification introduced in section~\ref{subsect:corspec}. We proved the following lemmas for the memory services. A detailed proof sketch and intermediate conditions of {\lemmaprefix} \ref{lm:cor_free} is shown in Appendix \ref{appx:mempoolfree}. 

\begin{lemma}[Functional Correctness of the Allocation Service]
\label{lm:cor_alloc}
\begin{equation*}
\begin{aligned}
\progenv \vdash Mem\_pool\_alloc\ t\ p\ sz\ to\ \mathbf{sat} \ \langle & Mem\_pool\_alloc\_pre\ t, Mem\_pool\_rely\ t, \\
& Mem\_pool\_guar\ t, Mem\_pool\_alloc\_post\ t\ p\ sz\ to \rangle
\end{aligned}
\end{equation*} 
\end{lemma}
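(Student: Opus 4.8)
The plan is to apply the rely-guarantee proof system of {\slang} compositionally, reducing the proof of the event \emph{alloc} to proof obligations about its imperative body. First I would apply the \textsc{BasicEvt} rule from {\figprefix} \ref{fig:proofrule}, which reduces the goal to three obligations: (i) the body of \emph{alloc} satisfies the rely-guarantee specification with precondition $\mathit{Mem\_pool\_alloc\_pre}\ t$ intersected with the event guard; (ii) $\mathit{stable}(\mathit{Mem\_pool\_alloc\_pre}\ t, \mathit{Mem\_pool\_rely}\ t)$; and (iii) reflexivity of the guarantee, $\forall s.\ (s,s) \in \mathit{Mem\_pool\_guar}\ t$. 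Obligation (iii) is immediate, since the guarantee contains $\mathit{Id}$ as a disjunct. Obligation (ii) follows from the shape of the rely defined in {\sectprefix} \ref{subsect:corspec}: the precondition asserts $\mathit{inv}\ s$ together with $\mathit{allocating\_node}\ s\ t = \mathit{None}$ and $\mathit{freeing\_node}\ s\ t = \mathit{None}$, and the rely both preserves $\mathit{inv}$ and leaves the local variables of $t$ unchanged, so no environment step can falsify any conjunct.

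The heart of the proof is obligation (i). I would decompose the body structurally, following the C code of {\figprefix} \ref{fig:mem_alloc_code}: the outer retry loop of \emph{k\_mem\_pool\_alloc}, the call to \emph{pool\_alloc} (itself containing the level-computation loop and the block-breaking loop), and the suspension branch guarded by the \emph{timeout} mode. Sequential composition, \textbf{IF}, \textbf{WHILE}, and \textbf{ATOM} statements are handled by the corresponding proof rules. Each \textbf{ATOM} block models a critical section protected by \emph{irq\_lock}/\emph{irq\_unlock} (the atomic split and free-list updates), so I would discharge it with the \textsc{Await} rule, whose central premise requires that the state transformer, run from any state in the stabilised precondition, reaches a state related to the initial one by $\mathit{Mem\_pool\_guar}\ t$ and lying in the postcondition. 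Concretely this means verifying, for each bit-manipulating primitive ($\mathit{set\_bit\_free}$, $\mathit{set\_bit\_noexist}$, $\mathit{alloc\_block}$, $\mathit{break\_block}$, and the free-list append/remove operations), that it preserves the full conjunction $\mathit{inv}$ and respects the guarantee clauses (2), (3.2), (4), (5) and (6) — no change to the pool configuration, invariant preservation, no change to other threads' local variables or allocated blocks, and no change to the timer tick.

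For the loops I must supply loop invariants; only partial correctness is at stake here, as termination is treated separately in {\sectprefix} \ref{subsect:termi}. The \textsc{While} rule demands a $\mathit{loopinv}$ that is stable under the rely, re-established by the body, and implying the postcondition on exit. For the level-computation loop the invariant records the relationship between $\mathit{alloc\_l}$, $\mathit{free\_l}$ and the aligned level sizes; for the block-breaking loop it must assert that the partially split block and its newly created children keep the bitmap forest well-shaped and the free list consistent at every iteration; for the outer retry loop it carries $\mathit{inv}$ together with the timeout bookkeeping needed to derive the three-way case split of $\mathit{Mem\_pool\_alloc\_post}$ over \emph{FOREVER}, \emph{NOWAIT}, and positive \emph{timeout}.

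I expect the main obstacle to be the invariant-preservation argument for the block-breaking step. Splitting a block simultaneously touches several mutually dependent invariants of {\sectprefix} \ref{subsect:inv}: the well-shaped forest ($\mathbf{inv\_bitmap}$, $\mathbf{inv\_bitmap0}$, $\mathbf{inv\_bitmapn}$), the free-list/bitmap agreement ($\mathbf{inv\_bitmap\_freelist}$), and the absence of four free partners ($\mathbf{inv\_bitmap\_not4free}$). Each atomic update must be shown to re-establish all of them at once, with careful index arithmetic relating a block $j$ at level $i$ to its parent $j\ \mathrm{div}\ 4$ and its children $j*4, \dots, j*4+3$. This is exactly the reasoning in which the C implementation was found to deviate from the specification, so discharging the \textsc{Await} obligation for the split either closes under the corrected model or exposes the incorrect-split bug reported in the paper.
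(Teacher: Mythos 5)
Your proposal follows essentially the same route as the paper: the paper's proof of this lemma (sketched in detail for the analogous release service in Appendix~\ref{appx:mempoolfree}) likewise applies the event rule, threads intermediate assertions through the sequential composition of the body, supplies parametrized loop invariants for the \textbf{WHILE} statements, and discharges the \textsc{Await}/\textbf{ATOM} obligations by showing each atomic state update preserves $inv$ and lies in $Mem\_pool\_guar\ t$, with stability of the pre/post conditions under $Mem\_pool\_rely\ t$ handled exactly as you describe. Your identification of the block-split step as the critical invariant-preservation obligation also matches where the paper's verification effort concentrated and where the incorrect-split bug was exposed.
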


\begin{lemma}[Functional Correctness of the Release Service]
\label{lm:cor_free}
\begin{equation*}
\begin{aligned}
\progenv \vdash Mem\_pool\_free\ t\ p\ sz\ to\ \mathbf{sat} \ \langle & Mem\_pool\_free\_pre\ t, Mem\_pool\_rely\ t, \\
& Mem\_pool\_guar\ t, Mem\_pool\_free\_post\ t\ p\ sz\ to \rangle
\end{aligned}
\end{equation*}
\end{lemma}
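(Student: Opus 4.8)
The plan is to prove \lemmaprefix~\ref{lm:cor_free} by a structural application of the {\slang} proof rules of \figprefix~\ref{fig:proofrule} to the event \emph{free}, reducing the validity of the whole service to local proof obligations at each program construct. First I would apply the \textsc{BasicEvt} rule, which discharges the event-level obligation by requiring that the body of \emph{free} satisfy the same rely $Mem\_pool\_rely\ t$ and guarantee $Mem\_pool\_guar\ t$ with its precondition strengthened by the event guard, that $Mem\_pool\_free\_pre\ t$ be stable under the rely, and that $Mem\_pool\_guar\ t$ be reflexive. Reflexivity is immediate since the relation contains $Id$ as its first disjunct. I would then peel off the sequential structure of the body with the sequential-composition rule, isolating the initial setup of the local control variables, the central \emph{free\_block} loop of \figprefix~\ref{fig:freeblock}, and the final update of the thread's allocated-block map and return code. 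Note that the \textsc{While} rule here establishes \emph{partial} correctness; termination of \emph{free\_block} is discharged separately (\sectprefix~\ref{subsect:termi}).

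The core of the proof is the \emph{free\_block} \textbf{WHILE} loop, for which I would apply the \textsc{While} rule with a carefully chosen loop invariant $loopinv_{free}$. This invariant must: (a) assert that the full memory invariant $inv$ of \sectprefix~\ref{subsect:inv} holds, with the single block currently under coalescence being tracked by $freeing\_node\ s\ t$ and marked \emph{FREEING} in the bitmap, exploiting precisely the richer \emph{BlockState} datatype chosen in \sectprefix~\ref{subsect:struct}; (b) relate the control variables $lvl$, $bn$, and $lsz\ t$ to that block so that the $block\_ptr$ computations and the level/block arithmetic stay consistent with \textbf{inv\_mempool\_info} and \textbf{inv\_bitmap}; and (c) be stable under $Mem\_pool\_rely\ t$. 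The \textsc{While} rule then leaves three obligations: that one iteration of the body maps $loopinv_{free} \cap free\_block\_r\ t$ back into $loopinv_{free}$, that $loopinv_{free}$ together with the negated loop condition entails $Mem\_pool\_free\_post\ t\ p\ sz\ to$, and the stability of $loopinv_{free}$ and the post-condition under the rely.

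Each \textbf{ATOM} block inside the loop (lines 4--30 of \figprefix~\ref{fig:freeblock}) is then handled by the \textsc{Await} rule, which verifies the atomic body under the trivial rely $Id$ and guarantee $UNIV$ and afterwards checks that the \emph{net} state change, from the entering state $V$ to the exiting state, lies both in $Mem\_pool\_guar\ t$ and in the block's post-condition. The inner \textbf{FOR} loop clearing the three partner bits is itself verified with a small auxiliary invariant counting the partners already processed, which is routine under the $Id$ rely. Concretely I would verify step by step that setting the child block to \emph{FREE}, clearing its three partners to \emph{NOEXIST}, removing their addresses from the free list, and re-marking the parent as \emph{FREEING} jointly re-establish the well-shaped bitmap rules (\textbf{inv\_bitmap}, \textbf{inv\_bitmap\_not4free}) and the free-list consistency (\textbf{inv\_bitmap\_freelist}), while leaving the memory configuration, the $tick$, the allocated blocks $mblocks\ s\ t'$, and the local variables of every other thread $t' \neq t$ untouched, thereby satisfying conjuncts (2), (3.2), (4), (5), and (6) of the guarantee.

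I expect the main obstacle to be the construction and inductiveness of $loopinv_{free}$, and in particular its interaction with \textbf{inv\_aux\_vars}. Marking the parent \emph{FREEING} while simultaneously removing the \emph{FREEING} marker from the just-coalesced child must preserve the property that each manipulated block carries a \emph{FREEING} bit iff it is the $freeing\_node$ of exactly one thread; establishing that no other thread can be concurrently touching that same block rests essentially on the stability of $inv$ under $Mem\_pool\_rely\ t$, i.e. on the fact that this thread's rely already absorbs the guarantees of all the others. Making the invariant strong enough to climb the quad tree inductively, yet weak enough to stay stable under concurrent interference, is the delicate balance on which the whole proof turns.
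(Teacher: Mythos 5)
Your plan follows essentially the same route as the paper's proof (Appendix B): the \textsc{BasicEvt} rule with guard-strengthened precondition and reflexivity via the $Id$ disjunct, sequential threading of intermediate assertions, the \textsc{While} rule for \emph{free\_block} with an invariant tying \emph{freeing\_node}, the \emph{FREEING} bit and the $lvl$/$bn$/$lsz$ arithmetic to the global invariant $inv$ (the paper's \emph{mp\_free\_loopinv}), the \textsc{Await} rule with symbolic entry states for the \textbf{ATOM} blocks, and a small counting invariant for the partner-clearing \textbf{FOR} loop (the paper's \emph{mergeblock\_loopinv}). The only piece your decomposition omits is the service's closing atomic block that drains \emph{wait\_q}, readies the blocked threads and reschedules — the paper handles it with its own loop invariant (\emph{stm9\_loopinv}, measured by the wait-queue length) and it is where the consistency-of-waiting-threads invariant \emph{inv\_thd\_waitq} must be re-established — but it yields to exactly the \textsc{Await}-plus-loop-invariant machinery you already describe.
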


\subsection{Proof of Termination}
\label{subsect:termi}

To prove loop termination, loop invariants are parametrized with a logical variable $\alpha$. It suffices to show total correctness of a loop statement by the following proposition where $loopinv(\alpha)$ is the parametrize invariant, in which the logical variable is used to find a convergent relation to show that the number of iterations of the loop is finite. 
\[ \fontsize{9pt}{0cm}
\begin{aligned} 
&\rgsat{P}{\rgcond{loopinv(\alpha) \cap \stset{\alpha > 0}}{R}{G}{\exists \beta < \alpha. \ loopinv(\beta)}} \\
& \wedge loopinv(\alpha) \cap \stset{\alpha > 0} \subseteq \stset{b} \wedge loopinv(0) \subseteq \stset{\neg b} \\
& \wedge \forall s \in loopinv(\alpha).\ (s,t) \in R \longrightarrow \exists \beta \leqslant \alpha. \ t \in loopinv(\beta)
\end{aligned}
\]

For instance, to prove termination of the loop statement in \emph{free\_block} shown in {\figprefix} \ref{fig:freeblock}, we define the loop invariant with the logical variable $\alpha$ as follows. Here, ${\isasymlbrace} {\isasymacute}inv {\isasymrbrace}$ defines a set of states, each of which satisfies the $inv$ predicate. It is equivalent to $\{s.\ inv\ s\}$. 

\begin{isabellec}
	\isanewline
\isacommand{mp{\isacharunderscore}free{\isacharunderscore}loopinv}\ t\ b\ {\isasymalpha}\ {\isasymequiv} {\isasymlbrace} ... {\isasymand}{\isasymacute}inv {\isasymand} level\ b\ {\isacharless}\ length\ {\isacharparenleft}{\isasymacute}lsizes\ t{\isacharparenright}

\quad {\isasymand}\ {\isacharparenleft}{\isasymforall}ii{\isacharless}length\ {\isacharparenleft}{\isasymacute}lsizes\ t{\isacharparenright}{\isachardot}\ {\isasymacute}lsizes\ t\ {\isacharbang}\ ii\ {\isacharequal}\ {\isacharparenleft}max{\isacharunderscore}sz\ {\isacharparenleft}{\isasymacute}mem{\isacharunderscore}pool{\isacharunderscore}info\ {\isacharparenleft}pool\ b{\isacharparenright}{\isacharparenright}{\isacharparenright}\ div\ {\isacharparenleft}{\isadigit{4}}\ {\isacharcircum}\ ii{\isacharparenright}{\isacharparenright}

\quad {\isasymand}\ {\isasymacute}bn\ t\ {\isacharless}\ length\ {\isacharparenleft}bits\ {\isacharparenleft}levels\ {\isacharparenleft}{\isasymacute}mem{\isacharunderscore}pool{\isacharunderscore}info\ {\isacharparenleft}pool\ b{\isacharparenright}{\isacharparenright}{\isacharbang}{\isacharparenleft}{\isasymacute}lvl\ t{\isacharparenright}{\isacharparenright}{\isacharparenright}

\quad {\isasymand}\ {\isasymacute}bn\ t\ {\isacharequal}\ {\isacharparenleft}block\ b{\isacharparenright}\ div\ {\isacharparenleft}{\isadigit{4}}\ {\isacharcircum}\ {\isacharparenleft}level\ b\ {\isacharminus}\ {\isasymacute}lvl\ t{\isacharparenright}{\isacharparenright} {\isasymand} {\isasymacute}lvl\ t\ {\isasymle}\ level\ b

\quad {\isasymand}\ {\isacharparenleft}{\isasymacute}free{\isacharunderscore}block{\isacharunderscore}r\ t\ {\isasymlongrightarrow}\  {\isacharparenleft}{\isasymexists}blk{\isachardot}\ {\isasymacute}freeing{\isacharunderscore}node\ t\ {\isacharequal}\ Some\ blk\ {\isasymand}\ pool\ blk\ {\isacharequal}\ pool\ b

\quad \quad \quad \quad \quad \quad \quad \quad \quad \quad \quad \quad \quad \quad \quad 
{\isasymand}\ level\ blk\ {\isacharequal}\ {\isasymacute}lvl\ t\ {\isasymand}\ block\ blk\ {\isacharequal}\ {\isasymacute}bn\ t{\isacharparenright}

\quad \quad \quad \quad \quad \quad \quad \quad \quad \quad {\isasymand}\ {\isasymacute}alloc{\isacharunderscore}memblk{\isacharunderscore}data{\isacharunderscore}valid\ {\isacharparenleft}pool\ b{\isacharparenright}\ {\isacharparenleft}the\ {\isacharparenleft}{\isasymacute}freeing{\isacharunderscore}node\ t{\isacharparenright}{\isacharparenright}{\isacharparenright}

\quad {\isasymand}\ {\isacharparenleft}{\isasymnot}\ {\isasymacute}free{\isacharunderscore}block{\isacharunderscore}r\ t\ {\isasymlongrightarrow}\ {\isasymacute}freeing{\isacharunderscore}node\ t\ {\isacharequal}\ None{\isacharparenright} \ {\isasymrbrace}\ {\isasyminter} 

\quad {\isasymlbrace}\ {\isasymalpha}\ {\isacharequal}\ {\isacharparenleft}if\ {\isasymacute}freeing{\isacharunderscore}node\ t\ {\isasymnoteq}\ None\ then\ {\isasymacute}lvl\ t\ {\isacharplus}\ {\isadigit{1}}\ else\ {\isadigit{0}}{\isacharparenright}\ {\isasymrbrace}
\isanewline 
\end{isabellec}

where $freeing\_node$ and $lvt$ are local variables respectively storing the node being free and the level that the node belongs to. 
In the body of the loop, if $lvl\ t\ {\isachargreater}\ {\isadigit{0}}$ and $partner\_bit$ is \emph{true}, then $lvl = lvl - 1$ at the end of the body. Otherwise, $freeing\_node \ t= None$. So at the end of the loop body, $\alpha$ decreases or $\alpha = 0$. If $\alpha = 0$, we have $freeing\_node\ t = None$, and thus the negation of the loop condition  $\neg free\_block\_r \ t$, concluding termination of \emph{free\_block}.

Due to concurrency, it is necessary to consider fairness to prove termination of the loop statement in \emph{k\_mempool\_alloc} from Line 34 to 47 in {\figprefix} \ref{fig:mem_alloc_code}. On one hand, when a thread requests a memory block in the \emph{FOREVER} mode, it is possible that there will never be available blocks since other threads do not release allocated blocks. On the other hand, even when other threads release blocks, it is possible that the available blocks are always raced by threads. 

\subsection{Proof of Safety}
\label{subsect:safetyproof}

We have proved the following theorem to show that, as a closed system, the memory management of Zephyr is safe w.r.t invariant $inv$ executing from an initial state $s_0$. According to {\theoremprefix} \ref{thm:invariant}, the memory management also preserves the safety in any environment $R$, if $R$ is stable with the invariant. 

\begin{theorem}[Safety of Memory Management]
\label{thm:safety_mem}
The {\slang} specification \textbf{\emph{Mem\_Spec}} in {\equationprefix} (\ref{eq:mem_spec}) satisfies the invariant \emph{\textbf{inv}} w.r.t. the initial state $s_0$ in an empty environment $\{\}$. 
\end{theorem}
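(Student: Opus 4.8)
The plan is to obtain this theorem as a direct instance of the Invariant Verification theorem ({\theoremprefix}~\ref{thm:invariant}), instantiating $\symbpes := Mem\_Spec$, $init := \{s_0\}$, the rely $R := \{\}$, and the guarantee $G := Guar$. This reduces safety to the three hypotheses of {\theoremprefix}~\ref{thm:invariant}, two of which are already available from earlier results.

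The first hypothesis, $\rgsat{Mem\_Spec}{\rgcond{\{s_0\}}{\{\}}{Guar}{UNIV}}$, is literally the Functional Correctness theorem ({\theoremprefix}~\ref{thm:cor}), so I would simply cite it. The second hypothesis, $\{s_0\} \subseteq \{s.\ inv(s)\}$, was already discharged in the remark following {\theoremprefix}~\ref{thm:cor}: in $s_0$ all level-$0$ blocks are free and unsplit, $cur$ is \textbf{None}, every thread is \emph{READY}, and each wait queue is empty, whence each invariant conjunct (\textbf{inv\_bitmap}, \textbf{inv\_mempool\_info}, \textbf{inv\_thd\_waitq}, and the rest) follows by unfolding its definition.

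The substance lies in the third hypothesis: $stable(\{s.\ inv(s)\}, \{\})$ and $stable(\{s.\ inv(s)\}, Guar)$. Stability against the empty rely holds vacuously, since $\{\}$ contains no state pairs. For stability against $Guar = tick\_guar \cup schedule\_guar \cup (\bigcup t.\ Mem\_pool\_guar\ t)$ I would argue disjunct by disjunct. For $Mem\_pool\_guar\ t$ the preservation is built into the relation itself: an $Id$ step keeps $inv$ trivially; when $cur\ s = Some\ t$, clause (3.2) directly yields $inv\ s \longrightarrow inv\ r$; and when $cur\ s \neq Some\ t$, clauses (3.1) and (4) together freeze every global component together with every thread's local variables (including the \emph{freeing\_node} and \emph{allocating\_node} that \textbf{inv\_aux\_vars} depends on), so all conjuncts of $inv$ are unchanged. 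For $tick\_guar$ I would note that the timer touches only the $tick$ field, which no conjunct of $inv$ mentions. For $schedule\_guar$ I would check that rescheduling alters only $cur$ and the \emph{READY}/running status of threads, leaving the pool structure, bitmaps, free lists, and the set of \emph{BLOCKED} threads intact, so \textbf{inv\_thd\_waitq} and every structural conjunct survive.

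The only genuine obstacle within this theorem is the stability of $inv$ under $Guar$. Although each disjunct yields preservation fairly mechanically from the shape of the guarantee relations, it does rest on clause (3.2) of $Mem\_pool\_guar$ being an honest invariant-preservation property, i.e. on every atomic update inside \emph{alloc} and \emph{free}---block splitting, coalescing, free-list insertion and removal, and the \emph{FREEING}/\emph{ALLOCATING} bookkeeping---mapping $inv$-states to $inv$-states. That burden is borne not here but in the per-event correctness {\lemmaprefix}s~\ref{lm:cor_alloc} and~\ref{lm:cor_free}, whose guarantee conditions embed clause (3.2); the present theorem only reads stability off the definition of $Guar$ and assembles the pieces through {\theoremprefix}~\ref{thm:invariant}.
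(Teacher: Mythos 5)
Your proposal is correct and follows essentially the same route as the paper's proof: instantiate the Invariant Verification theorem ({\theoremprefix}~\ref{thm:invariant}) with {\theoremprefix}~\ref{thm:cor} supplying the rely-guarantee validity, discharge $inv(s_0)$ from the initial-state assumptions, and establish stability of $inv$ under $Guar$ by observing that clause (3.2) of $Mem\_pool\_guar$ bakes in invariant preservation while the $tick$ and $schedule$ guarantees leave the memory untouched. Your disjunct-by-disjunct elaboration of the stability argument is simply a more explicit rendering of what the paper states in one sentence.
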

\begin{proof}
To prove the theorem, by {\theoremprefix} \ref{thm:invariant} and {\theoremprefix} \ref{thm:cor}, we only need to show that $inv(s_0)$, $stable(\{s.\ inv(s)\}, \{\})$, and $stable(\{s.\ inv(s)\}, Guar)$ where $Guar$ is defined in {\theoremprefix} \ref{thm:cor}. 
The first two predicates are obviously satisfied as we have shown before. The third predicate is satisfied since $Mem\_pool\_guar\ t$ is stable with the invariant and the guarantee conditions of $tick$ and $schedule$ do not change the memory. 
\end{proof}

\subsection{Proof of Security}
\label{subsect:securityproof}

To apply the compositional reasoning approach to proof integrity of a parallel event system, we instantiate the security configuration in {\sectprefix} \ref{subsect:ifs} for the specification of the memory management, i.e. $Mem\_Spec$ in {\equationprefix} (\ref{eq:mem_spec}). 

The $\interf$ relation of domains is instantiated as the \emph{interference} function as follows. The first two rules mean that the Timer can only interfere itself. The third rule means that a thread can interfere with itself but not other threads. The \emph{scheduler} can interfere with all threads and the \emph{timer}. A special case in the memory management of Zephyr is that threads can interfere with the \emph{scheduler} too, because the memory allocation service may block the current thread and reschedule to other threads. It is very different from the interference relation in separation kernels (e.g. \cite{Murray13}) and ARINC 653 OSs (e.g. \cite{zhao19tdsc}), where the scheduler must not be interfered by partitions or processes for the purpose of temporal separation. 
\[
\left\{
\begin{aligned}
& \textbf{Timer} \interf c = (c = Timer) \\
& c \interf \textbf{Timer}  = (c = Timer) \\
& (\textbf{\isasymT}\ t) \interf (\textbf{\isasymT}\ r)  = (t = r) \\
& otherwise . . . = True
\end{aligned}
\right.
\]


The state equivalence relation $s \dsim{\symbdomain} r$ is instantiated as follows. It requires that two states are equivalent to a thread \emph{t} iff the allocated memory blocks of \emph{t} in the two states are the same. 
\[
\left\{
\begin{aligned}
& s \dsim{\textbf{\isasymS}} r = (cur\ s = cur\ r) \\
& s \dsim{(\textbf{\isasymT}\ t)} r = (mblocks\ s\ t = mblocks\ r\ t) \\
& s \dsim{\textbf{Timer}} r = (tick\ s = tick\ r)
\end{aligned}
\right.
\]

The domain function \emph{dom\_e} is instantiated as follows, which is straightforward. 
\[
\left\{
\begin{aligned}
& dom\_e\ s\ (\textbf{\isasymT}\ t)\ (alloc(p,sz,to)@t) = \textbf{\isasymT}\ t \\
& dom\_e\ s\ (\textbf{\isasymT}\ t)\ (free(b)@t) = \textbf{\isasymT}\ t \\
& dom\_e\ s\ \textbf{\isasymS}\ (schedule(t)) = \textbf{\isasymS} \\
& dom\_e\ s\ \textbf{Timer}\ tick = \textbf{Timer}
\end{aligned}
\right.
\]


By the compositional reasoning approach in {\sectprefix} \ref{subsect:compreason}, to prove the integrity of the memory management of Zephyr, we first have to show the integrity of memory services shown as the following lemma. 

\begin{lemma}[Integrity of Memory Services]
\label{lm:integrity_evt_mem}
\[
\forall ev\ u\ s_1\ s_2\ k.\ ev \in evts(Mem\_Spec) \wedge (s_1,s_2) \in G_{\Gamma(ev)} \wedge (dom\_e\ s_1\ k\ ev) \ninterf u \longrightarrow s_1\ \dsim{u}\ s_2
\]
\end{lemma}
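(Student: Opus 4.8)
The plan is to prove this instance of integrity on events (Definition~\ref{def:integrity_evt}) by an exhaustive case analysis on the event $ev$ and the target domain $u$, reducing each non-vacuous case to a single clause of the corresponding guarantee condition defined in Section~\ref{subsect:corspec}. First I would unfold $evts(Mem\_Spec)$ so that $ev$ ranges over the four event families $alloc(p,sz,to)@t$, $free(b)@t$, $schedule(t)$ and $tick$. For each family the domain $d = dom\_e\ s_1\ k\ ev$ is fixed by the instantiation of \emph{dom\_e} (namely $\textbf{\isasymT}\ t$ for the two thread services, $\textbf{\isasymS}$ for the scheduler, $\textbf{Timer}$ for the timer), and the guarantee $G_{\Gamma(ev)}$ is likewise determined ($Mem\_pool\_guar\ t$, $schedule\_guar$, $tick\_guar$ respectively). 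The proof then hinges on matching each case in which $d \ninterf u$ holds against the guarantee clause that certifies the component distinguished by $\dsim{u}$ is preserved.

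The substantive case is the thread services. Here $d = \textbf{\isasymT}\ t$, and I would first split $Mem\_pool\_guar\ t = Id \cup (\ldots)$: if $(s_1,s_2)\in Id$ then $s_1 = s_2$ and $s_1 \dsim{u} s_2$ holds by reflexivity of the equivalence for every $u$. Otherwise I case on $u$ and keep only the noninterference cases of the instantiated relation. Since $(\textbf{\isasymT}\ t) \interf \textbf{\isasymS} = True$, the case $u = \textbf{\isasymS}$ is vacuous because the antecedent $d \ninterf u$ fails. The two genuine cases are $u = \textbf{Timer}$, discharged by clause~(6) of $Mem\_pool\_guar$ which forces $tick\ s_1 = tick\ s_2$ and hence $s_1 \dsim{\textbf{Timer}} s_2$; and $u = \textbf{\isasymT}\ r$ with $r \neq t$, discharged by clause~(5), $\forall t'.\,t' \neq t \rightarrow mblocks\ s\ t' = mblocks\ r\ t'$, which yields exactly $mblocks\ s_1\ r = mblocks\ s_2\ r$, i.e. $s_1 \dsim{(\textbf{\isasymT}\ r)} s_2$.

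The scheduler and timer cases are analogous and lighter. For $schedule(t)$ we have $d = \textbf{\isasymS}$, and the only noninterference target is $u = \textbf{Timer}$, requiring $tick\ s_1 = tick\ s_2$, which holds because $schedule\_guar$ leaves the timer component fixed. For $tick$ we have $d = \textbf{Timer}$, and the noninterference targets are $u = \textbf{\isasymS}$ (requiring $cur\ s_1 = cur\ s_2$) and $u = \textbf{\isasymT}\ t$ (requiring $mblocks\ s_1\ t = mblocks\ s_2\ t$); both hold because $tick\_guar$ modifies only the \emph{tick} field.

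The only real obstacle is bookkeeping rather than mathematical depth: the proof must correctly pair each noninterference case of the instantiated $\ninterf$ relation with the precise guarantee clause that preserves the relevant state component, and must dispatch the interference cases as vacuous. In particular, the crucial thread-level separation case ($u = \textbf{\isasymT}\ r$, $r \neq t$) relies entirely on clause~(5) of $Mem\_pool\_guar$, which was already established when verifying Lemmas~\ref{lm:cor_alloc} and~\ref{lm:cor_free}; the present lemma introduces no new reasoning about the allocator's internal steps but simply reads off the separation guarantee proved there. Minor care is also needed to confirm that $schedule\_guar$ and $tick\_guar$ indeed fix \emph{cur} and \emph{tick} (and leave \emph{mblocks} untouched), since these two guarantees are not displayed in full in the text.
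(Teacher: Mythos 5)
Your proposal is correct and follows essentially the same route as the paper's proof: a case split on the event (alloc/free, schedule, tick), then on the target domain $u$, dispatching interference cases as vacuous and matching each genuine noninterference case to the guarantee clause that fixes the relevant state component (clause (5) of $Mem\_pool\_guar$ for $mblocks$, clause (6) for $tick$, and the prose-level facts that $schedule\_guar$ and $tick\_guar$ leave the other components unchanged). Your treatment is in fact slightly tighter in bookkeeping — explicitly splitting off the $Id$ branch of the guarantee and noting that the thread case under $schedule$ is vacuous (since $\textbf{\isasymS}$ interferes with threads), where the paper proves that case directly even though it need not — but these are presentational differences, not a different argument.
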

\begin{proof}
By $mem\_Spec$ in {\equationprefix} \ref{eq:mem_spec}, the proof of this lemma is discharged by cases of $ev$ and then induction of the domain $u$. 
\begin{itemize}
\item if $ev$ is an event $Mem\_pool\_alloc$ or $Mem\_pool\_free$, then $dom\_e\ s_1\ k\ ev$ is a thread $t$. 
	\begin{itemize}
		\item if $u$ is the scheduler, then its proved since $t$ can interfere with the scheduler. 
		\item if $u$ is the timer, then its proved since the execution of memory services does not change the $tick$ variable. 
		\item if $u$ is a thread $r$, then we have $t \neq r$. It is proved by that the memory separation property in the guarantee condition {\isacommand{Mem{\isacharunderscore}pool{\isacharunderscore}guar}} does not change the allocated memory of other threads. 
	\end{itemize}
\item if $ev$ is the $schedule$ event, then $dom\_e\ s_1\ k\ ev$ is the scheduler {\isasymS}.
	\begin{itemize}
		\item if $u$ is the scheduler, then its proved since a domain can interfere with itself. 
		\item if $u$ is the timer, then its proved since the $schedule$ event does not change the $tick$ variable. 
		\item if $u$ is a thread $t$, then its proved since the $schedule$ event does not change the memory.  
	\end{itemize}	 
\item if $ev$ is the $tick$ event, then its proved since the $tick$ event does not change the current thread and the memory. 
\end{itemize}
\end{proof}

Finally, we have the following theorem to show the integrity of the memory management. 
\begin{theorem}[Integrity of Memory Management]
\label{thm:integrity_mem}
The {\slang} specification \textbf{\emph{Mem\_Spec}} in {\equationprefix} (\ref{eq:mem_spec}) satisfies the integrity property. 
\end{theorem}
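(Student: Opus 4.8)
The plan is to obtain \theoremprefix~\ref{thm:integrity_mem} as a direct instantiation of the compositionality result \theoremprefix~\ref{thm:comp_integrity} for the specification \textbf{\emph{Mem\_Spec}} of \equationprefix~(\ref{eq:mem_spec}), using the security configuration (the interference relation $\interf$, the domain-indexed equivalences $\dsim{\symbdomain}$, and the dynamic domain map $dom\_e$) already fixed in \sectprefix~\ref{subsect:securityproof}. Since the genuine information-flow content has been isolated in \lemmaprefix~\ref{lm:integrity_evt_mem}, the remaining work is to discharge the five premises of \theoremprefix~\ref{thm:comp_integrity} under a fixed assignment $\Gamma$ of rely-guarantee specifications to events: $\Gamma(alloc)$ and $\Gamma(free)$ are the quadruples of \lemmaprefix~\ref{lm:cor_alloc} and \lemmaprefix~\ref{lm:cor_free}, while $\Gamma(schedule)$ and $\Gamma(tick)$ carry the guarantees $schedule\_guar$ and $tick\_guar$ used in \theoremprefix~\ref{thm:cor}.

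First I would set $\symbconf_0 = (Mem\_Spec, s_0, \symbevtctx_0)$ with $s_0$ the initial state of \theoremprefix~\ref{thm:cor}, discharging premise (1). Premise (2) is immediate from the syntax of \equationprefix~(\ref{eq:mem_spec}): each of $alloc$, $free$, $schedule$ and $tick$ is declared through the \textbf{EVENT} construct and is therefore a basic event, never an anonymous (triggered) one. Premise (3), the per-event satisfaction $\rgsat{\symbevt}{\Gamma(\symbevt)}$, is exactly \lemmaprefix~\ref{lm:cor_alloc} and \lemmaprefix~\ref{lm:cor_free} for the two memory services, together with the analogous and much simpler facts for $schedule$ and $tick$, whose bodies trivially lie within their guarantees.

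Premise (4) requires $\rgsat{Mem\_Spec}{\rgcond{\{s_0\}}{\{\}}{UNIV}{UNIV}}$. This I would obtain from \theoremprefix~\ref{thm:cor}, which establishes $\rgsat{Mem\_Spec}{\rgcond{\{s_0\}}{\{\}}{Guar}{UNIV}}$ for the closed system with empty environment; since $Guar \subseteq UNIV$, enlarging the guarantee to $UNIV$ (and leaving the postcondition $UNIV$ unchanged) by the consequence rule of the \slang{} proof system yields the trivial specification needed here, while carrying along the guarantee/rely compatibility that the \textsc{Par} rule already verified inside \theoremprefix~\ref{thm:cor}. Finally, premise (5), the integrity on events of \defprefix~\ref{def:integrity_evt}, is precisely \lemmaprefix~\ref{lm:integrity_evt_mem}, stated with $G_{\Gamma(ev)}$ and the very same $\interf$, $\dsim{\symbdomain}$ and $dom\_e$. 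With all five premises in hand, \theoremprefix~\ref{thm:comp_integrity} delivers the integrity of \textbf{\emph{Mem\_Spec}}.

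The main obstacle is not any single premise but coherence of the instantiations: the $\Gamma$ used for premises (3) and (5) must be literally the same assignment, so that the guarantee condition \emph{Mem\_pool\_guar}, whose clause (5) (memory separation at thread level) drives \lemmaprefix~\ref{lm:integrity_evt_mem}, is identical to the guarantee proved in \lemmaprefix~\ref{lm:cor_alloc} and \lemmaprefix~\ref{lm:cor_free}. I expect the only subtle check to be that the closed-system weakening for premise (4) preserves the per-component obligations supplied by \theoremprefix~\ref{thm:cor}, i.e. that relaxing to $G = UNIV$ and $pst = UNIV$ does not disturb the compositional structure; this is exactly the relaxed-condition argument anticipated in the discussion following \theoremprefix~\ref{thm:comp_integrity}.
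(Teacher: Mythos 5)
Your proposal is correct and follows essentially the same route as the paper's own proof: instantiate \theoremprefix~\ref{thm:comp_integrity} with the security configuration of \sectprefix~\ref{subsect:securityproof}, discharge the rely-guarantee premises via \theoremprefix~\ref{thm:cor} (and its underlying \lemmaprefix~\ref{lm:cor_alloc} and \lemmaprefix~\ref{lm:cor_free}), supply the integrity-on-events premise via \lemmaprefix~\ref{lm:integrity_evt_mem}, and check that all events of $Mem\_Spec$ are basic. Your elaboration of premise (4) by weakening $Guar \subseteq UNIV$ and your remark on keeping $\Gamma$ coherent across premises (3) and (5) are just explicit spellings of steps the paper leaves implicit in its one-line argument.
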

\begin{proof}
The proof is straightforward by {\theoremprefix} \ref{thm:comp_integrity}, {\theoremprefix} \ref{thm:cor} and {\lemmaprefix} \ref{lm:integrity_evt_mem}. The rest to prove is $\forall ev \in evts(Mem\_Spec).\ is\_basic(ev)$, which is straightforward since the events are all basic events according to the definition of $Mem\_Spec$. 
\end{proof}

\section{Result and Evaluation} 
\label{sect:result}

\subsection{Verification Effort}

The verification conducted in this work is on Zephyr v1.8.0. The C code of the buddy memory management is $\approx$ 400 lines, not counting blank lines and comments. {\tableprefix} \ref{tbl:stat} shows the statistics for the effort and size of the proofs in the Isabelle/HOL theorem prover. In total, the models and mechanized verification consists of $\approx$ 34,000 lines of specification and proofs (LOSP), and the total effort is $\approx$ 26 person-months (PM), where the security proof in {\slang} takes 4 PMs. The specification and proof of {\slang} are reusable for the verification of other systems. We develop $\approx$ 18,200 LOSP for the concurrent memory management of Zephyr, 40 times more than the lines of the C code due to the in-kernel concurrency, where invariant proofs represent the largest part. This takes 14 PMs. Since  the safety and security properties are represented by the guarantee conditions of the memory management services, the final theorems to show the safety and security are relatively small, taking 400 LOSP.

\begin{table}[t]
\centering
\footnotesize
\caption{Specification and Proof Statistics} 
\begin{tabular} {|c|c|c|c|c|c|}
\hline
\multicolumn{3}{|c|}{\textbf{{\slang} Language}} & \multicolumn{3}{c|}{\textbf{Memory Management}} \\
\hline
\textbf{Item} & \textbf{LOSP} & \textbf{PM} & \textbf{Item} & \textbf{LOSP} & \textbf{PM}\\
\hline
\textit{Language and Proof Rules} & 700 & \multirow{4}{*}{8} & \textit{Specification} & 400 & \multirow{5}{*}{14} \\
\cline{1-2} \cline{4-5} 
\textit{Lemmas of Language/Semantics} & 3,000 & & \textit{Auxiliary Lemmas/Invariant} & 1,700 & \\
\cline{1-2} \cline{4-5} 
\textit{Soundness} & 7,100 & & \textit{Rely-guarantee Proof of Allocation} & 10,700 & \\
\cline{1-2} \cline{4-5} 
\textit{Invariant} & 100 & & \textit{Rely-guarantee Proof of Free} & 5,000 & \\
\cline{1-5} 
\textit{Security} & 4,800 & 4 & \textit{Proof of Safety and Security} & 400 & \\
\hline
\textbf{Total} & 15,700 & 12 & \textbf{Total} & 18,200 & 14 \\
\hline
\end{tabular}
\label{tbl:stat}
\end{table} 

\subsection{Bugs Found in Zephyr}
During the formal verification, we found 3 bugs in the C code of Zephyr. 
The first two bugs are critical and have been repaired in the latest release of Zephyr. To avoid the third one, callers to \emph{k\_mem\_pool\_alloc} have to constrain the argument \emph{t\_size size}. 

\textbf{(1) Incorrect block split}: this bug is located in the loop in Line 20 of the \emph{k\_mem\_pool\_alloc} service, shown in {\figprefix} \ref{fig:mem_alloc_code}. The \emph{level\_empty} function checks if a pool $p$ has blocks in the free list at level \emph{alloc\_l}. Concurrent threads may release a memory block at that level, making the call to \emph{level\_empty(p, alloc\_l)} to return \emph{false} and stopping the loop. In such case, it allocates a memory block of a bigger capacity at a level $i$ but it still sets the level number of the block as \emph{alloc\_l} at Line 23. The service allocates a larger block to the requesting thread causing an internal fragmentation of $max\_sz / 4 ^ i - max\_sz / 4 ^ {alloc\_l}$ bytes. When this block is released, it will be inserted into the free list at level \emph{alloc\_l}, but not at level $i$, causing an external fragmentation of $max\_sz / 4 ^ i - max\_sz / 4 ^ {alloc\_l}$. 
The bug is fixed by removing the condition \emph{level\_empty(p, alloc\_l)} in our specification. 

\textbf{(2) Incorrect return from \emph{k\_mem\_pool\_alloc}}: 
this bug is found at Line 36 in {\figprefix}~\ref{fig:mem_alloc_code}. When a suitable free block is allocated by another thread, the \emph{pool\_alloc} function returns \emph{EAGAIN} at Line 18 to ask the thread to retry the allocation. When a thread invokes \emph{k\_mem\_pool\_alloc} in \emph{FOREVER} mode and this case happens, the service returns \emph{EAGAIN} immediately. However, a thread invoking \emph{k\_mem\_pool\_alloc} in \emph{FOREVER} mode should keep retrying when it does not succeed. We repair the bug by removing the condition $ret == EAGAIN$ at Line 36. As explained in the comments of the C Code, \emph{EAGAIN} should not be returned to threads invoking the service. Moreover, the \emph{return EAGAIN} at Line 48 is actually the case of time out. Thus, we introduce a new return code \emph{ETIMEOUT} in our specification. 

\textbf{(3) Non-termination of \emph{k\_mem\_pool\_alloc}}:   
we have discussed that the loop statement at Lines 34 - 47 in {\figprefix} \ref{fig:mem_alloc_code} does not terminate. However, it should terminate in certain cases, which are actually violated in the C code. 
When a thread requests a memory block in \emph{FOREVER} mode and the requested size is larger than \emph{max\_sz}, the maximum size of blocks, the loop at Lines 34 - 47 in {\figprefix} \ref{fig:mem_alloc_code} never finishes since \emph{pool\_alloc} always returns \emph{ENOMEM}. The reason is that the ``\emph{return ENOMEM}'' at Line 15 does not distinguish two cases, $alloc\_l < 0$ and $free\_l < 0$. In the first case, the requested size is larger than \emph{max\_sz} and the kernel service should return immediately. In the second case, there are no free blocks larger than the requested size and the service tries forever until some free block available. We repair the bug by splitting the \emph{if} statement at Lines 13 - 16 into these two cases and introducing a new return code \emph{ESIZEERR} in our specification. Then, we change the condition at Line 36  to check that the returned value is \emph{ESIZEERR} instead of \emph{ENOMEM}.

\subsection{Further Related Work}

\citeauthor{Klein09a} \cite{Klein09a} presented the first formal verification of the functional correctness and security properties of a general-purpose OS kernel in Isabelle/HOL took roughly 20 person years for 10,000 lines of C code. To reduce the cost of formal verification, \citeauthor{Yang11} \cite{Yang11} demonstrated mechanical verification of Verve, an operating system and run-time system to ensure both the safety and correctness using Boogie and the Z3 SMT solver, which only 2-3 lines of proof annotation per executable statement. 
\citeauthor{Nelson17} \cite{Nelson17} proposed an approach to designing, implementing, and formally verifying the functional correctness of Hyperkernel with a high degree of proof automation and low proof burden with the Z3 SMT solver.
However, all these works did not consider concurrent OS kernels. 

Examples of recent progress in formal verification of concurrent OS kernels are CertiKOS with multicore support \cite{Gu16},
a practical verification framework for preemptive OS kernels to reason about interrupts \cite{Xu16}, and a compositional verification of
interruptible OS kernels with device drivers \cite{Chen16}. 
The Verisoft team \cite{Leinen09,Alka10} applied the VCC framework to formally verify Hyper-V, which is a widely deployed multiprocessor hypervisor by Microsoft consisting of 100 kLOC of concurrent C code and 5 kLOC of assembly. 

To ease the formal verification, a large portion of related work make assumptions on the targeted OS kernel. The formal verification of seL4 \cite{Klein09a} changed the C code of L4 microkernel and thus disabled the in-kernel concurrency. \citeauthor{Nelson17} made the Hyperkernel \cite{Nelson17} interface finite, avoiding unbounded loops, recursion, or complex data structures. 
As opposed to brand new research systems developed for verifiability (e.g. CertiKOS \cite{Gu16}), this article presented the first verification of a 3rd-party existing and realistic concurrent OS. Our formal specification in {\slang} completely corresponds to the execution behavior with fine-grained concurrency of the Zephyr C code. 


Formal verification of OS memory management has been studied in sequential and concurrent OS kernels, such as CertiKOS \cite{Vaynberg12,Gu16}, seL4 \cite{Klein04,Klein09a}, Verisoft \cite{Alkassar08}, and in the hypervisors from \cite{Blan15,Bolig16}, where only the works in \cite{Gu16,Blan15} considered concurrency. Comparing to buddy memory allocation, the data structures and algorithms verified in \cite{Gu16} are relatively simpler, without block split/coalescence and multiple levels of free lists and bitmaps. The work in \cite{Blan15} only considered virtual mapping but not allocation or deallocation of memory areas. 

Algorithms and implementations of dynamic memory allocation have been formally specified and verified in an extensive number of works \cite{Yu03,Fang17a,Marti06,Su16,Fang17b,Fang18}. However, the buddy memory allocation was only studied in \cite{Fang18}, which did not consider concrete data structures (e.g. bitmaps) and concurrency. 
A memory model \cite{Saraswat07} provides the necessary abstraction to separate the behaviour of a program from the behaviour of the memory it reads and writes. There are many formalizations of memory models in the literature, e.g., \cite{Leroy2008,Tews2009,Gallardo2009,Sevcik13,Mansky15}, where some of them only created an abstract specification of the services for memory allocation and release \cite{Gallardo2009,Sevcik13,Mansky15}. 
Our article presents the first formal specification and mechanized proof for concurrent memory allocation of a realistic operating system. 

Integrity is a sort of information-flow security (IFS) which deals with the problem of preventing improper release and modification of information in complex systems. 
Language-based IFS \cite{sabel03} defines security policies on programming languages and concerns the data confidentiality among program
variables. Compositional verification of language-based IFS has been conducted in \cite{Mantel11,Murray16,Murray18}. 
Formal verification of IFS on OS kernels need to considers the events (e.g. kernel services, interrupt handlers) rather than on pure programs. Therefore, state-event IFS \cite{rushby92,Oheimb04,Murray12} is usually applied to OS kernels (e.g. \cite{Murray13,Dam13,Costanzo16,zhao19tdsc}). 
However, formal verification of state-event IFS for concurrent systems (e.g. OS kernels) has not been addressed in literature. 
Our article presents the first integrity verification of concurrent OS kernels. 





\subsection{Limitations and Discussion}

The state of the art of formal verification of OS kernels focuses on the implementation level (e.g. \cite{Klein09a,Yang11,Gu16,Nelson17}). One limitation of this work is that formal verification is enforced at the low-level design specification. 

The first concern of this decision is that our work aims at the highest evaluation assurance level (EAL 7) of Common Criteria (CC) \cite{cc}, which was declared as the candidate standard for security certification by the Zephyr project. With regard to the EAL 7, a main requirement of functional specification addressed by formal methods is a complete formal and modular design of Target of Evaluation (TOE) with security proofs, rather than mandating the formal verification at source code level. In this article, we develop a fine-grained low level formal specification of Zephyr. The specification closely follows the Zephyr C code, and thus is able to do the \emph{code-to-spec} review required by the EAL 7 evaluation, covering all the data structures and imperative statements present in the implementation.

Second, formally verifying functional correctness, safety and security of concurrent C programs, in particular the memory management of Zephyr, is not well supported by the state of the art of C verifiers (e.g. VCC \cite{Kroe14} \cite{Cohen09}, Frame-C \cite{Kirchner15}, CBMC). 
\emph{Simpl} \cite{Schirm06} is a generic imperative language embedded into Isabelle/HOL that was designed as an intermediate language for program verification. In the seL4 project, the C code was translated into \emph{Simpl} and then a state monad representation by the \emph{CParser} and \emph{AutoCorres} tools, which do not support concurrent C programs. Though we have extended \emph{Simpl} to \emph{CSimpl} by concurrent statements and a rely-guarantee proof system in \cite{Sanan2017}, a new parser for \emph{CSimpl} is still under development. 

Another limitation and thus a challenge for concurrent OS kernels is scalability. Our work has been carried out in the Isabelle/HOL interactive theorem prover, and thus is labor-intensive. We developed specification and proof, 40 times more than the lines of the C code due to the in-kernel concurrency, where the complicated invariant proofs represent the largest part. It is certainly that Isabelle/HOL has high degree of proof automation by integrating various SMT solvers and automatic provers. Proof automation for OS kernels has been made significant progress in recent years, successful examples of this direction are \cite{Yang11,Nelson17} closely reaching 100\%. However, automatic verification of concurrent OS kernels is still a challenging \cite{Nelson17}. Moreover, higher degree of automation implies simpler specification or properties of kernels. For instance, to enable automated verification with SMT solvers, \citeauthor{Nelson17} made the Hyperkernel interface finite, avoiding unbounded loops, recursion, or complex data structures. 

With regards to security, currently {\slang} only supports integrity of concurrent reactive systems. In order for {\slang} to support confidentiality, it is necessary some modification on its core, and more specifically in the labels associated with actions. This change allows the security framework to obtain the necessary information to properly generate the possible parallel execution traces. Additionally, seeking compositional reasoning at both the security (by applying the step consistency unwinding condition), and concurrent (by applying rely-guarantee) levels, we are adding a simulation framework for {\slang} similar to CSim$^2$ for CSimpl~\cite{Sanan20}. This new framework will provide a new set of inference rules for the compositional verification of confidentiality preservation of {\slang} specifications.

Finally, for the purpose of EAL7 evaluation of the concurrent Zephyr RTOS, modular compositionality of the specification and proof is a necessary approach, such as the technique of certified abstract layers \cite{Gu15popl,Gu18pldi}. Due to the concept of ``event'' in {\slang} and their sequential composition for event systems as well as their parallel composition for the whole system, the formal specification of {\slang} is compositional. Moreover, the formal proof is compositional as well thanks to the compositional reasoning approach in the  {\slang} proof system for functional correctness, safety and security.  


\section{Conclusion and Future Work} 
\label{sect:concl}

In this article, we have developed a formal specification at low-level design of the concurrent buddy memory management of Zephyr RTOS. Using the rely-guarantee technique in the {\slang} framework, we have formally verified a set of critical properties for OS kernels such as correctness, safety and security. Finally, we identified some critical bugs in the C code of Zephyr. 

Our work explores the challenges and cost of certifying concurrent OSs for the highest-level assurance. The definition of properties and rely-guarantee relations is complex and the verification task becomes expensive. We used $\approx$ 40 times of LOS/LOP than the C code at low-level design. 
Next, we are planning to verify other modules of Zephyr, which may be easier due to simpler data structures and algorithms. 
We are also working on extending {\slang} to support verification of confidentiality on concurrent reactive systems, which together with the already supported integrity, will allow to verify non-interference of {\slang} specifications. 
For the purpose of fully formal verification of OSs at source code level, we will replace the imperative language in {\slang} by a more expressive one and add a verification condition generator (VCG) to reduce the cost of the verification. To improve modularity, we are also working on horizontal verification, in order to compose the state of different modules of OS kernels whilst making few changes to the functional specification and formal proof.

%

%

\bibliographystyle{ACM-Reference-Format}
\bibliography{paperref}

\appendix
\section{C Code of \emph{k\_mem\_pool\_free}}
\label{appx:mempoolfree_c}

\begin{lstlisting}
static void free_block(struct k_mem_pool *p, int level, size_t *lsizes, int bn)
{
  int i, key, lsz = lsizes[level];
  void *block = block_ptr(p, lsz, bn);

  key = irq_lock();

  set_free_bit(p, level, bn);

  if (level && partner_bits(p, level, bn) == 0xf) {
    for (i = 0; i < 4; i++) {
      int b = (bn & ~3) + i;

      clear_free_bit(p, level, b);
      if (b != bn && block_fits(p, block_ptr(p, lsz, b), lsz)) {
        sys_dlist_remove(block_ptr(p, lsz, b));
      }
    }

    irq_unlock(key);
    free_block(p, level-1, lsizes, bn / 4); /* tail recursion! */
    return;
  }

  if (block_fits(p, block, lsz)) {
    sys_dlist_append(&p->levels[level].free_list, block);
  }

  irq_unlock(key);
}

void k_mem_pool_free(struct k_mem_block *block)
{
  int i, key, need_sched = 0;
  struct k_mem_pool *p = get_pool(block->id.pool);
  size_t lsizes[p->n_levels];

  /* As in k_mem_pool_alloc(), we build a table of level sizes
  * to avoid having to store it in precious RAM bytes.
  * Overhead here is somewhat higher because free_block()
  * doesn't inherently need to traverse all the larger
  * sublevels.
  */
  lsizes[0] = _ALIGN4(p->max_sz);
  for (i = 1; i <= block->id.level; i++) {
    lsizes[i] = _ALIGN4(lsizes[i-1] / 4);
  }

  free_block(get_pool(block->id.pool), block->id.level, lsizes, block->id.block);

  /* Wake up anyone blocked on this pool and let them repeat
   * their allocation attempts
   */
  key = irq_lock();

  while (!sys_dlist_is_empty(&p->wait_q)) {
    struct k_thread *th = (void *)sys_dlist_peek_head(&p->wait_q);

    _unpend_thread(th);
    _abort_thread_timeout(th);
    _ready_thread(th);
    need_sched = 1;
  }

  if (need_sched && !_is_in_isr()) {
    _reschedule_threads(key);
  } else {
    irq_unlock(key);
  }
}
\end{lstlisting}

\section{Specification and Proof Sketch of \emph{k\_mem\_pool\_free}}
\label{appx:mempoolfree}

The formal specification of \emph{k\_mem\_pool\_free} (in \emph{black} color) and its rely-guarantee proof sketch (in \emph{blue} color) are shown as follows. 

\vspace{5mm}
\isabellestyle{sl} 
\begin{isabellec} \fontsize{8pt}{0cm} 
\specrg{\isacommand{Mem{\isacharunderscore}pool{\isacharunderscore}free{\isacharunderscore}pre}\ t\ {\isasymequiv}\ {\isasymlbrace} {\isasymacute}inv\ {\isasymand}\ {\isasymacute}allocating{\isacharunderscore}node\ t\ {\isacharequal}\ None\ {\isasymand}\ {\isasymacute}freeing{\isacharunderscore}node\ t\ {\isacharequal}\ None{\isasymrbrace}}

\isacommand{EVENT}\ \isacommand{Mem{\isacharunderscore}pool{\isacharunderscore}free}\ [Block\ b]\ @\ {\isacharparenleft}$\mathcal{T}$\ t{\isacharparenright}\isanewline
\isacommand{WHEN}\isanewline
\quad pool\ b\ {\isasymin}\ {\isasymacute}mem{\isacharunderscore}pools\ \isanewline
\quad {\isasymand}\ level\ b\ {\isacharless}\ length\ {\isacharparenleft}levels\ {\isacharparenleft}{\isasymacute}mem{\isacharunderscore}pool{\isacharunderscore}info\ {\isacharparenleft}pool\ b{\isacharparenright}{\isacharparenright}{\isacharparenright}\isanewline
\quad {\isasymand}\ block\ b\ {\isacharless}\ length\ {\isacharparenleft}bits\ {\isacharparenleft}levels\ {\isacharparenleft}{\isasymacute}mem{\isacharunderscore}pool{\isacharunderscore}info\ {\isacharparenleft}pool\ b{\isacharparenright}{\isacharparenright}{\isacharbang}{\isacharparenleft}level\ b{\isacharparenright}{\isacharparenright}{\isacharparenright}\isanewline
\quad {\isasymand}\ data\ b\ {\isacharequal}\ block{\isacharunderscore}ptr\ {\isacharparenleft}{\isasymacute}mem{\isacharunderscore}pool{\isacharunderscore}info\ {\isacharparenleft}pool\ b{\isacharparenright}{\isacharparenright}\isanewline
\quad \quad \quad \quad \quad  {\isacharparenleft}{\isacharparenleft}ALIGN{\isadigit{4}}\ {\isacharparenleft}max{\isacharunderscore}sz\ {\isacharparenleft}{\isasymacute}mem{\isacharunderscore}pool{\isacharunderscore}info\ {\isacharparenleft}pool\ b{\isacharparenright}{\isacharparenright}{\isacharparenright}{\isacharparenright}\ div\ {\isacharparenleft}{\isadigit{4}}\ {\isacharcircum}\ {\isacharparenleft}level\ b{\isacharparenright}{\isacharparenright}{\isacharparenright}\ {\isacharparenleft}block\ b{\isacharparenright}\isanewline
\isacommand{THEN}

\quad \specrg{Mem{\isacharunderscore}pool{\isacharunderscore}free{\isacharunderscore}pre\ t\ {\isasyminter}\ {\isasymlbrace} g {\isasymrbrace}}
\quad \speccomment{(* g is the guard condition of the event *)}

\quad \speccomment{{\isacharparenleft}{\isacharasterisk}\ here\ we\ set\ the\ bit\ to\ FREEING{\isacharcomma}\ so\ that\ other\ thread\ cannot\ mem{\isacharunderscore}pool{\isacharunderscore}free\ the\ same\ block\ \isanewline
\ \ \ \ \ \ \ it\ also\ requires\ that\ it\ can\ only\ free\ ALLOCATED\ block\ {\isacharasterisk}{\isacharparenright}}

\quad t\ \isactrlenum \ \isacommand{AWAIT}\ {\isacharparenleft}bits\ {\isacharparenleft}{\isacharparenleft}levels\ {\isacharparenleft}{\isasymacute}mem{\isacharunderscore}pool{\isacharunderscore}info\ {\isacharparenleft}pool\ b{\isacharparenright}{\isacharparenright}{\isacharparenright}\ {\isacharbang}\ {\isacharparenleft}level\ b{\isacharparenright}{\isacharparenright}{\isacharparenright}\ {\isacharbang}\ {\isacharparenleft}block\ b

\quad \quad \quad \quad \quad \quad \quad \quad \quad {\isacharequal}\ ALLOCATED\ \isacommand{THEN}

\quad \quad \quad \quad {\isasymacute}mem{\isacharunderscore}pool{\isacharunderscore}info\ {\isacharcolon}{\isacharequal}\ set{\isacharunderscore}bit{\isacharunderscore}freeing\ {\isasymacute}mem{\isacharunderscore}pool{\isacharunderscore}info\ {\isacharparenleft}pool\ b{\isacharparenright}\ {\isacharparenleft}level\ b{\isacharparenright}\ {\isacharparenleft}block\ b{\isacharparenright}{\isacharsemicolon}{\isacharsemicolon}

\quad \quad \quad \quad {\isasymacute}freeing{\isacharunderscore}node\ {\isacharcolon}{\isacharequal}\ {\isasymacute}freeing{\isacharunderscore}node\ {\isacharparenleft}t\ {\isacharcolon}{\isacharequal}\ Some\ b{\isacharparenright}

\quad \quad \quad \isacommand{END}{\isacharparenright}{\isacharsemicolon}{\isacharsemicolon}

\quad \specrg{\isacommand{mp{\isacharunderscore}free{\isacharunderscore}precond{\isadigit{1}}}\ t\ b\ {\isasymequiv}\ {\isasymlbrace} {\isasymacute}inv\ {\isasymand}\ {\isasymacute}allocating{\isacharunderscore}node\ t\ {\isacharequal}\ None\ {\isasymand} \ g\ {\isasymand}\ {\isasymacute}freeing{\isacharunderscore}node\ t\ {\isacharequal}\ Some\ b{\isasymrbrace}}

\quad \speccomment{(* remove the mem block from the thread's allocated block set *)}

\quad t\ \isactrlenum \ {\isasymacute}mblocks := {\isasymacute}mblocks(t:={\isasymacute}mblocks\ t\ - \{{\isasymlbrace} pool = (pool\ b), level={\isasymacute}lvl\ t,block={\isasymacute}bn\ t, 

\quad \quad \quad \quad \quad \quad \quad \quad \quad 
data=block\_ptr\ ({\isasymacute}mem\_pool\_info\ (pool\ b))

\quad \quad \quad \quad \quad \quad \quad \quad \quad \quad \quad
(((ALIGN4\ (max\_sz\ ({\isasymacute}mem\_pool\_info\ (pool\ b))))\ div\ (4 {\isacharcircum} ({\isasymacute}lvl\ t))))\ ({\isasymacute}bn\ t) {\isasymrbrace}\});;

\quad \specrg{\isacommand{mp{\isacharunderscore}free{\isacharunderscore}precond{\isadigit{2}}}\ t\ b\ {\isasymequiv}\
\isacommand{mp{\isacharunderscore}free{\isacharunderscore}precond{\isadigit{1}}}\ t\ b}

\quad t\ \isactrlenum \ {\isasymacute}need{\isacharunderscore}resched\ {\isacharcolon}{\isacharequal}\ {\isasymacute}need{\isacharunderscore}resched{\isacharparenleft}t\ {\isacharcolon}{\isacharequal}\ False{\isacharparenright}{\isacharsemicolon}{\isacharsemicolon}

\quad \specrg{\isacommand{mp{\isacharunderscore}free{\isacharunderscore}precond{\isadigit{3}}}\ t\ b\ {\isasymequiv}\ {\isacharparenleft}mp{\isacharunderscore}free{\isacharunderscore}precond{\isadigit{2}}\ t\ b{\isacharparenright}\ {\isasyminter}\ {\isasymlbrace}{\isasymacute}need{\isacharunderscore}resched\ t\ {\isacharequal}\ False{\isasymrbrace}}

\quad t\ \isactrlenum \ {\isasymacute}lsizes\ {\isacharcolon}{\isacharequal}\ {\isasymacute}lsizes{\isacharparenleft}t\ {\isacharcolon}{\isacharequal}\ {\isacharbrackleft}ALIGN{\isadigit{4}}\ {\isacharparenleft}max{\isacharunderscore}sz\ {\isacharparenleft}{\isasymacute}mem{\isacharunderscore}pool{\isacharunderscore}info\ {\isacharparenleft}pool\ b{\isacharparenright}{\isacharparenright}{\isacharparenright}{\isacharbrackright}{\isacharparenright}{\isacharsemicolon}{\isacharsemicolon}

\quad \specrg{\isacommand{mp{\isacharunderscore}free{\isacharunderscore}precond{\isadigit{4}}}\ t\ b\ {\isasymequiv}\ \isanewline
\quad \quad mp{\isacharunderscore}free{\isacharunderscore}precond{\isadigit{3}}\ t\ b\ {\isasyminter}\ {\isasymlbrace}{\isasymacute}lsizes\ t\ {\isacharequal}\ {\isacharbrackleft}ALIGN{\isadigit{4}}\ {\isacharparenleft}max{\isacharunderscore}sz\ {\isacharparenleft}{\isasymacute}mem{\isacharunderscore}pool{\isacharunderscore}info\ {\isacharparenleft}pool\ b{\isacharparenright}{\isacharparenright}{\isacharparenright}{\isacharbrackright}{\isasymrbrace}}

\quad \isacommand{FOR}\ {\isacharparenleft}t\ \isactrlenum \ {\isasymacute}i\ {\isacharcolon}{\isacharequal}\ {\isasymacute}i{\isacharparenleft}t\ {\isacharcolon}{\isacharequal}\ {\isadigit{1}}{\isacharparenright}{\isacharparenright}{\isacharsemicolon}\  {\isasymacute}i\ t\ {\isasymle}\ level\ b{\isacharsemicolon}\ \ {\isacharparenleft}t\ \isactrlenum \ {\isasymacute}i\ {\isacharcolon}{\isacharequal}\ {\isasymacute}i{\isacharparenleft}t\ {\isacharcolon}{\isacharequal}\ {\isasymacute}i\ t\ {\isacharplus}\ {\isadigit{1}}{\isacharparenright}{\isacharparenright}\ \isacommand{DO}\isanewline
\quad \quad t\ \isactrlenum \ {\isasymacute}lsizes\ {\isacharcolon}{\isacharequal}\ {\isasymacute}lsizes{\isacharparenleft}t\ {\isacharcolon}{\isacharequal}\ {\isasymacute}lsizes\ t\ {\isacharat}\ {\isacharbrackleft}ALIGN{\isadigit{4}}\ {\isacharparenleft}{\isasymacute}lsizes\ t\ {\isacharbang}\ {\isacharparenleft}{\isasymacute}i\ t\ {\isacharminus}\ {\isadigit{1}}{\isacharparenright}\ div\ {\isadigit{4}}{\isacharparenright}{\isacharbrackright}{\isacharparenright}

\quad \isacommand{ROF}{\isacharsemicolon}{\isacharsemicolon}

\quad \specrg{\isacommand{mp{\isacharunderscore}free{\isacharunderscore}precond{\isadigit{5}}}\ t\ b\ {\isasymequiv}\ mp{\isacharunderscore}free{\isacharunderscore}precond{\isadigit{3}}\ t\ b \ {\isasyminter} \isanewline
\quad \quad {\isasymlbrace}{\isacharparenleft}{\isasymforall}ii{\isacharless}length\ {\isacharparenleft}{\isasymacute}lsizes\ t{\isacharparenright}{\isachardot}\ {\isasymacute}lsizes\ t\ {\isacharbang}\ ii\ {\isacharequal}\ {\isacharparenleft}ALIGN{\isadigit{4}}\ {\isacharparenleft}max{\isacharunderscore}sz\ {\isacharparenleft}{\isasymacute}mem{\isacharunderscore}pool{\isacharunderscore}info\ {\isacharparenleft}pool\ b{\isacharparenright}{\isacharparenright}{\isacharparenright}{\isacharparenright}\isanewline
\quad \quad \quad \quad \quad \quad \quad \quad \quad \quad \quad \quad \quad \quad \quad \quad \quad \quad \quad \quad 
div\ {\isacharparenleft}{\isadigit{4}}\ {\isacharcircum}\ ii{\isacharparenright}{\isacharparenright}\ {\isasymand}\ length\ {\isacharparenleft}{\isasymacute}lsizes\ t{\isacharparenright}\ {\isachargreater}\ level\ b{\isasymrbrace}}

\quad \speccomment{{\isacharparenleft}{\isacharasterisk}\ {\isacharequal} {\isacharequal} {\isacharequal} start{\isacharcolon}\ free{\isacharunderscore}block{\isacharparenleft}pool{\isacharcomma}\ level{\isacharcomma}\ lsizes{\isacharcomma}\ block{\isacharparenright}{\isacharsemicolon}\ {\isacharequal} {\isacharequal} {\isacharequal}{\isacharasterisk}{\isacharparenright}}

\quad t\ \isactrlenum \ {\isasymacute}free{\isacharunderscore}block{\isacharunderscore}r\ {\isacharcolon}{\isacharequal}\ {\isasymacute}free{\isacharunderscore}block{\isacharunderscore}r\ {\isacharparenleft}t\ {\isacharcolon}{\isacharequal}\ True{\isacharparenright}{\isacharsemicolon}{\isacharsemicolon}

\quad \specrg{\isacommand{mp{\isacharunderscore}free{\isacharunderscore}precond{\isadigit{6}}}\ t\ b\ {\isasymequiv}\ mp{\isacharunderscore}free{\isacharunderscore}precond{\isadigit{5}}\ t\ b\ {\isasyminter}\ {\isasymlbrace}{\isasymacute}free{\isacharunderscore}block{\isacharunderscore}r\ t\ {\isacharequal}\ True{\isasymrbrace}}

\quad t\ \isactrlenum \ {\isasymacute}bn\ {\isacharcolon}{\isacharequal}\ {\isasymacute}bn\ {\isacharparenleft}t\ {\isacharcolon}{\isacharequal}\ block\ b{\isacharparenright}{\isacharsemicolon}{\isacharsemicolon}

\quad \specrg{\isacommand{mp{\isacharunderscore}free{\isacharunderscore}precond{\isadigit{7}}}\ t\ b\ {\isasymequiv}\ mp{\isacharunderscore}free{\isacharunderscore}precond{\isadigit{6}}\ t\ b\ {\isasyminter}\ {\isasymlbrace}{\isasymacute}bn\ t\ {\isacharequal}\ block\ b{\isasymrbrace}}

\quad t\ \isactrlenum \ {\isasymacute}lvl\ {\isacharcolon}{\isacharequal}\ {\isasymacute}lvl\ {\isacharparenleft}t\ {\isacharcolon}{\isacharequal}\ level\ b{\isacharparenright}{\isacharsemicolon}{\isacharsemicolon}

\quad \specrg{
\isacommand{mp{\isacharunderscore}free{\isacharunderscore}loopinv}\ t\ b\ {\isasymalpha}
}\isanewline
\quad \isacommand{WHILE}\ {\isasymacute}free{\isacharunderscore}block{\isacharunderscore}r\ t\ \isacommand{DO}\isanewline
\quad \specrg{
\isacommand{mp{\isacharunderscore}free{\isacharunderscore}cnd1}\ t\ b\ {\isasymalpha}\ {\isasymequiv} mp{\isacharunderscore}free{\isacharunderscore}loopinv\ t\ b\ {\isasymalpha} {\isasyminter} {\isasymlbrace} {\isasymalpha} {\isachargreater} 0 {\isasymrbrace}
}\isanewline
\quad \quad t\ {\isactrlenum} \ {\isasymacute}lsz\ {\isacharcolon}{\isacharequal}\ {\isasymacute}lsz\ {\isacharparenleft}t\ {\isacharcolon}{\isacharequal}\ {\isasymacute}lsizes\ t\ {\isacharbang}\ {\isacharparenleft}{\isasymacute}lvl\ t{\isacharparenright}{\isacharparenright}{\isacharsemicolon}{\isacharsemicolon}\isanewline
\quad \specrg{
\isacommand{mp{\isacharunderscore}free{\isacharunderscore}cnd2}\ t\ b\ {\isasymalpha}\ {\isasymequiv} mp{\isacharunderscore}free{\isacharunderscore}cnd1 \ t\ b\ {\isasymalpha} {\isasyminter} {\isasymlbrace} {\isasymacute}lsz\ t\ {\isacharequal}\ {\isasymacute}lsizes\ t\ {\isacharbang}\ {\isacharparenleft}{\isasymacute}lvl\ t{\isacharparenright} {\isasymrbrace}
}\isanewline
\quad \quad t\ {\isactrlenum} \ {\isasymacute}blk\ {\isacharcolon}{\isacharequal}\ {\isasymacute}blk\ {\isacharparenleft}t\ {\isacharcolon}{\isacharequal}\ block{\isacharunderscore}ptr\ {\isacharparenleft}{\isasymacute}mem{\isacharunderscore}pool{\isacharunderscore}info\ {\isacharparenleft}pool\ b{\isacharparenright}{\isacharparenright}\ {\isacharparenleft}{\isasymacute}lsz\ t{\isacharparenright}\ {\isacharparenleft}{\isasymacute}bn\ t{\isacharparenright}{\isacharparenright}{\isacharsemicolon}{\isacharsemicolon}\isanewline
\quad \specrg{
\isacommand{mp{\isacharunderscore}free{\isacharunderscore}cnd3}\ t\ b\ {\isasymalpha}\ {\isasymequiv} mp{\isacharunderscore}free{\isacharunderscore}cnd2 \ t\ b\ {\isasymalpha} {\isasyminter} \isanewline
\quad \quad \quad \quad \quad \quad \quad \quad \quad \quad {\isasymlbrace} {\isasymacute}blk\ t\ {\isacharequal}\ block{\isacharunderscore}ptr\ {\isacharparenleft}{\isasymacute}mem{\isacharunderscore}pool{\isacharunderscore}info\ {\isacharparenleft}pool\ b{\isacharparenright}{\isacharparenright}\ {\isacharparenleft}{\isasymacute}lsz\ t{\isacharparenright}\ {\isacharparenleft}{\isasymacute}bn\ t{\isacharparenright} {\isasymrbrace}
}\isanewline
\quad \quad t\ {\isactrlenum} \ \isacommand{ATOM}\isanewline
\quad \quad \specrg{\{V1\} \speccomment{{\isacharparenleft}V1 {\isasymin} mp{\isacharunderscore}free{\isacharunderscore}cnd3 \ t\ b\ {\isasymalpha} {\isasyminter} {\isasymlbrace}{\isasymacute}cur\ {\isacharequal}\ Some\ t{\isasymrbrace}{\isacharparenright}}}
\isanewline
\quad \quad \quad {\isasymacute}mem{\isacharunderscore}pool{\isacharunderscore}info\ {\isacharcolon}{\isacharequal}\ set{\isacharunderscore}bit{\isacharunderscore}free\ {\isasymacute}mem{\isacharunderscore}pool{\isacharunderscore}info\ {\isacharparenleft}pool\ b{\isacharparenright}\ {\isacharparenleft}{\isasymacute}lvl\ t{\isacharparenright}\ {\isacharparenleft}{\isasymacute}bn\ t{\isacharparenright}{\isacharsemicolon}{\isacharsemicolon}\isanewline
\quad \quad \specrg{\{V2\} \speccomment{{\isacharparenleft}V2 = V1{\isasymlparr}mem{\isacharunderscore}pool{\isacharunderscore}info\ {\isacharcolon}{\isacharequal}\isanewline
\quad \quad \quad \quad \quad \quad \quad \quad \quad set{\isacharunderscore}bit{\isacharunderscore}free\ {\isacharparenleft}mem{\isacharunderscore}pool{\isacharunderscore}info\ V1{\isacharparenright}\ {\isacharparenleft}pool\ b{\isacharparenright}\ {\isacharparenleft}lvl\ V1\ t{\isacharparenright}\ {\isacharparenleft}bn\ V1\ t{\isacharparenright}{\isasymrparr}{\isacharparenright}}}
\isanewline
\quad \quad \quad {\isasymacute}freeing{\isacharunderscore}node\ {\isacharcolon}{\isacharequal}\ {\isasymacute}freeing{\isacharunderscore}node\ {\isacharparenleft}t\ {\isacharcolon}{\isacharequal}\ None{\isacharparenright}{\isacharsemicolon}{\isacharsemicolon}\isanewline
\quad \quad \specrg{\{V3\} \speccomment{{\isacharparenleft}V3 = V2{\isasymlparr}freeing{\isacharunderscore}node\ {\isacharcolon}{\isacharequal}\ {\isacharparenleft}freeing{\isacharunderscore}node\ V2{\isacharparenright}{\isacharparenleft}t\ {\isacharcolon}{\isacharequal}\ None{\isacharparenright}{\isasymrparr}{\isacharparenright}}}
\isanewline
\quad \quad \quad \isacommand{IF}\ {\isasymacute}lvl\ t\ {\isachargreater}\ {\isadigit{0}}\ {\isasymand}\ partner{\isacharunderscore}bits\ {\isacharparenleft}{\isasymacute}mem{\isacharunderscore}pool{\isacharunderscore}info\ {\isacharparenleft}pool\ b{\isacharparenright}{\isacharparenright}\ {\isacharparenleft}{\isasymacute}lvl\ t{\isacharparenright}\ {\isacharparenleft}{\isasymacute}bn\ t{\isacharparenright}\ \isacommand{THEN}

\quad \quad \quad
\speccomment{(V3 {\isasymin} {\isasymlbrace}NULL\ {\isacharless}\ {\isasymacute}lvl\ t\ {\isasymand}\ partner{\isacharunderscore}bits\ {\isacharparenleft}{\isasymacute}mem{\isacharunderscore}pool{\isacharunderscore}info\ {\isacharparenleft}pool\ b{\isacharparenright}{\isacharparenright}\ {\isacharparenleft}{\isasymacute}lvl\ t{\isacharparenright}\ {\isacharparenleft}{\isasymacute}bn\ t{\isacharparenright}{\isasymrbrace})}

\quad \quad \specrg{
\isacommand{mergeblock{\isacharunderscore}loopinv}\ V3 \ t\ b\ {\isasymalpha}\ {\isasymequiv} \isanewline
\quad \quad {\isacharbraceleft}V{\isachardot}\ let\ minf{\isadigit{0}}\ {\isacharequal}\ {\isacharparenleft}mem{\isacharunderscore}pool{\isacharunderscore}info\ V3{\isacharparenright}{\isacharparenleft}pool\ b{\isacharparenright}{\isacharsemicolon}
\ lvl{\isadigit{0}}\ {\isacharequal}\ {\isacharparenleft}levels\ minf{\isadigit{0}}{\isacharparenright}\ {\isacharbang}\ {\isacharparenleft}lvl\ V3\ t{\isacharparenright}{\isacharsemicolon}\isanewline
\quad \quad \quad \quad \quad minf{\isadigit{1}}\ {\isacharequal}\ {\isacharparenleft}mem{\isacharunderscore}pool{\isacharunderscore}info\ V{\isacharparenright}{\isacharparenleft}pool\ b{\isacharparenright}{\isacharsemicolon}
\ lvl{\isadigit{1}}\ {\isacharequal}\ {\isacharparenleft}levels\ minf{\isadigit{1}}{\isacharparenright}\ {\isacharbang}\ {\isacharparenleft}lvl\ V3\ t{\isacharparenright}\ in\ \isanewline
\quad \quad \quad \quad {\isacharparenleft}bits\ lvl{\isadigit{1}}\ {\isacharequal}\ list{\isacharunderscore}updates{\isacharunderscore}n\ {\isacharparenleft}bits\ lvl{\isadigit{0}}{\isacharparenright}\ {\isacharparenleft}{\isacharparenleft}bn\ V3\ t\ div\ {\isadigit{4}}{\isacharparenright}\ {\isacharasterisk}\ {\isadigit{4}}{\isacharparenright}\ {\isacharparenleft}i\ V\ t{\isacharparenright}\ NOEXIST{\isacharparenright}\isanewline
\quad \quad \quad \quad {\isasymand}\ {\isacharparenleft}free{\isacharunderscore}list\ lvl{\isadigit{1}}\ {\isacharequal}\ removes\ {\isacharparenleft}map\ {\isacharparenleft}{\isasymlambda}ii{\isachardot}\ block{\isacharunderscore}ptr\ minf{\isadigit{0}}\ {\isacharparenleft}lsz\ V3\ t{\isacharparenright}\isanewline
\quad \quad \quad \quad \quad \quad \quad \quad \quad \quad \quad \quad \quad {\isacharparenleft}{\isacharparenleft}bn\ V3\ t\ div\ {\isadigit{4}}{\isacharparenright}\ {\isacharasterisk}\ {\isadigit{4}}\ {\isacharplus}\ ii{\isacharparenright}{\isacharparenright}\ {\isacharbrackleft}{\isadigit{0}}{\isachardot}{\isachardot}{\isacharless}{\isacharparenleft}i\ V\ t{\isacharparenright}{\isacharbrackright}{\isacharparenright}\ {\isacharparenleft}free{\isacharunderscore}list\ lvl{\isadigit{0}}{\isacharparenright}{\isacharparenright}

\quad \quad \quad \quad {\isasymand}\ {\isacharparenleft}wait{\isacharunderscore}q\ minf{\isadigit{0}}\ {\isacharequal}\ wait{\isacharunderscore}q\ minf{\isadigit{1}}{\isacharparenright}\ {\isasymand}\ {\isacharparenleft}{\isasymforall}t{\isacharprime}{\isachardot}\ t{\isacharprime}\ {\isasymnoteq}\ t\ {\isasymlongrightarrow}\ lvars{\isacharunderscore}nochange\ t{\isacharprime}\ V\ V3{\isacharparenright}

\quad \quad \quad \quad {\isasymand}\ {\isacharparenleft}{\isasymforall}p{\isachardot}\ p\ {\isasymnoteq}\ pool\ b\ {\isasymlongrightarrow}\ mem{\isacharunderscore}pool{\isacharunderscore}info\ V\ p\ {\isacharequal}\ mem{\isacharunderscore}pool{\isacharunderscore}info\ V3\ p{\isacharparenright}\isanewline
\quad \quad \quad \quad {\isasymand}\ {\isacharparenleft}{\isasymforall}j{\isachardot}\ j\ {\isasymnoteq}\ lvl\ V3\ t\ {\isasymlongrightarrow}\ {\isacharparenleft}levels\ minf{\isadigit{0}}{\isacharparenright}{\isacharbang}j\ {\isacharequal}\ {\isacharparenleft}levels\ minf{\isadigit{1}}{\isacharparenright}{\isacharbang}j{\isacharparenright}

\quad \quad \quad \quad {\isasymand}\ {\isacharparenleft}V{\isacharcomma}V3{\isacharparenright}{\isasymin}gvars{\isacharunderscore}conf{\isacharunderscore}stable\ {\isasymand} \ i\ V\ t\ {\isasymle}\ \isadigit{4} \ {\isasymand} \ {\isasymand} \ {\isasymalpha} = 4 - i\ V\ t\ ...... {\isacharbraceright}
}

\quad \quad \quad \quad \isacommand{FOR}\ {\isasymacute}i\ {\isacharcolon}{\isacharequal}\ {\isasymacute}i{\isacharparenleft}t\ {\isacharcolon}{\isacharequal}\ {\isadigit{0}}{\isacharparenright}{\isacharsemicolon}\ {\isasymacute}i\ t\ {\isacharless}\ {\isadigit{4}}{\isacharsemicolon}\ {\isasymacute}i\ {\isacharcolon}{\isacharequal}\ {\isasymacute}i{\isacharparenleft}t\ {\isacharcolon}{\isacharequal}\ {\isasymacute}i\ t\ {\isacharplus}\ {\isadigit{1}}{\isacharparenright}\ \isacommand{DO}

\quad \quad \quad \quad \specrg{
\isacommand{mergeblock{\isacharunderscore}loopinv}\ V3 \ t\ b\ {\isasymalpha}\ {\isasyminter}\ {\isasymlbrace} {\isasymalpha} {\isachargreater} 0 {\isasymrbrace}
}

\quad \quad \quad \quad \quad 
\specrg{\{V4\} \speccomment{{\isacharparenleft}V4 {\isasymin} mergeblock{\isacharunderscore}loopinv\ V3 \ t\ b\ {\isasymalpha}\ {\isasyminter}\ {\isasymlbrace} {\isasymalpha} {\isachargreater} 0 {\isasymrbrace} {\isacharparenright}}}

\quad \quad \quad \quad \quad {\isasymacute}bb\ {\isacharcolon}{\isacharequal}\ {\isasymacute}bb\ {\isacharparenleft}t\ {\isacharcolon}{\isacharequal}\ {\isacharparenleft}{\isasymacute}bn\ t\ div\ {\isadigit{4}}{\isacharparenright}\ {\isacharasterisk}\ {\isadigit{4}}\ {\isacharplus}\ {\isasymacute}i\ t{\isacharparenright}{\isacharsemicolon}{\isacharsemicolon}

\quad \quad \quad \quad \quad 
\specrg{\{V5\} \speccomment{{\isacharparenleft}V5 {\isasymequiv}\ V4{\isasymlparr}bb\ {\isacharcolon}{\isacharequal}\ {\isacharparenleft}bb\ V{\isacharparenright}\ {\isacharparenleft}t{\isacharcolon}{\isacharequal}{\isacharparenleft}bn\ V4\ t\ div\ {\isadigit{4}}{\isacharparenright}\ {\isacharasterisk}\ {\isadigit{4}}\ {\isacharplus}\ i\ V4\ t{\isacharparenright}{\isasymrparr} {\isacharparenright}}}

\quad \quad \quad \quad \quad {\isasymacute}mem{\isacharunderscore}pool{\isacharunderscore}info\ {\isacharcolon}{\isacharequal}\ set{\isacharunderscore}bit{\isacharunderscore}noexist\ {\isasymacute}mem{\isacharunderscore}pool{\isacharunderscore}info\ {\isacharparenleft}pool\ b{\isacharparenright}\ {\isacharparenleft}{\isasymacute}lvl\ t{\isacharparenright}\ {\isacharparenleft}{\isasymacute}bb\ t{\isacharparenright}{\isacharsemicolon}{\isacharsemicolon}

\quad \quad \quad \quad \quad 
\specrg{\{V6\} \speccomment{{\isacharparenleft}V6 {\isasymequiv}\ V5{\isasymlparr} mem{\isacharunderscore}pool{\isacharunderscore}info\ {\isacharcolon}{\isacharequal}\isanewline
\quad \quad \quad \quad \quad \quad \quad \quad \quad \quad \quad 
set{\isacharunderscore}bit{\isacharunderscore}noexist\ (mem{\isacharunderscore}pool{\isacharunderscore}info\ V5)\ (pool\ b)\ (lvl\ V5\ t)\ (bb\ V5\ t) {\isasymrparr} {\isacharparenright}}}

\quad \quad \quad \quad \quad {\isasymacute}block{\isacharunderscore}pt\ {\isacharcolon}{\isacharequal}\ {\isasymacute}block{\isacharunderscore}pt\ {\isacharparenleft}t\ {\isacharcolon}{\isacharequal}\ block{\isacharunderscore}ptr\ {\isacharparenleft}{\isasymacute}mem{\isacharunderscore}pool{\isacharunderscore}info\ {\isacharparenleft}pool\ b{\isacharparenright}{\isacharparenright}\ {\isacharparenleft}{\isasymacute}lsz\ t{\isacharparenright}\ {\isacharparenleft}{\isasymacute}bb\ t{\isacharparenright}{\isacharparenright}{\isacharsemicolon}{\isacharsemicolon}

\quad \quad \quad \quad \quad 
\specrg{\{V7\} \speccomment{{\isacharparenleft}V7 {\isasymequiv}\ V6{\isasymlparr}block{\isacharunderscore}pt\ {\isacharcolon}{\isacharequal}\ {\isacharparenleft}block{\isacharunderscore}pt\ V6{\isacharparenright}\isanewline
\quad \quad \quad \quad \quad \quad \quad \quad \quad \quad  {\isacharparenleft}t{\isacharcolon}{\isacharequal}block{\isacharunderscore}ptr\ {\isacharparenleft}mem{\isacharunderscore}pool{\isacharunderscore}info\ V6\ {\isacharparenleft}pool\ b{\isacharparenright}{\isacharparenright}\ {\isacharparenleft}lsz\ V6\ t{\isacharparenright}\ {\isacharparenleft}bb\ V6\ t{\isacharparenright}{\isacharparenright}{\isasymrparr} {\isacharparenright}}}

\quad \quad \quad \quad \quad \isacommand{IF}\ {\isasymacute}bn\ t\ {\isasymnoteq}\ {\isasymacute}bb\ t\ {\isasymand}\ block{\isacharunderscore}fits\ {\isacharparenleft}{\isasymacute}mem{\isacharunderscore}pool{\isacharunderscore}info\ {\isacharparenleft}pool\ b{\isacharparenright}{\isacharparenright}\ {\isacharparenleft}{\isasymacute}block{\isacharunderscore}pt\ t{\isacharparenright}\ {\isacharparenleft}{\isasymacute}lsz\ t{\isacharparenright}\ \isacommand{THEN}\isanewline
\quad \quad \quad \quad \quad \quad {\isasymacute}mem{\isacharunderscore}pool{\isacharunderscore}info\ {\isacharcolon}{\isacharequal}\ {\isasymacute}mem{\isacharunderscore}pool{\isacharunderscore}info\ {\isacharparenleft}{\isacharparenleft}pool\ b{\isacharparenright}\ {\isacharcolon}{\isacharequal}\ \isanewline
\quad \quad \quad \quad \quad \quad \quad \quad remove{\isacharunderscore}free{\isacharunderscore}list\ {\isacharparenleft}{\isasymacute}mem{\isacharunderscore}pool{\isacharunderscore}info\ {\isacharparenleft}pool\ b{\isacharparenright}{\isacharparenright}\ {\isacharparenleft}{\isasymacute}lvl\ t{\isacharparenright}\ {\isacharparenleft}{\isasymacute}block{\isacharunderscore}pt\ t{\isacharparenright}{\isacharparenright}\isanewline
\quad \quad \quad \quad \quad \isacommand{FI}\isanewline
\quad \quad \quad \quad \isacommand{ROF}{\isacharsemicolon}{\isacharsemicolon}

\quad \quad \quad \quad \specrg{
\isacommand{mergeblock{\isacharunderscore}loopinv}\ V3 \ t\ b\ {\isasymalpha}\ {\isasyminter}\ {\isasymlbrace} {\isasymalpha} = 0 {\isasymrbrace}
}

\quad \quad \quad \quad {\isasymacute}lvl\ {\isacharcolon}{\isacharequal}\ {\isasymacute}lvl\ {\isacharparenleft}t\ {\isacharcolon}{\isacharequal}\ {\isasymacute}lvl\ t\ {\isacharminus}\ {\isadigit{1}}{\isacharparenright}{\isacharsemicolon}{\isacharsemicolon}\isanewline
\quad \quad \quad \quad {\isasymacute}bn\ {\isacharcolon}{\isacharequal}\ {\isasymacute}bn\ {\isacharparenleft}t\ {\isacharcolon}{\isacharequal}\ {\isasymacute}bn\ t\ div\ {\isadigit{4}}{\isacharparenright}{\isacharsemicolon}{\isacharsemicolon}\isanewline
\quad \quad \quad \quad {\isasymacute}mem{\isacharunderscore}pool{\isacharunderscore}info\ {\isacharcolon}{\isacharequal}\ set{\isacharunderscore}bit{\isacharunderscore}freeing\ {\isasymacute}mem{\isacharunderscore}pool{\isacharunderscore}info\ {\isacharparenleft}pool\ b{\isacharparenright}\ {\isacharparenleft}{\isasymacute}lvl\ t{\isacharparenright}\ {\isacharparenleft}{\isasymacute}bn\ t{\isacharparenright}{\isacharsemicolon}{\isacharsemicolon}\isanewline
\quad \quad \quad \quad {\isasymacute}freeing{\isacharunderscore}node\ {\isacharcolon}{\isacharequal}\ {\isasymacute}freeing{\isacharunderscore}node\ {\isacharparenleft}t\ {\isacharcolon}{\isacharequal}\ Some\ {\isasymlparr}pool\ {\isacharequal}\ {\isacharparenleft}pool\ b{\isacharparenright}{\isacharcomma}\ level\ {\isacharequal}\ {\isacharparenleft}{\isasymacute}lvl\ t{\isacharparenright}{\isacharcomma}\ \isanewline
\quad \ \ \ \ \ \ \ \ \ \ \ \ \ \ \ \ \ \ \ \ block\ {\isacharequal}\ {\isacharparenleft}{\isasymacute}bn\ t{\isacharparenright}{\isacharcomma}\ 
data\ {\isacharequal}\ block{\isacharunderscore}ptr\ {\isacharparenleft}{\isasymacute}mem{\isacharunderscore}pool{\isacharunderscore}info\ {\isacharparenleft}pool\ b{\isacharparenright}{\isacharparenright}\ \isanewline
\quad \ \ \ \ \ \ \ \ \ \ \ \ \ \ \ \ \ \ \ \ \ \  {\isacharparenleft}{\isacharparenleft}{\isacharparenleft}ALIGN{\isadigit{4}}\ {\isacharparenleft}max{\isacharunderscore}sz\ {\isacharparenleft}{\isasymacute}mem{\isacharunderscore}pool{\isacharunderscore}info\ {\isacharparenleft}pool\ b{\isacharparenright}{\isacharparenright}{\isacharparenright}{\isacharparenright}\ div\ {\isacharparenleft}{\isadigit{4}}\ {\isacharcircum}\ {\isacharparenleft}{\isasymacute}lvl\ t{\isacharparenright}{\isacharparenright}{\isacharparenright}{\isacharparenright}\ 
{\isacharparenleft}{\isasymacute}bn\ t{\isacharparenright}\ {\isasymrparr}{\isacharparenright}\isanewline
\quad \quad \quad \isacommand{ELSE}

\quad \quad \quad \quad \specrg{\{V3\}\ {\isasyminter}\ {\isacharminus}\ {\isasymlbrace}NULL\ {\isacharless}\ {\isasymacute}lvl\ t\ {\isasymand}\ partner{\isacharunderscore}bits\ {\isacharparenleft}{\isasymacute}mem{\isacharunderscore}pool{\isacharunderscore}info\ {\isacharparenleft}pool\ b{\isacharparenright}{\isacharparenright}\ {\isacharparenleft}{\isasymacute}lvl\ t{\isacharparenright}\ {\isacharparenleft}{\isasymacute}bn\ t{\isacharparenright}{\isasymrbrace}}

\quad \quad \quad \quad \isacommand{IF}\ block{\isacharunderscore}fits\ {\isacharparenleft}{\isasymacute}mem{\isacharunderscore}pool{\isacharunderscore}info\ {\isacharparenleft}pool\ b{\isacharparenright}{\isacharparenright}\ {\isacharparenleft}{\isasymacute}blk\ t{\isacharparenright}\ {\isacharparenleft}{\isasymacute}lsz\ t{\isacharparenright}\ \isacommand{THEN}\isanewline
\quad \quad \quad \quad \quad {\isasymacute}mem{\isacharunderscore}pool{\isacharunderscore}info\ {\isacharcolon}{\isacharequal}\ {\isasymacute}mem{\isacharunderscore}pool{\isacharunderscore}info\ {\isacharparenleft}{\isacharparenleft}pool\ b{\isacharparenright}\ {\isacharcolon}{\isacharequal}\ \isanewline
\quad \quad \quad \quad \quad \quad append{\isacharunderscore}free{\isacharunderscore}list\ {\isacharparenleft}{\isasymacute}mem{\isacharunderscore}pool{\isacharunderscore}info\ {\isacharparenleft}pool\ b{\isacharparenright}{\isacharparenright}\ {\isacharparenleft}{\isasymacute}lvl\ t{\isacharparenright}\ {\isacharparenleft}{\isasymacute}blk\ t{\isacharparenright}\ {\isacharparenright}\isanewline
\quad \quad \quad \quad \isacommand{FI}{\isacharsemicolon}{\isacharsemicolon}\isanewline
\quad \quad \quad \quad {\isasymacute}free{\isacharunderscore}block{\isacharunderscore}r\ {\isacharcolon}{\isacharequal}\ {\isasymacute}free{\isacharunderscore}block{\isacharunderscore}r\ {\isacharparenleft}t\ {\isacharcolon}{\isacharequal}\ False{\isacharparenright}\isanewline
\quad \quad \quad \isacommand{FI}\isanewline
\quad \quad \isacommand{END} \ \speccomment{(* END\ of\ ATOM *)} \isanewline
\quad \isacommand{OD} \ \speccomment{(* END\ of\ WHILE \ free\_block\_r \ DO  *)} 

\quad \specrg{\isacommand{mp{\isacharunderscore}free{\isacharunderscore}precond{\isadigit{9}}}\ t\ b\ {\isasymequiv}\ Mem{\isacharunderscore}pool{\isacharunderscore}free{\isacharunderscore}pre\ t\ {\isasyminter}\ {\isasymlbrace} g {\isasymrbrace}}

\quad \speccomment{{\isacharparenleft}{\isacharasterisk}\ {\isacharequal} {\isacharequal} {\isacharequal} end of {\isacharcolon}\ free{\isacharunderscore}block{\isacharparenleft}pool{\isacharcomma}\ level{\isacharcomma}\ lsizes{\isacharcomma}\ block{\isacharparenright}{\isacharsemicolon}\ {\isacharequal} {\isacharequal} {\isacharequal}{\isacharasterisk}{\isacharparenright}}

\quad t\ \isactrlenum \ \isacommand{ATOMIC}

\quad \specrg{\{Va\} {\isacharparenleft}Va {\isasymin} mp{\isacharunderscore}free{\isacharunderscore}precond{\isadigit{9}} \ t\ b\ {\isasyminter} {\isasymlbrace}{\isasymacute}cur\ {\isacharequal}\ Some\ t{\isasymrbrace}{\isacharparenright}}

\quad  \specrg{
stm{\isadigit{9}}{\isacharunderscore}loopinv\ Va\ t\ b\ {\isasymalpha}\ {\isasymequiv}\isanewline
\quad \quad {\isacharbraceleft}V{\isachardot}\ inv\ V\ {\isasymand}\ cur\ V\ {\isacharequal}\ cur\ Va\ {\isasymand}\ tick\ V\ {\isacharequal}\ tick\ Va\ {\isasymand}\ {\isacharparenleft}V{\isacharcomma}Va{\isacharparenright}{\isasymin}gvars{\isacharunderscore}conf{\isacharunderscore}stable\ \isanewline
\quad \quad \quad {\isasymand}\ freeing{\isacharunderscore}node\ V\ t\ {\isacharequal}\ freeing{\isacharunderscore}node\ Va\ t\ {\isasymand}\ allocating{\isacharunderscore}node\ V\ t\ {\isacharequal}\ allocating{\isacharunderscore}node\ Va\ t\isanewline
\quad \quad \quad {\isasymand}\ {\isacharparenleft}{\isasymforall}p{\isachardot}\ levels\ {\isacharparenleft}mem{\isacharunderscore}pool{\isacharunderscore}info\ V\ p{\isacharparenright}\ {\isacharequal}\ levels\ {\isacharparenleft}mem{\isacharunderscore}pool{\isacharunderscore}info\ Va\ p{\isacharparenright}{\isacharparenright}\isanewline
\quad \quad \quad {\isasymand}\ {\isacharparenleft}{\isasymforall}p{\isachardot}\ p\ {\isasymnoteq}\ pool\ b\ {\isasymlongrightarrow}\ mem{\isacharunderscore}pool{\isacharunderscore}info\ V\ p\ {\isacharequal}\ mem{\isacharunderscore}pool{\isacharunderscore}info\ Va\ p{\isacharparenright}\isanewline
\quad \quad \quad {\isasymand}\ {\isacharparenleft}{\isasymforall}t{\isacharprime}{\isachardot}\ t{\isacharprime}\ {\isasymnoteq}\ t\ {\isasymlongrightarrow}\ lvars{\isacharunderscore}nochange\ t{\isacharprime}\ V\ Va{\isacharparenright}\isanewline
\quad \quad \quad {\isasymand} {\isasymalpha}\ {\isacharequal} length\ {\isacharparenleft}wait{\isacharunderscore}q\ {\isacharparenleft}{\isasymacute}mem{\isacharunderscore}pool{\isacharunderscore}info\ {\isacharparenleft}pool\ b{\isacharparenright}{\isacharparenright}{\isacharparenright}
{\isacharbraceright}
}

\quad \quad \isacommand{WHILE}\ wait{\isacharunderscore}q\ {\isacharparenleft}{\isasymacute}mem{\isacharunderscore}pool{\isacharunderscore}info\ {\isacharparenleft}pool\ b{\isacharparenright}{\isacharparenright}\ {\isasymnoteq}\ {\isacharbrackleft}{\isacharbrackright}\ \isacommand{DO}\

\quad \quad \quad \specrg{
stm{\isadigit{9}}{\isacharunderscore}loopinv\ Va\ t\ b\ {\isasymalpha}\ {\isasyminter}\ {\isasymlbrace} {\isasymalpha} {\isachargreater} 0 {\isasymrbrace}
}

\quad \quad \quad {\isasymacute}th\ {\isacharcolon}{\isacharequal}\ {\isasymacute}th\ {\isacharparenleft}t\ {\isacharcolon}{\isacharequal}\ hd\ {\isacharparenleft}wait{\isacharunderscore}q\ {\isacharparenleft}{\isasymacute}mem{\isacharunderscore}pool{\isacharunderscore}info\ {\isacharparenleft}pool\ b{\isacharparenright}{\isacharparenright}{\isacharparenright}{\isacharparenright}{\isacharsemicolon}{\isacharsemicolon}

\quad \quad \quad \speccomment{{\isacharparenleft}{\isacharasterisk}\ {\isacharunderscore}unpend{\isacharunderscore}thread{\isacharparenleft}th{\isacharparenright}{\isacharsemicolon}\ {\isacharasterisk}{\isacharparenright}}

\quad \quad \quad {\isasymacute}mem{\isacharunderscore}pool{\isacharunderscore}info\ {\isacharcolon}{\isacharequal}\ {\isasymacute}mem{\isacharunderscore}pool{\isacharunderscore}info\ {\isacharparenleft}pool\ b\ {\isacharcolon}{\isacharequal}\ {\isasymacute}mem{\isacharunderscore}pool{\isacharunderscore}info\ {\isacharparenleft}pool\ b{\isacharparenright}

\quad \quad \quad \quad \quad {\isasymlparr}wait{\isacharunderscore}q\ {\isacharcolon}{\isacharequal}\ tl\ {\isacharparenleft}wait{\isacharunderscore}q\ {\isacharparenleft}{\isasymacute}mem{\isacharunderscore}pool{\isacharunderscore}info\ {\isacharparenleft}pool\ b{\isacharparenright}{\isacharparenright}{\isacharparenright}{\isasymrparr}{\isacharparenright}{\isacharsemicolon}{\isacharsemicolon}

\quad \quad \quad \speccomment{{\isacharparenleft}{\isacharasterisk}\ {\isacharunderscore}ready{\isacharunderscore}thread{\isacharparenleft}th{\isacharparenright}{\isacharsemicolon}\ {\isacharasterisk}{\isacharparenright}}

\quad \quad \quad {\isasymacute}thd{\isacharunderscore}state\ {\isacharcolon}{\isacharequal}\ {\isasymacute}thd{\isacharunderscore}state\ {\isacharparenleft}{\isasymacute}th\ t\ {\isacharcolon}{\isacharequal}\ READY{\isacharparenright}{\isacharsemicolon}{\isacharsemicolon}

\quad \quad \quad {\isasymacute}need{\isacharunderscore}resched\ {\isacharcolon}{\isacharequal}\ {\isasymacute}need{\isacharunderscore}resched{\isacharparenleft}t\ {\isacharcolon}{\isacharequal}\ True{\isacharparenright}

\quad \quad  \isacommand{OD}{\isacharsemicolon}{\isacharsemicolon}

\quad \quad \specrg{
stm{\isadigit{9}}{\isacharunderscore}loopinv\ Va\ t\ b\ {\isasymalpha}\ {\isasyminter}\ {\isasymlbrace} {\isasymalpha} = 0 {\isasymrbrace}
}

\quad \quad \isacommand{IF}\ {\isasymacute}need{\isacharunderscore}resched\ t\ \isacommand{THEN}\isanewline
\quad \quad \quad reschedule \quad \speccomment{(* \_reschedule\_threads(key) *)}\isanewline
\quad \quad \isacommand{FI}\isanewline
\quad \isacommand{END} \ \speccomment{(* END\ of\ ATOM *)}\isanewline
\isacommand{END}{\isachardoublequoteclose}

\specrg{\isacommand{Mem{\isacharunderscore}pool{\isacharunderscore}free{\isacharunderscore}post}\ t\ {\isasymequiv}\ {\isasymlbrace} {\isasymacute}inv\ {\isasymand}\ {\isasymacute}allocating{\isacharunderscore}node\ t\ {\isacharequal}\ None\ {\isasymand}\ {\isasymacute}freeing{\isacharunderscore}node\ t\ {\isacharequal}\ None{\isasymrbrace}}

\end{isabellec}

%
\end{document}